\documentclass[sn-mathphys]{sn-jnl}
\jyear{2021}

\theoremstyle{thmstyleone}%
\newtheorem{theorem}{Theorem}
\newtheorem{proposition}[theorem]{Proposition}%
\newtheorem{lemma}[theorem]{Lemma}
\newtheorem{corollary}[theorem]{Corollary}

\theoremstyle{thmstyletwo}%
\newtheorem{example}{Example}%
\newtheorem*{remark}{Remark}%

\theoremstyle{thmstylethree}%
\newtheorem{definition}{Definition}%
\newtheorem{conjecture}[theorem]{Conjecture} 

\raggedbottom

\usepackage{hyperref}
\hypersetup{colorlinks=true}
\usepackage{caption}
\usepackage{subcaption}
\usepackage[normalem]{ulem}



\usepackage{amsmath,amssymb,amsthm,bm}
\usepackage{thmtools}
\usepackage{mathtools}
\usepackage{enumerate}

\newcommand{\com}[1]{} 

\newcommand{\RR}{\mathbb{R}}
\newcommand{\PP}{\mathbb{P}}
\newcommand{\EE}{\mathbb{E}}

\DeclarePairedDelimiter\abs{\lvert}{\rvert}

\newcommand{\lsa}{\text{LSA}}
\newcommand{\bt}{\text{BT}}
\newcommand{\udp}[1]{\overset{#1}{\longleftrightarrow}}

\usepackage{xcolor}

\definecolor{changecolor}{RGB}{192,64,0}

\begin{document}
\title[Identifiability of phylogenetic networks from average distances]{Identifiability of local and global features of phylogenetic networks from average distances}

\author*[1]{\fnm{Jingcheng} \sur{Xu}}

\author[1,2]{\fnm{C\'ecile} \sur{An\'e}}

\affil*[1]{\orgdiv{Department of Statistics}, \orgname{University of
    Wisconsin - Madison}, \orgaddress{\city{Madison},
    \postcode{53706}, \state{WI}, \country{United States}}}

\affil[2]{\orgdiv{Department of Botany}, \orgname{University of
    Wisconsin - Madison}, \orgaddress{\city{Madison},
    \postcode{53706}, \state{WI}, \country{United States}}}

\abstract{ 
Phylogenetic networks extend phylogenetic trees to 
model non-vertical inheritance, by which a lineage inherits
material from multiple parents. The computational complexity
of estimating phylogenetic networks from genome-wide data
with likelihood-based methods limits the size of networks
that can be handled. Methods based on pairwise distances
could offer faster alternatives.
We study here the information that average pairwise distances
contain on the underlying phylogenetic network, by characterizing
local and global features that can or cannot be identified.
For general networks, we clarify that the root
and edge lengths adjacent to reticulations are not identifiable,
and then focus on the class of  zipped-up semidirected networks.
We provide a criterion to swap subgraphs locally, such
as 3-cycles, resulting in indistinguishable networks.
We propose the ``distance split tree", which can be constructed
from pairwise distances, and prove that it is a refinement
of the network's tree of blobs, capturing the tree-like features
of the network.
For level-1 networks, this distance split tree is equal to the
tree of blobs refined to separate polytomies from blobs,
and we prove that the mixed representation of the network
is identifiable. The information loss is localized around 4-cycles,
for which the placement of the reticulation is unidentifiable.
The mixed representation combines split edges for 4-cycles,
regular tree and hybrid edges from the semidirected
network, and edge parameters that encode all information
identifiable from average pairwise distances.
}

\keywords{
semidirected network, species network, tree of blobs, distance split tree, neighbor joining, neighbor-net
}

\maketitle

\section{Introduction}
Phylogenetic trees represent the past history of a set of
organisms and are central to the field of evolutionary biology.
Phylogenetic networks offer a convenient framework to
extend phylogenetic trees, in which extra edges explicitly represent
the various biological processes by which an ancestral organism
or population inherits genetic material from several parents.
With the advent of genome-wide data that can be collected across
many organisms, there is robust evidence for hybridization and
gene flow in many groups, and rooted phylogenetic networks are now
widely used \cite{2018folk-hybridization,2020blair-review}.

Inferring phylogenetic networks is hard, however.
Computing times are prohibitive with more
than a handful of taxa for likelihood-based approaches,
such as full likelihood or Bayesian methods in
PhyloNet or SnappNet
\cite{Sol_s_Lemus_2016_infer,cao2019-phylonet-practical,rabier2021-snappnet}.
Methods based on pairwise distances have the potential to be
much faster \cite{2004bryant-neighbonet}.
For inferring phylogenetic trees, Neighbor-Joining
and other distance-based methods
\cite{1987satounei-NJ,2004despergascuel-fastme}
are orders of magnitude faster than likelihood-based methods
and can handle data with many more taxa,
even if their speed might be at the cost of accuracy
\cite[but see][]{2017rusinko}.

We study here the information carried by average pairwise
distances about the underlying phylogenetic network.
In other words, we ask whether phylogenetic networks are
identifiable and what can be known about the network, theoretically,
from distances between pairs of taxa, averaged across the trees
displayed in the network. Trees and their branch lengths are identifiable
from distances \cite{charles03_phylog}. 
Trees form the simplest class of networks.
How much sparseness must be imposed on networks to
maintain identifiability?

Much previous work has focused on using shortest distances
\cite{Bordewich_2018,2017chang-shortestdistance-ultrametric},
sets or multisets of distances
\cite{2016bordewichsemple-intertaxadistances,bordewich16_algor_recon_ultram_tree_child,2018bordewich-treechild-multisetdistances}
or the logdet distance \cite{allman2022-identifiability-logdet-net}.
Average distances were used for network inference
but without theoretical guarantees \cite{willems14_new_effic_algor_infer_explic}.
Willson studied the identifiability of parameters from average distances
when the network topology is known \cite{willson12_tree_averag_distan_certain_phylog}.
Other previous work has focused on the full identifiability
of the network, thereby imposing strong constraints,
such as a single reticulation
\cite{willson13_recon_certain_phylog_networ_from,francis-steel-2015}.
To obtain general results, we focus on the identifiability
(or lack thereof) of local features and of global features,
without necessarily asking for the full identifiability of the network.
We also study the identifiability of branch lengths and inheritance values,
often understudied in previous work.

We highlight here some of our results.
Notably, we show that the root of the network is not generally identifiable
from average distances. This is well-known for trees but has not been
clarified by prior work on networks, which assumed data
available at the root or a known outgroup
\cite{willson13_recon_certain_phylog_networ_from,2018bordewich-treechild-multisetdistances}
or the network being ultrametric
\cite[e.g.][]{2005chan-ultrametric-galled-distancematrix,bordewich16_algor_recon_ultram_tree_child,Bordewich_2018,allman2022-identifiability-logdet-net},
or equal edge lengths (without any degree-2 nodes
except perhaps for the root) \cite[e.g.][]{2016bordewichsemple-intertaxadistances}.
Therefore, we focus our study on semidirected networks, in which the root
is suppressed and edges are undirected except for hybrid edges
\cite{Sol_s_Lemus_2016_infer}.

Without any restriction on the network complexity,
we prove that we may swap a local subgraph with another
without altering average distances, provided that the swapped subgraphs
have the same pairwise inheritance and distance matrices at their boundary.
We apply this swap result to subgraphs with a small boundary, showing
that degree-2 blobs, degree-3 blobs and 3-cycles are not identifiable;
and showing that level-2 networks are not identifiable from
average distances, not even generically.
This result provides a simple explanation for the reticulate exceptions
that are permitted in a network whose average distances fit on a tree
in \cite{francis-steel-2015}.
We anticipate that the application of our swap lemma
will lead to other applications, using larger subgraphs,
finding local structures that prevent the identifiability of the network
from average distances.

For the global structure of the network, we prove that a
refinement of the network's tree of blobs is identifiable (under mild
assumptions) which we call the ``distance split tree''.
Informally, any cycle in the network is condensed into a single
node of the tree of blobs, which encodes the tree-like parts of the
network. While the tree of blobs provides limited
knowledge about the network, it could be leveraged to develop
divide-and-conquer approaches. Namely, once a blob is identified
from the tree of blobs using average distances, accurate estimation
methods could be applied to a subset of taxa that cover a given blob,
that may be computationally feasible on the subsample.
Combining different types of methods to estimate
different features of the networks (such as the global tree of blobs
and small subnetworks) may lead to efficient strategies for accurate
and computationally efficient network estimation methods.

Beyond the topology, we prove that only one composite parameter
can be identified from average distances, out of the lengths of
all the parent edges and the child edge adjacent to a hybrid node.
This means that average distances lose extra ``degrees of freedom"
compared to information from displayed trees, for example,
because ``sliding" a reticulation along two parent edges affects edge lengths
in displayed trees \cite{Pardi_2015} and
affects distance \emph{sets} \cite{bordewich16_algor_recon_ultram_tree_child},
but does not affect average distances.
We show that the ``zipped up" version of a network, in which
all hybrid edges have length $0$, does not depend
on the order in which reticulations are zipped up.
Prior work has already constrained
hybrid edges to have length $0$, but arguing that this assumption
is biologically motivated \cite{willson13_recon_certain_phylog_networ_from}.
The zipped up network can be thought of as a canonical version
to be inferred by estimation methods. Such methods
will need to communicate to users that hybrid edge lengths are not
\emph{assumed} to be $0$ ---because many biological
scenarios can lead to positive lengths on hybrid edges,
but are instead \emph{constrained} to be $0$
(or solely influenced by a prior distribution) because they lack
identifiability from average distances.
Future work could consider interactive visualizations that
allow users to zip and slide each reticulation,
to explore the full equivalence class of networks represented by
their zipped-up version.

Finally, we study level-1 networks, in which distinct cycles
don't share nodes and each blob is a single cycle. The topology of
level-1 networks has been shown to be identifiable
(up to some aspects of small cycles)
from quartet concordance factors \cite{2019banos},
logdet distances \cite{allman2022-identifiability-logdet-net}
or some Markov models \cite{gross20_distin_level_phylog_networ_basis}.
We show here that, if internal tree edges have positive lengths
(which can be achieved by creating potential polytomies),
level-1 networks are identifiable from average distances,
except for local features around small cycles.
Namely, neither the direction of hybrid edges within 4-cycles, nor the
parameters (length and inheritance) of edges in and adjacent to
4-cycles are identifiable.
We introduce the ``mixed representation" of a level-1 network
in which 4-cycles are represented by split subgraphs,
whose parallel edges are split edges, with identical
edge lengths and no inheritance values.
These mixed networks formalize the class of network topologies
used in \cite{2019banos,Allman_2019} and in
\cite{allman2022-identifiability-logdet-net}.
We show that the mixed representation of a level-1 network
is identifiable from average distances, including its edge parameters.
Here again, future work on interactive visualizations could let
users re-assign a hybrid node within a 4-cycle, to help explore the
class of phylogenetic networks with a given mixed representation.

We conjecture that the tree of blobs is identifiable from many
other data types, such as distance sets (multiple distances for
each pair of taxa), the logdet distance and other distances.
It would be interesting to characterize the general properties
that a distance function needs to satisfy, for the distance split tree
derived from this distance to identify a relevant refinement of
the network's tree of blobs.
Given the complexity of inferring phylogenetic networks,
we hope that our study of global and local features of the network
will spur the development of new divide-and-conquer approaches.

Notations, main results and implications
are presented in section~\ref{sec:notations-main}.
The proofs and more formal definitions
are presented in section~\ref{sec:nonid-struct} for
non-identifiable features,
section~\ref{sec:ident-blob-tree} for the identifiability
of the tree of blobs, section~\ref{sec:k-sunlet} for the study of sunlets, and
section~\ref{sec:level-1} for 
level-1 networks.
More technical proofs are in the appendix.

\section{Notation and main results}
\label{sec:notations-main}

\subsection{Phylogenetic networks}
\label{sec:network-def}

We use standard definitions for graphs and phylogenetic networks
as in \cite{steel16_phylogeny}, with slight modifications,
and notations mostly following \cite{2019banos}.

\begin{definition}[rooted network]
  \label{def:rooted-network}
  A \emph{topological rooted phylogenetic network} (``rooted network''
  for short) on taxon set $X$ is a tuple $(N^+, f)$.  $N^+$ is a
  rooted directed acyclic graph with vertices
  $V = \{r\} \sqcup V_L \sqcup V_H \sqcup V_T$ and $f: X \to V_L$ a
  labelling function, where
  \begin{itemize}
  \item $r = \rho(N^+)$ is the root, the unique vertex in $N^+$ with
    in-degree $0$;
  \item $V_L$ are the leaves (or ``tips''), the vertices with
    out-degree $0$. We also require that leaves all have in-degree
    $1$;
  \item $V_T$ are the tree nodes, the vertices with in-degree $1$ that
    are not leaves;
  \item $V_H$ are the hybrid nodes, the vertices with in-degree larger
    than $1$;
  \item $f$ is a bijection between $X$ and $V_L$.
  \end{itemize}
  An edge $(a, b)$ is a \emph{tree edge} if its child $b$ is a tree
  node or a leaf node, and a \emph{hybrid edge} otherwise.  We denote
  the set of tree edges by $E_T$, and the set of hybrid edges by
  $E_H$.  We will also write $ab$ for the edge $(a, b)$ when no
  confusion is likely.
  A \emph{polytomy} is a non-root node of degree $4$ or higher, or
  the root $r$ if $r$ is of degree $3$ or higher.
  An \emph{internal} edge is an edge that is not incident to a
  leaf.
  A \emph{partner} edge of a hybrid edge $e$ is a hybrid edge $\tilde e \neq e$
  having the same child as $e$.
\end{definition}

\noindent
Definition~\ref{def:rooted-network} differs from \cite{steel16_phylogeny} in
that we allow for degree-$2$ nodes in a network, and also
require the leaves to have in-degree exactly $1$.  The reason for this
requirement is technical: when a leaf is incident to a
pendant edge, it forms a standalone ``blob'', which is
defined later.  When no confusion is likely, we refer to the
rooted network $(N^+, f)$ as $N^+$. Note that parallel edges are allowed.

For two nodes $a, b$ in a rooted network $N^+$, we write $a \leq b$
and say that $a$ is \emph{above} $b$ if
there is a directed path from $a$ to $b$. We write $a < b$ if
$a \leq b$ and $a \neq b$.
For a set of nodes $W$ in a rooted network $N^+$,
let $D$ be the set of nodes that lie on all paths from the root to
the elements of $W$.
The greatest element of $D$ (i.e.\ the node $s \in D$
such that $s \geq t$ for all $t \in D$) is called the \emph{lowest stable
ancestor} of $W$, or $\lsa(W)$ \cite[p.263]{steel16_phylogeny}.

As in the case with phylogenetic trees, we can unroot a rooted
phylogenetic network to obtain a \emph{semidirected phylogenetic
  network}, or ``semidirected network'' for short (see Fig.~\ref{fig:lsa_semidirected_nets}).

\begin{figure}[h]
  \centering
  \includegraphics[scale=1.3]{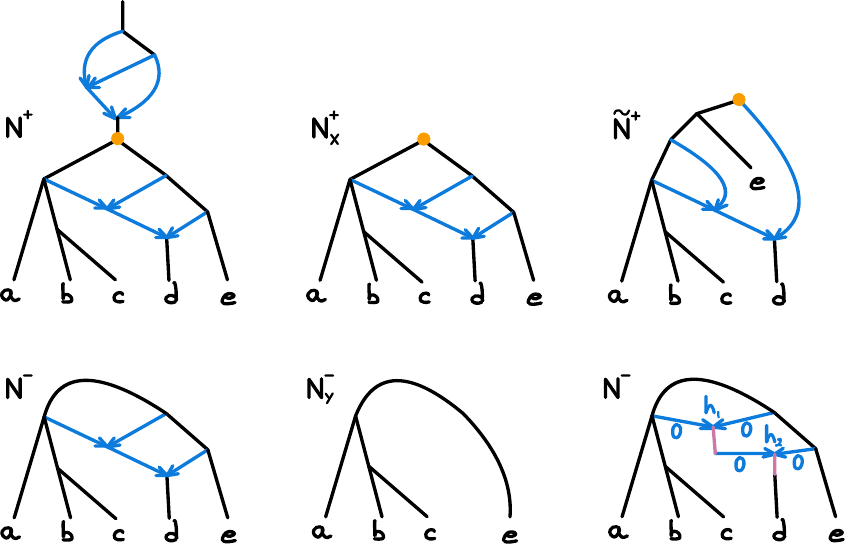}
  \caption{Example rooted network $N^+$ on $X=\{a,b,c,d,e\}$;
  LSA network $N^+_X$; semidirected network $N^-$,
  subnetwork $N^-_Y$ on $Y=\{a,b,c,e\}$; network $\tilde N^+$
  obtained by rerooting $N^-$ (on one of the hybrid edges);
  and a display of $N^-$ illustrating the modification of edge lengths
  to zip-up $N^-$. Hybrid edges are in blue with arrows.
  Lowest common ancestor nodes are large (orange) dots.
  }
  \label{fig:lsa_semidirected_nets}
\end{figure}

\begin{definition}[semidirected network]
  \label{def:sdn}

  A \emph{semidirected graph} $G^- = (V, E)$ is a tuple where $V$ is
  the set of nodes, and $E = E_D \sqcup E_U$ with a set $E_D$
  of \emph{directed edges} and a set $E_U$ of \emph{undirected
    edges}.  $E_D$ consists of ordered pairs $(a, b)$ where
  $a, b \in V$. In contrast, $E_U$ consists of unordered pairs
  $\{a, b\}$, such that if $\{a, b\} \in E_U$, then
  $(a, b)\not\in E_D$, i.e.\ an edge cannot be both directed and
  undirected.

  Let $(N^+, f)$ be a rooted network on $X$. The \emph{topological
    semidirected phylogenetic network induced from $(N^+, f)$} is a
  tuple $(N^-, f)$, where $N^-$ is the semidirected graph obtained by:
  \begin{enumerate}
  \item removing all the edges and nodes above $\lsa(X)$;
  \item undirecting all tree edges $e \in E_T$, but keeping the
    direction of hybrid edges;
  \item suppressing $s = \lsa(X)$ if it has degree $2$: if $s$ is
    incident to two tree edges, then remove $s$ and replace the two
    edges with a single tree edge; if $s$ is incident to one tree edge
    and one hybrid edge, then remove $s$, and replace the two edges by
    a hybrid edge with the same direction as the original hybrid edge.
    (Note that $s$ may not be incident to two hybrid edges if it has
    degree 2 by Lemma 1 in \cite{2019banos}).
  \end{enumerate}
  For a semidirected graph $M^-$ with vertex set $V$ and labelling function
  $g: X \to V$,
  $(M^-, g)$ is a \emph{topological semidirected phylogenetic network}
  if it is the semidirected network induced from some rooted network.
\end{definition}

\begin{remark}
  An alternate definition may consider skipping step 1, that is,
  retain nodes and edges above the lowest stable ancestor,
  for a more general class of semi-directed networks.  We consider
  step 1 because the subgraph above the LSA is not identifiable
  from 
  pairwise distances. Other authors make assumptions that are similar
  to performing step 1, such as assuming the network is ``proper"
  (every cut-edge and cut-vertex induces a non-trivial split of $X$
  \cite{francis-moulton-2018,
    fischer18_unroot_non_binar_tree_based_phylog_networ}, or
  ``recoverable", i.e.\ $\lsa(N)=\rho(N)$
  \cite{huber14_how_much_infor_is_needed})
\end{remark}

  For a semidirected network $N^-$ induced from $N^+$, the
  sets $V_L$, $V_H$ and $V_T$ are still well defined:
  $V_L(N^-)=V_L(N^+)$ is the set of nodes with degree 1, thanks to our
  requirement that leaves must be of degree 1 in a rooted network, and
  because a root of degree-1 would be above the $\lsa(X)$ in the
  rooted network. Hybrid nodes $V_H(N^-)=V_H(N^+)$ remain well-defined in a
  semidirected network, because hybrid edges are directed and point to
  hybrid nodes.  $V_T(N^-)$ is the set of all the other nodes, and may
  include the original root.
  The notion of child (node or edge) is also well defined for hybrid nodes
  in semidirected networks. Indeed, the child edges of a hybrid
  node are all the incident tree edges and outgoing hybrid edges.
  Consequently, the notion of \emph{tree-child}
  network \cite{steel16_phylogeny} also carries over.

For a rooted network $N^+$ on $X$, the \emph{LSA  network} $N^+_X$ of $N$ is the rooted network
obtained from $N^+$ by removing everything above $\lsa(X)$ in $N$ \cite{2019banos}.
If $N$ has the property that $\rho(N) = \lsa(X)$, then we call $N$ an \emph{LSA network}.
One immediate consequence of these definitions is that the semidirected
network induced from $N^+$ and $N^+_X$ are the same.  Furthermore,
every semidirected network can be induced from an LSA network.

The \emph{unrooted} graph $U(N)$ induced from a directed or semidirected
graph $N$ is the undirected graph obtained from $N$ by undirecting all edges in $N$.
Because rooted networks are DAGs, there cannot be directed cycles in
rooted or semidirected networks.
A \emph{cycle} in a rooted or semidirected network $N$ is defined to be a
subgraph $C$ of $N$, such that $U(C)$ is a cycle.

One may also consider \emph{rerooting} a semidirected network $N^-$: either
at a node or on an edge
\cite{gambette2012_quartets_unrooted}. Specifically,
\emph{rerooting at node $s$} refers to designating a node $s$ in
$N^-$ as root and directing all undirected (tree) edges away from $s$,
if this leads to a valid rooted network.
\emph{Rerooting on edge $uv$} refers to adding a new node $s$,
replacing $uv$ by two edges $us$ and $sv$, and finally rerooting at
node $s$.  It follows from Definition~\ref{def:sdn} for semidirected
network $N^-$ that there exists either a node $u$ or an edge $e$ such
that rerooting at $u$ or rerooting on $e$ gives an LSA network $N$
which induces $N^-$: we can reroot at $\lsa(X)$ if it is not
suppressed, or otherwise reroot at the edge $e$ where $\lsa(X)$ is
suppressed.
Note that while there is always a rerooting of $N^-$
that gives a rooted LSA network, not all rerootings give an LSA
network.

Because this work focuses on semidirected networks, in the
later sections for notational convenience we will usually denote
a semidirected network without the superscript, i.e.\ $N$ instead of
$N^-$, and use $N^+$ for an LSA network that induces $N$, which is
obtained from rerooting at a node or on an edge.

A rooted network is \emph{binary} if its root has degree $2$ and all
the other nodes, except for leaves, have degree $3$.  A semidirected
network is \emph{binary} if all its nodes, except for leaves, have
degree $3$.  The semidirected network induced by a binary rooted
network $N^+$ is binary.  On topological phylogenetic networks, we can
further assign edge lengths and hybridization parameters, also called
inheritance probabilities, to obtain \emph{metric phylogenetic
  networks}.

\begin{definition}[metric]\label{def:net-metric}
  A metric on a rooted or semidirected network $N$ is a pair of
  functions $(\ell, \gamma)$, with $\ell: E \to \RR_{\geq 0}$
  assigning lengths to edges, and $\gamma: E_H \to (0, 1)$ assigning
  hybridization parameters to hybrid edges.  The hybridization
  parameter $\gamma(e)$ for a hybrid edge $e$ represents the
  proportion of genetic material that the child inherits through the
  edge.  As a result, we require that for a hybrid node $v$,
  $\sum_{e \in E_H(v)}\gamma(e) = 1$, where $E_H(v)$ denotes the set
  of incoming hybrid edges for $v$.  We define $\gamma(e) = 1$ for any
  tree edge $e$, to extend the function $\gamma: E \to [0, 1]$ to all
  edges of $N$.  A rooted/semidirected network with a metric is called
  a \emph{metric rooted/semidirected network}.
\end{definition}

In a metric semidirected network, when a node is suppressed (see
step~3 in Def.~\ref{def:sdn}), the length of the new edge is the sum
of the original two edges. The hybridization parameter is
unchanged for hybrid edges.

Two metric and/or semidirected networks are isomorphic if the
(semi)directed graphs are isomorphic with an isomorphism that preserves the labelling and
the metric.  We regard isomorphic networks as identical, as we only
identify networks and their properties up to isomorphism.

\begin{definition}[blob, level, tree of blobs]\label{def:blob}
  A \emph{blob} $B$ in a rooted or semidirected network $(N, f)$ is a subgraph of $N$ such
  that $U(B)$ is a $2$-edge-connected component of $U(N)$.
  A blob is \emph{trivial} if it has a single node.
  The \emph{edge-level} (or simply \emph{level})
  of a blob $B$ is the number of edges in $B$ one needs
  to remove in order to obtain a tree (i.e. $\lvert E_B \rvert -
  \lvert V_B \rvert + 1$, where
  $E_B, V_B$ are the edge set and node set of the blob $B$).  The
  \emph{level} of a network is the maximum level of all its blobs.
  The \emph{tree of blobs} $\bt(N)$ of a network $N$ is an undirected graph
  where each vertex is a blob of $N$, and where two vertices $B_1$ and
  $B_2$ are adjacent if there is an edge $b_1b_2$ or
  $b_2b_1$ in $N$ such that $b_1 \in B_1$ and $b_2 \in B_2$.
  The \emph{degree of a blob} is the degree of the corresponding vertex
  in the tree of blobs.
\end{definition}

\begin{remark}
If $N^+$ is an LSA network and $N^-$ is induced from it, then
$N^+$  and $N^-$  have the same blobs and the same tree of blobs,
because they have the same undirected graphs.
\end{remark}

Recall that a graph is \emph{$2$-edge-connected} if the removal of one edge
does not disconnect the graph. A \emph{$2$-edge-connected component} is a
maximal $2$-edge-connected subgraph.
Our definition of level follows \cite{gambette2012_quartets_unrooted} and
is nonstandard in using $2$-edge-connected
components rather than biconnected components.
A graph is \emph{biconnected} if the removal of one vertex does
not disconnect the graph. A \emph{biconnected component} of a graph,
or \emph{block}, is a maximal biconnected subgraph.
Any block of $3$ or more nodes
is $2$-edge-connected,
so each non-trivial block maps to a single blob
and each blob may be formed by one or more block(s).
Therefore, the traditional level based on biconnected components
is lower than or equal to the level used here.
However, the two definitions agree on binary networks.
For binary networks, the level of a blob $B$ is the same as the number
of hybrid nodes in $B$ \cite{gambette2012_quartets_unrooted}.
If hybrid nodes may have more than two parents, the level
of a blob could be greater than its number of hybrid nodes.

  The ``tree of blobs'' was first defined by
  \cite{gusfield07_decom_theor_phylog_networ_incom_charac},
  using blocks and after modifying the network with edges to
  separate overlapping blocks. It is easy to verify that non-trivial blocks and blobs
  are identical after these modifications.
  Despite the similar name and construction,
  the tree of blobs is different from the ``blob tree" defined in
  \cite{murakami19_recon_tree_child_networ_from}.

\com{ 
It would be interesting to know if the traditional level is the minimum level
of all ``resolutions" of a semidirected network, whereas our level
is the maximum.
And then note that each resolution can be made to have identical
average distances (when the added edges are given 0).
}

Unlike blocks,
blobs partition the nodes in $N$
and provide a convenient mapping of edges from the tree of blobs
to the network, as we will show later.
Fig.~\ref{fig:level-example} (left) shows a non-binary network with one blob
of level $2$, 
but with two level-$1$ blocks. There are $3$ ways to refine
this network into a binary network, one of which is of level $1$ with $2$ blobs
(Fig.~\ref{fig:level-example} right), and the other two are of level $2$ with
a single non-trivial blob (and block).
Fig.~\ref{fig:blobtree-example} shows a level-$1$ network with $3$ blobs (left)
and its tree of blobs (top right).
Note that both networks on the top row have the same block-cut tree
(derived from blocks and cut nodes, see \cite{diestel-graphtheory})
after suppressing its degree-2 nodes,
and both networks at the bottom have a block-cut tree
reduced to a star, after suppressing degree-2 nodes.

\begin{figure}
  \centering
  \includegraphics[scale=1.5]{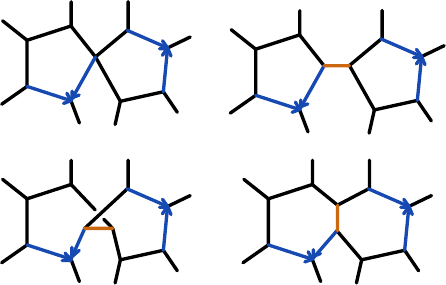}
  \caption{Example networks and their levels.
  Top left: network $N$ with 1 blob but 2 biconnected components.
  Hybrid edges are shown in blue with arrows.
  Top right: one possible resolution of $N$,
  with identical average distances if the added edge (orange, cut edge)
  is assigned length 0. It has 2 blobs and is of level 1.
  Bottom: the other 2 resolutions of $N$, with identical average distances
  if the added edge (orange) is assigned length 0.
  Both have $1$ blob and are of level $2$.
  }
  \label{fig:level-example}
\end{figure}

\begin{figure}
  \centering
  \includegraphics[scale=1.5]{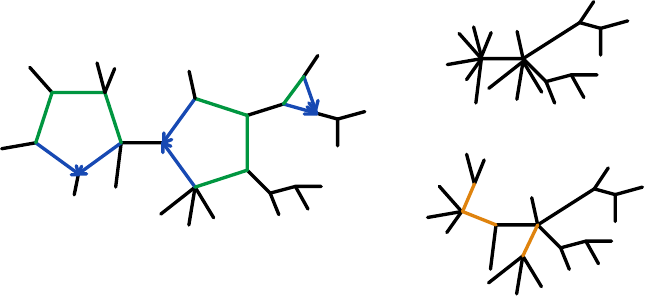}
  \caption{Example tree of blobs.
  Left:  level-1 network $N$, with leaf labels omitted to avoid clutter.
  Hybrid edges are shown in blue; cut edges in black.
  The two 5-cycles are identifiable provided that their tree (green) edges
  have positive length
  (e.g. Corollary~\ref{thm:level1-5up}).
  Top~right: tree of blobs $T$ for $N$.
  One of its degree-3 nodes corresponds to a degree-3 blob in $N$,
  undetectable from average distances.
  Bottom right: distance split tree
  reconstructed from average distances.
  It is a refinement of $T$
  (and is $N$'s block-cut tree after suppressing degree-2 nodes).
  The extra edges (orange)
  correspond to polytomies in $N$.}
  \label{fig:blobtree-example}
\end{figure}

\subsection{Average distances}

\begin{definition}[up-down path, rooted network, from \cite{Bordewich_2018}]
\label{def:up-down-path-rooted}
In a rooted network $N^+$, an \emph{up-down path}
between two nodes $u$ and $v$ is a sequence of distinct nodes
$u = u_1u_2\dots u_n = v$ with a special node $s = u_i$ such that
$u_i\dots u_2u_1$ and $u_i\dots u_{n-1}u_n$ are directed paths in
$N^+$.
\end{definition}
If $u_1u_2\dots u_n$ is an up-down path, we may write $u \leftarrow s \rightarrow v$
or $u \leftrightarrow v$ as a shorthand.
Particularly, a directed path between $u$ and $v$ is also an up-down
path, and we will simply write $u \rightarrow v$ or
$u \leftarrow v$, depending on the direction of the path.  Note that
the up-down paths $u_1u_2\dots u_n$ and $u_nu_{n-1}\dots u_1$ are
considered to be the same.  Formally, we can define up-down paths as
the equivalence classes of these sequence of nodes, with reversal of
the sequence being an equivalence relation.

It is not obvious whether the notion of up-down paths is still valid in
semidirected networks: given an up-down path
$p = u_0 u_1\dots u_n$ in a rooted network $N^+$, is it possible to tell if $p$ is an
up-down path by looking at the induced semidirected network $N^-$ alone?
It turns out the answer is yes: the notion of an up-down path
only has to do with the semidirected structure of a network.
An alternative definition that also applies to the semidirected
networks is the following:

\begin{definition}[up-down path, semidirected network]
  \label{def:up-down-path}
  Let $N$ be a rooted or semidirected network.  An \emph{up-down path}
  is a path of distinct nodes with no v-structure. More formally, 
  $u_0 u_1\dots u_n$ is an \emph{up-down path} if the $u_i$'s are distinct;
  for each $i$, either $u_iu_{i+1}$ or
  $u_{i+1}u_i$ is an edge in $N$ (for a tree edge $uv$ in semidirected
  network $N$, both $uv$ and $vu$ are valid edges in $N$);
  and there is no segment $u_{i-1}u_iu_{i+1}$ such that $u_i$
  is a hybrid node and $u_{i-1}u_i$ and $u_{i+1}u_i$ are hybrid
  (directed) edges in $N$.
  An up-down path with no hybrid nodes is a \emph{tree path}.
\end{definition}

\noindent This following equivalence is proved in appendix~\ref{sec:path-dist-proofs}.

\begin{proposition}\label{prop:up-down-path}
Let $N^+$ be a rooted network with induced semidirected network $N^-$.
$p$ is an up-down path in $N^+$ according to Definition~\ref{def:up-down-path-rooted}
if and only if it is an up-down path according to Definition~\ref{def:up-down-path}.
\end{proposition}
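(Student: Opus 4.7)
\emph{Plan.} I plan to prove the equivalence by orientation-bookkeeping along the path. For each $j \in \{1, \dots, n-1\}$ call position $j$ \emph{up} if the underlying edge in $N^+$ is $u_{j+1} \to u_j$ and \emph{down} if it is $u_j \to u_{j+1}$; since $N^+$ is a DAG, exactly one of these holds. Both definitions already require distinct nodes and an edge in $N^+$ between each consecutive pair, so only the shape condition needs checking. The heart of the argument is that Definition~\ref{def:up-down-path-rooted} is equivalent to the label sequence being a (possibly empty) block of \emph{ups} followed by a block of \emph{downs}, and that this in turn is equivalent to the no-v-structure condition of Definition~\ref{def:up-down-path}.

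\emph{Forward direction ($\Rightarrow$).} Starting from an apex $u_i$ and the two directed subpaths in Definition~\ref{def:up-down-path-rooted}, I read off the labels: positions $1, \dots, i-1$ are all \emph{up} and positions $i, \dots, n-1$ are all \emph{down}. It follows that at each internal node $u_j$ with $j \neq i$ the two adjacent path edges contribute one edge into $u_j$ and one out of it, while at $u_i$ both adjacent path edges leave $u_i$. Because $N^+$ is a DAG, the reverse orientation of any edge is not also present, so it is impossible for both $u_{j-1} u_j$ and $u_{j+1} u_j$ to be hybrid edges directed into $u_j$, and Definition~\ref{def:up-down-path} is satisfied.

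\emph{Backward direction ($\Leftarrow$).} Conversely, I rule out any ``down-then-up'' transition in the label sequence. If position $j-1$ were \emph{down} and position $j$ were \emph{up}, then $u_{j-1} \to u_j$ and $u_{j+1} \to u_j$ would be two distinct edges into $u_j$ in $N^+$, so $u_j$ would have in-degree at least $2$ and hence be a hybrid node with both its path-adjacent edges being hybrid and directed into it. This is exactly the v-structure forbidden by Definition~\ref{def:up-down-path}; at a tree node the pattern is impossible anyway since the in-degree is $1$. The label sequence is therefore of the form \emph{ups} followed by \emph{downs}, and taking $u_i$ at the transition (with $i = n$ if all labels are \emph{up} and $i = 1$ if all are \emph{down}) yields the two directed subpaths required by Definition~\ref{def:up-down-path-rooted}. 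I expect the main obstacle to be purely bookkeeping: keeping boundary indices straight and handling degenerate cases, such as the apex falling at an endpoint, paths of length at most $1$, or an interior node coinciding with the root of $N^+$ (which has in-degree $0$ and so forces that position to be the apex).
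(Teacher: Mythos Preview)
Your proposal is correct and takes essentially the same approach as the paper: both directions hinge on the observation that a ``down-then-up'' transition at $u_j$ forces $u_j$ to have in-degree at least $2$, hence to be a hybrid node with both path-adjacent edges hybrid, which is exactly the forbidden v-structure. The paper phrases the backward direction as a case split on the orientation of the first edge and then argues that ``the direction cannot reverse,'' whereas you organize it via an explicit up/down labeling and rule out the bad transition directly; the underlying logic is identical, and your framing makes the role of the hybrid-node in-degree slightly more explicit.
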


Given a metric $(\ell, \gamma)$ on a network $N$ and a up-down path $p$,
we can define the path length $\ell(p) = \sum_{e \in p} \ell(e)$
where $e$ ranges over the edges in path $p$.
We also define the
path probability $\gamma(p) = \prod_{e \in p} \gamma(e)$ as
the product of all the hybridization parameters of the component
edges.  Here we use the convention of $\gamma(e) = 1$ when $e$ is a
tree edge.
$\gamma(p)$ is the probability of path $p$ being present
in a random tree extracted from $N$, where tree ``extraction" proceeds as follows:
at a hybrid node $h$, we pick one of $h$'s parent hybrid edge according to
the edges' hybridization parameters, and delete all other parent edges of $h$.
If we do this independently for all hybrid nodes, then the result is a random
tree $T$ with the same nodes as $N$.
In $T$, there is a unique path
between $u$ and $v$, which equals $p$ with probability exactly $\gamma(p)$.

In the special case that there is a tree path $p$ between nodes $u$
and $v$ in $N$, then we immediately have $\gamma(p) = 1$.  In fact, $p$ must
be the unique up-down path between $u, v$: because $p$ does not
contain any hybrid nodes, $p$ is the unique path between $u, v$ on any
displayed tree $T$.

\begin{definition}[average distance]
Let $N$ be a rooted or semidirected network.
The \emph{average distance} between two nodes $u$ and $v$ in $N$
is defined as
$$d(u, v) = \sum_{p \in P_{uv}} \gamma(p)\ell(p)$$
where $P_{uv}$ denotes the set of up-down paths between $u$ and $v$.
Equivalently, this is the expected distance between $u$ and $v$ on a
random tree $T$ extracted from $N$ (described above).
As a result, $d$ satisfies the triangle inequality.
We may write $d_N$ to emphasize the dependence on $N$.
\end{definition}

  The same definition was used for rooted networks by
  \cite{willson12_tree_averag_distan_certain_phylog}.  By considering
  our extended definition of up-down paths, our definition clarifies
  that average distances are well-defined on semidirected networks.

\begin{remark}
In a network $N$, contracting a tree edge of length 0 creates a network
$\widetilde N$ that has a polytomy, but whose up-down paths are in bijection
with those of $N$ 
and such that $N$ and $\tilde N$ have identical average distances.
Consequently, a polytomy at a tree node of degree 4 has 3 distinct resolutions
with identical average distances.
This is not true for hybrid edges of length 0: hybrid edges may not be
contracted without modifying the set of up-down paths.
Moreover, if there is a polytomy at a hybrid
node with 2 incoming hybrid edges and 2 other (outgoing) edges (Fig.~\ref{fig:polytomy} left)
then there is a single resolution of this polytomy with identical
set of up-down paths and identical
pairwise distances: with the addition of a tree edge (Fig.~\ref{fig:polytomy} center).
The resolutions shown in Fig.~\ref{fig:polytomy} (right) are not equivalent:
there exist up-down paths $a\rightarrow c$, $a\rightarrow d$,
$b\rightarrow c$ and $b\rightarrow d$ in the network on the left,
but each network on the right is missing one of these paths.

\end{remark}

\begin{figure}
  \centering
  \includegraphics[scale=1.2]{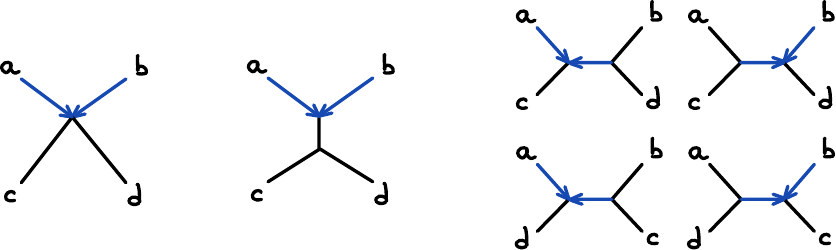}
  \caption{A polytomy below a hybrid node (left) can be resolved
  by adding a new edge of length $0$. The only resolution
  with identical up-down path lengths and average distances is
  by adding a new tree edge (middle).
  Contracting the horizontal hybrid edge in any network on the right
  to match the topology on the left would affect up-down paths and average distances.
  }
  \label{fig:polytomy}
\end{figure}

\begin{definition}[subnetwork, from \cite{2019banos}]
  Let $N^-$ be a semidirected network on $X$, and $Y \subset X$.  Then
  the induced network $N_Y^-$ on $Y$ is obtained by taking the union of all
  up-down paths in $N^-$ between pairs of tips in $Y$.
\end{definition}

  If $N^+$ is a rooted version of $N^-$, then it is possible to reroot $N_Y^-$ at
  $\lsa(Y)$ in $N^+$, which belongs in $N_Y^-$ as shown in
  \cite{2019banos}.
  $N_Y^-$ naturally inherits the metric from $N^-$:
  the distance between any pair of taxa $x, y \in Y$ is the same in $N_Y^-$
  and in $N^-$ because the up-down paths between $x, y$ are preserved,
  together with the edge lengths and hybridization parameters on these paths.

\subsection{Main results}

Since distances are defined on up-down paths, and up-down paths are identical
on a rooted network and its induced semidirected network, it follows
from Proposition~\ref{prop:up-down-path} that
average distances are independent of the root location on a rooted network,
so that the root is not identifiable from average distances:

\begin{proposition}
If rooted networks $N_1^+$ and $N_2^+$
induce the same semidirected network $N^-$, then pairwise distances on
$N_1^+$ and $N_2^+$ are identical.
\end{proposition}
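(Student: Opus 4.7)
My plan is to deduce the result from Proposition~\ref{prop:up-down-path} together with the definition of the induced metric on $N^-$. The approach is to unwind the sum defining $d_{N_i^+}(u,v)$ and show that it computes the same quantity for $i = 1, 2$, by matching it term-by-term with a sum over up-down paths in the common $N^-$.

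First, I would observe that for any two leaves $u, v \in X$, every up-down path between $u$ and $v$ in $N_i^+$ lies entirely within the LSA subnetwork $(N_i^+)_X$. Indeed, if such a path had peak $s$ strictly above $\lsa(X)$, then both descending directed legs from $s$ to $u$ and from $s$ to $v$ would be forced to pass through $\lsa(X)$, since $\lsa(X)$ lies on every directed path from the root to any leaf and therefore separates every strict ancestor of $\lsa(X)$ from $X$. This would violate the distinctness of the nodes along an up-down path, a contradiction. Hence only paths living in the LSA subnetwork contribute to the distance sum.

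Second, I would apply Proposition~\ref{prop:up-down-path} to identify up-down paths in $(N_i^+)_X$ with up-down paths in $N^-$. Because $N^-$ (together with its metric) is the same for $i = 1, 2$, and because the induced metric on $N^-$ preserves path length (the two edges incident to a degree-2 $\lsa(X)$ are merged by summing their lengths) and hybridization parameters (unchanged by the suppression step of Definition~\ref{def:sdn}), each contribution $\gamma(p)\ell(p)$ to $d_{N_i^+}(u,v)$ equals the corresponding contribution to the analogous sum $d_{N^-}(u,v)$. Therefore $d_{N_1^+}(u,v) = d_{N^-}(u,v) = d_{N_2^+}(u,v)$ for all $u, v \in X$.

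The main obstacle is really only the first step: verifying that every up-down path between two taxa stays within the LSA subnetwork, so that distances can be computed purely from data shared by $N_1^+$ and $N_2^+$. Once this is in place, the remaining argument is a bookkeeping exercise, verifying that the bijection of up-down paths provided by Proposition~\ref{prop:up-down-path} preserves both $\ell(p)$ and $\gamma(p)$, which follows immediately from how the metric on $N^-$ is constructed from the metric on $N_i^+$.
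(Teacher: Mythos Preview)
Your proposal is correct and follows essentially the same approach as the paper, which simply asserts the result as an immediate consequence of Proposition~\ref{prop:up-down-path} (since average distances are defined via up-down paths, and up-down paths depend only on the semidirected structure). Your argument is more detailed than the paper's one-line justification; in particular, your verification that up-down paths between leaves cannot stray above $\lsa(X)$ fills in a step the paper leaves implicit.
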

What may be identifiable from average distances, at best,
is the semidirected network $N^-$ induced from $N^+$,
unless further assumptions are made.

Several papers have considered average distances on networks before,
with different assumptions on the networks.
\cite{willson13_recon_certain_phylog_networ_from} worked with binary
networks and assumed the knowledge of the root, that is, the root was
one of the labelled leaves and pairwise distance data was given
between the root and the other leaves.  \cite{francis-steel-2015} also worked with
binary networks and assumed that hybrid edges have length $0$,
along with other assumptions.

The remainder of the work focuses on the following
problem:  Given the average distances between tips,
what can we identify about the semidirected network:
what topological structures, and what continuous parameters?

\subsubsection{Non-identifiable features}

We first cover negative results, on features
that are not identifiable from average distances.
The simplest such feature is  the ``hybrid
zipper'' (Fig.~\ref{fig:hybrid-zipper}).
We will show that a network is not distinguishable
from its zipped-up version defined below.

\begin{definition}[zipped-up network]
\label{def:zip}
In a network, a hybrid node is \emph{zipped up} if all its parent
edges have length $0$. A network is \emph{zipped up} if all its hybrid
nodes are zipped up.  If a hybrid node $h$ is not zipped up in a
network $N$, the version of $N$ zipped up at $h$ is the network
obtained by modifying the edges adjacent to $h$ as follows (we refer
to this operation as a \emph{zipping-up}):
\begin{enumerate}
\item If $h$ has $k\geq 2$ children $c_1,\ldots,c_k$
  and is not zipped up,
  add a tree node $w$ and insert a tree edge
  $hw$ of length $0$ as unique child edge of $h$, then delete
  each edge $hc_i$ and replace by $wc_i$ with identical
  length (see Fig.~\ref{fig:hybrid-zipper}).
\item Set the length of the unique child edge $hw$ of $h$ to
  \begin{equation}\label{eq:zip-up}
  l_h = \ell(hw) +\sum_{u\text{\,parent of\,}h}\gamma(uh) \ell(uh)
  \end{equation}
  then set the length
  of all its parent hybrid edges to $0$.
\end{enumerate}
The \emph{zipped-up version} of a network $N$ is the network
obtained by zipping-up $N$ at all its unzipped hybrid nodes repeatedly.
\end{definition}

\noindent
In appendix~\ref{sec:zip-proofs}, we prove that the zipped-up version is unique.
Note that 
a hybrid node may need to be zipped multiple times before the
network is fully zipped up. In network $N^-$ from Fig.~\ref{fig:lsa_semidirected_nets}
(bottom right) for example, $h_2$ may need to be zipped-up twice
if it is considered before $h_1$.

\begin{figure}[h]
  \centering
  \includegraphics[scale=1.5]{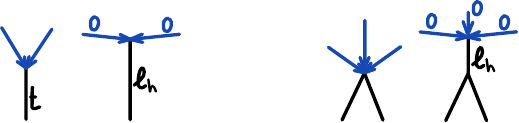}
  \caption{Zipping up: network transformation of branch lengths,
  setting hybrid edge lengths to $0$. The new length $l_h$ is
  given in \eqref{eq:zip-up}.
  In case of a polytomy
  below the hybrid node (right), a new tree edge of length $0$ needs to be
  added before zipping up.
  }
  \label{fig:hybrid-zipper}
\end{figure}

\begin{proposition}[hybrid zip-up]
  \label{prop:zipped-up}
  Let $N$ be a semidirected network, $h$ be a hybrid node in $N$ with
  parents $u_1,\dots,u_n$.
  If $h$ has more than one child, then step 1 in Definition~\ref{def:zip}
  does not affect average distances.
  If $h$ has one child $w$,
  then the average distances between the tips
  depend on
  $\ell(u_1h),\ldots,\ell(u_nh), \ell(hw)$ only through
  $l_h$ given by \eqref{eq:zip-up}.
  Therefore, zipping up $N$ at $h$ does not change average distances.
\end{proposition}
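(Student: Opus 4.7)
The plan is to use the random-tree interpretation of average distances. Let $T$ denote the random tree obtained from $N$ by the independent parent choices at each hybrid node. For any tips $a, b$,
\[
d(a, b) \;=\; \sum_{e \in E(N)} \ell(e)\, \pi_e(a, b), \qquad \pi_e(a, b) \;:=\; \PP\bigl(e \in T \text{ and } e \text{ lies on the } a\text{-to-}b \text{ path in } T\bigr).
\]
Each coefficient $\pi_e(a, b)$ depends only on the topology of $N$ and on $\gamma$, not on any edge length, so it suffices to analyze how the $\pi_e$'s at the edges incident to $h$ behave under each modification.

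For the first statement I would construct a length-and-$\gamma$-preserving bijection between the up-down paths in $N$ and those in the modified network $\tilde N$. The v-structure prohibition forces any up-down path through $h$ in $N$ to have one of two forms: $\cdots u_j h c_i \cdots$ (pass-through via one parent edge and one child edge) or $\cdots c_i h c_k \cdots$ (peak at $h$ between two child edges). In $\tilde N$, $h$'s only child is the new tree node $w$ via the length-$0$ edge $hw$, and $w$ has $c_1, \dots, c_k$ as children with edges $wc_i$ carrying the length and hybridization parameter of the original $hc_i$. The bijection sends the first form to $\cdots u_j h w c_i \cdots$ and the second to $\cdots c_i w c_k \cdots$, and leaves paths disjoint from $\{h, w\}$ fixed. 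Because $\ell(hw) = 0$, $\gamma(hw) = 1$, and edge parameters are preserved under redirection, corresponding paths have equal length and equal $\gamma(p)$, so summing yields identical average distances between any pair of tips.

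For the second statement the key identity to prove is
\[
\pi_{u_j h}(a, b) \;=\; \gamma(u_j h)\, \pi_{hw}(a, b) \qquad \text{for every } j.
\]
Given this, the $\ell$-dependent part of $d(a, b)$ on the edges at $h$ becomes $\ell(hw)\pi_{hw}(a,b) + \sum_j \ell(u_j h)\gamma(u_j h)\pi_{hw}(a, b) = \pi_{hw}(a, b)\, l_h$, so $d(a, b)$ depends on $\ell(hw)$ and the $\ell(u_j h)$'s only through $l_h$. To prove the identity, note that $\{u_j h \in T\}$ is exactly the event that the parent chosen at $h$ is $u_j$, of probability $\gamma(u_j h)$, and depends only on the choice at $h$. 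Conditional on $u_j h \in T$, the edge $u_j h$ lies on the $a$-to-$b$ path in $T$ iff exactly one of $a, b$ is a descendant of $h$ in $T$; because $w$ is the unique child of $h$, this separation event is determined by whether $hw \in T$ and by the parent choices at hybrid nodes strictly downstream of $h$, so it is independent of the choice at $h$. A short case split according to whether $hw$ is a tree edge or a hybrid edge then identifies this separation probability with $\pi_{hw}(a, b)$.

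Combining the two statements gives the conclusion: step 1 preserves every $d(a, b)$ by the bijection, and step 2 resets $\ell(hw) \mapsto l_h$ and $\ell(u_j h) \mapsto 0$ while leaving $l_h$ unchanged, so all average distances are preserved. The main obstacle is the independence argument in the second statement: I must make precise the notion of ``descendant of $h$ in $T$'' in the semidirected setting (e.g., via any compatible rooted version of $N$, since $T$ and the $\pi_e$ are invariant under that choice) and verify that $\{u_j h \in T\}$ and the event ``$h$ separates $a$ from $b$ in $T$'' depend on disjoint, and therefore independent, collections of hybrid parent choices.
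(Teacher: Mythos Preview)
Your argument is correct and closely related to the paper's, but the packaging differs. For step~1, both you and the paper use the same bijection between up-down paths in $N$ and $\tilde N$; you are actually more careful in distinguishing the pass-through case $\cdots u_jhc_i\cdots$ from the peak case $\cdots c_ihc_k\cdots$ (the paper only mentions the former explicitly). For step~2, the paper first reduces to a network with a single reticulation by conditioning on the set $R$ of hybrid-edge choices at all hybrid nodes other than $h$, and then handles the resulting one-hybrid network by a direct two-case computation (the $a$--$b$ path either uses $h$ or it does not). You instead linearize $d(a,b)=\sum_e\ell(e)\,\pi_e(a,b)$ and prove the edge-level identity $\pi_{u_jh}=\gamma(u_jh)\,\pi_{hw}$ directly, via the observation that the separation event ``exactly one of $a,b$ is a descendant of $h$ in $T$'' is measurable with respect to the parent choices at hybrid nodes strictly below $h$ and hence independent of the choice at $h$. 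The two routes rest on the same probabilistic fact (independence of the choice at $h$ from the downstream separation event), but yours isolates it as a clean identity between edge-usage probabilities rather than going through the single-hybrid reduction; this makes the dependence on $l_h$ immediate without the case split. Your flagged obstacle---making ``descendant of $h$ in $T$'' precise in the semidirected setting---is handled exactly as you suggest: pass to any compatible rooted version, since the displayed-tree distribution and the $\pi_e$'s do not depend on that choice.
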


The proof is in
section~\ref{sec:zipped-up}.
This unidentifiability problem was mentioned in
\cite{Pardi_2015}, where it is referred to as ``unzipping'', as well
as in \cite{willson13_recon_certain_phylog_networ_from}.  It is
important to note that because we use average distances instead of
displayed trees in \cite{Pardi_2015}, we have an extra degree of
freedom: As in unzipping, we can subtract an equal amount $\epsilon$
in lengths from both $uh$ and $vh$, and add $\epsilon$ to $hw$.  This
leaves the average distances unchanged.  What is new with average
distances is that we can also ``slide" the hybrid node along the v-structure, that is:
subtract $(1-\gamma)\epsilon$ from $uh$ and add $\gamma \epsilon$ to $vh$.
This has no impact on the average distances either.

Because of the extra degree of freedom, instead of ``fully unzipping''
each reticulation as in \cite{Pardi_2015} and working with networks
where outgoing edges from hybrid nodes have length $0$, we shall
restrict our attention to \emph{zipped-up} 
networks, which are networks where all the hybrid edges have length
$0$ (defined rigorously below).  This requirement is also present in
\cite{willson12_tree_averag_distan_certain_phylog} and
\cite{willson13_recon_certain_phylog_networ_from},
but the non-identifiability underlying this requirement was not clarified.
The requirement was motivated by the fact that hybridizing populations
must be contemporary with each other.
However, hybrid edges of positive length appear naturally when hybridizations
involve ``ghost" populations that went extinct or with no sampled descendants,
or when two populations fuse, such as if their habitat becomes less fragmented
\cite{degnan2018}.

\begin{proposition}[shrinking blobs of degree 2 or 3]
  \label{cor:deg-2-3-blob-unid}
  Let $B$ be a blob of degree 2 or 3 in a semidirected network $N$.
  Then $N'$, the network obtained by shrinking $B$
  (i.e.\ identifying the nodes in $\partial B$ and deleting
  the other nodes in $B$) and modifying the lengths of the cut edges
  adjacent to $B$, induces the same pairwise distances. 
\end{proposition}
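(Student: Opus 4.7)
The plan is to derive this proposition from the swap lemma proved earlier in the paper, by showing that shrinking $B$ (with appropriate modification of adjacent cut edges) preserves the boundary inheritance and distance matrices of a chosen local gadget. I would take as the swap gadget the union of $B$ with its $k \in \{2,3\}$ adjacent cut edges $e_1, \dots, e_k$, whose outer endpoints $u_1, \dots, u_k$ form the boundary of the gadget. For each pair $(u_i, u_j)$, every up-down path through the gadget decomposes uniquely as $e_i \cdot q \cdot e_j$, where $q$ is an up-down path inside $B$ from the boundary node $b_i^* \in \partial B$ attached to $e_i$ to the boundary node $b_j^* \in \partial B$ attached to $e_j$. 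This decomposition uses the bridge property of cut edges: no displayed-tree path between $u_i$ and $u_j$ can avoid or re-use cut edges, because each side of $B$ is separated from $V_B$ by exactly one cut edge and a detour through another side has no valid way back. Consequently, the pairwise boundary inheritance factors as $\gamma(e_i)\gamma(e_j)\sum_q \gamma(q) = \gamma(e_i)\gamma(e_j)$, and the pairwise boundary distance equals the cut-edge contributions plus $\gamma(e_i)\gamma(e_j)\,d_B(b_i^*, b_j^*)$, where $d_B(b,b') := \sum_q \gamma(q)\ell(q)$ is summed over up-down paths $q$ in $B$ between $b$ and $b'$.

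Next I would realize the symmetric nonnegative matrix $\bigl(d_B(b,b')\bigr)_{b,b' \in \partial B}$ as a star-shaped tree fragment. Because $|\partial B| \le 3$, any such matrix is automatically a tree metric, with pendant lengths $r_i = \tfrac{1}{2}\bigl(d_B(b_i,b_j) + d_B(b_i,b_k) - d_B(b_j,b_k)\bigr)$ (for $|\partial B|=3$), nonnegative by the triangle inequality on $d_B$; any split $r_1 + r_2 = d_B(b_1,b_2)$ (for $|\partial B|=2$); or trivial (for $|\partial B| \le 1$). I would then construct $N'$ by collapsing all of $B$ into a single node $b^*$ (identifying $\partial B$ and deleting the interior nodes and edges), reattaching each cut edge $e_i$ to $b^*$ with unchanged direction and inheritance $\gamma(e_i)$, and adjusting its length by adding $r_i$ to $\ell(e_i)$ when $e_i$ is a tree edge, or $r_i/\gamma(e_i)$ when $e_i$ is hybrid, so that the distance contribution $\gamma(e_i)\ell(e_i)$ grows by exactly $r_i$. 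A direct algebraic check then shows that the boundary inheritance and distance matrices of the shrunk gadget equal those of the original, so the swap lemma yields equality of all pairwise average distances between leaves of $N$.

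The main obstacles I anticipate are: first, rigorously justifying $\sum_q \gamma(q) = 1$ over up-down paths $q$ in $B$ between boundary nodes $b_i^*, b_j^*$, which uses the bridge property of cut edges and the observation that in any displayed tree of $N$, the $T$-path between two boundary nodes of $B$ cannot leave $B$ (exiting through a cut edge would require a second cut edge to return, with no valid route); second, handling hybrid cut edges with the correct $1/\gamma(e_i)$ rescaling so that a length correction translates into the intended distance correction; and third, handling degenerate configurations in which two or more cut edges attach to the same boundary node (so $|\partial B| < k$), where some $d_B$ entries and $r_i$ values vanish but the star construction still applies once the corrections are attributed to the correct cut edges.
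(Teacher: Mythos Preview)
Your approach is correct and follows essentially the same route as the paper: apply the swap lemma, observe that $\gamma\equiv 1$ on the boundary, and realize the induced metric on $\leq 3$ points by a tree.

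The only substantive difference is the choice of swap gadget. You take $A = B \cup \{e_1,\dots,e_k\}$ with boundary $\{u_1,\dots,u_k\}$ (the outer endpoints of the cut edges), then collapse $B$ to a point and absorb the star-tree lengths $r_i$ into the cut edges. The paper instead takes $A = B$ itself with boundary $\partial B$, and swaps $B$ for a single edge (degree~2) or a 3-edge star (degree~3) with lengths read off from $d(\cdot,\cdot\mid B)$; the added edges then get absorbed into the adjacent cut edges after suppressing degree-2 nodes. The paper's choice is slightly cleaner because it avoids tracking the cut edges inside the gadget, but your version has the merit of matching the proposition's statement (shrink $B$, modify cut edges) more literally. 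Both are valid instances of the swap lemma.

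One small point: your worry about ``hybrid cut edges'' and the $1/\gamma(e_i)$ rescaling is unnecessary. A hybrid edge cannot be a cut edge: its child $h$ has at least one other parent edge, so removing the hybrid edge leaves $h$ connected to the root via the partner. Hence every cut edge adjacent to $B$ is a tree edge with $\gamma(e_i)=1$, and the rescaling never arises. This also immediately gives that $B$ (and your enlarged gadget) is hybrid-closed, since partner hybrid edges share a node and therefore lie in the same 2-edge-connected component.
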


\begin{figure}[h]
  \centering
  \includegraphics[scale=1.5]{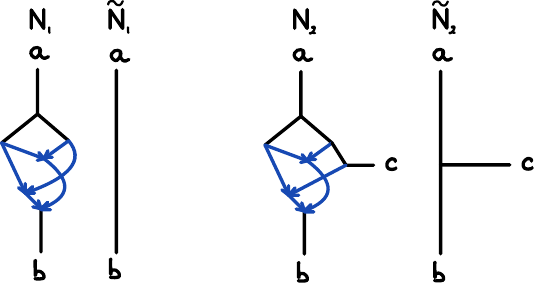}
  \caption{Any blob of degree $2$ (left) or $3$ (right) can be shrunk
  while preserving average distances.
  }
  \label{fig:shrink2d_3d_blobs}
\end{figure}

Section~\ref{sec:swap} presents a more general
  Lemma~\ref{lem:swap} to swap a subgraph with another within
  a semidirected network while keeping the average distances, from
  which Proposition~\ref{cor:deg-2-3-blob-unid} follows as a
  corollary.
Proposition~\ref{cor:deg-2-3-blob-unid} is used in many
proofs 
when considering subnetworks. If a blob 
reduces to a degree-2 or degree-3 blobs after subsampling leaves, then
it can be ignored, up to the lengths assigned to the edges replacing the
blob.
A consequence of
  Proposition~\ref{cor:deg-2-3-blob-unid} is that we require networks
  to not have degree-2 or degree-3 blobs in many of our results.
  Equivalently, this requirement
  can be interpreted as considering networks after these blobs
  have been shrunk and edge lengths modified appropriately.
Parallel edges may form a degree-2 blob. Even if they are
part of a larger blob, they can be swapped with a single tree edge:

\begin{proposition}[merging parallel edges]
  \label{cor:parallel}
  In a network $N$, let $h$ be a hybrid node such that
  all its parent edges $e_1,\dots,e_n$ are
  incident to the same nodes, $v$ and $h$.
  Consider the network $N'$ obtained by replacing $e_1,\dots,e_n$
  by a single tree edge $e = (vh)$ of length $\sum_i \gamma(e_i)\ell(e_i)$.
  Then $d_N = d_{N'}$.
\end{proposition}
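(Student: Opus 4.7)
The plan is to use the probabilistic interpretation of $d_N(x,y)$ as the expected pairwise distance on a random displayed tree of $N$. Fix tips $x,y$ and partition the hybrid nodes of $N$ into $\{h\}\sqcup\mathcal{H}$. Let $\pi$ denote a choice of incoming edge at each node of $\mathcal{H}$ (realised with probability $\gamma(\pi)$), and let $i\in\{1,\dots,n\}$ denote the choice at $h$; write $T_\pi^i$ for the resulting displayed tree of $N$, which occurs with probability $\gamma(\pi)\gamma(e_i)$. Since the merger makes $v$ the unique parent of $h$ in $N'$ via the tree edge $e$, the hybrid nodes of $N'$ are exactly $\mathcal{H}$, and each $\pi$ determines a displayed tree $T'_\pi$ of $N'$ with probability $\gamma(\pi)$.

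The key observation is that $T_\pi^i$ and $T'_\pi$ differ only in the single edge joining $v$ and $h$: it is $e_i$ of length $\ell(e_i)$ in $T_\pi^i$ and $e$ of length $L:=\sum_j\gamma(e_j)\ell(e_j)$ in $T'_\pi$, while every other edge coincides. Hence the unique $x$-to-$y$ path in $T_\pi^i$ uses $e_i$ if and only if the corresponding path in $T'_\pi$ uses $e$; whether this happens depends only on $\pi$. In the case it does, let $D_\pi$ be the common sum of lengths of the other edges on the path; then $d_{T_\pi^i}(x,y)=D_\pi+\ell(e_i)$ and $d_{T'_\pi}(x,y)=D_\pi+L$, so
\[
\sum_{i=1}^n \gamma(e_i)\,d_{T_\pi^i}(x,y)=D_\pi\sum_i\gamma(e_i)+\sum_i\gamma(e_i)\ell(e_i)=D_\pi+L=d_{T'_\pi}(x,y),
\]
using $\sum_i\gamma(e_i)=1$. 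In the other case, $d_{T_\pi^i}(x,y)$ is independent of $i$ and equal to $d_{T'_\pi}(x,y)$, so the same identity trivially holds. Summing over $\pi$ weighted by $\gamma(\pi)$ gives $d_N(x,y)=d_{N'}(x,y)$.

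The principal item to verify is that the parametrisation by $\pi$ yields a valid displayed tree in both $N$ and $N'$, i.e.\ that each node of $\mathcal{H}$ retains its incoming hybrid edges and $\gamma$-values under the merger, and that $h$ acquires a unique parent in $N'$; both are immediate from the definition of the merger, which only touches edges between $v$ and $h$. An essentially equivalent route works directly with up-down paths: each up-down path $\pi_1\cdot e_i\cdot\pi_2$ in $N$ pairs with the single path $\pi_1\cdot e\cdot\pi_2$ in $N'$, and summing $\gamma(p)\ell(p)$ over $i=1,\dots,n$ collapses, by $\sum_i\gamma(e_i)=1$, to $\gamma(\pi_1\cdot e\cdot\pi_2)\ell(\pi_1\cdot e\cdot\pi_2)$. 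The only mildly delicate point here is the v-structure check at $h$: a v-structure in $N$ at $h$ would require a second incoming hybrid edge at $h$ distinct from $e_i$, but since every parent of $h$ coincides with $v$, this would force a repeat visit of $v$ on the path and is excluded by node distinctness, so up-down-ness is preserved under the bijection.
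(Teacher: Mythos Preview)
Your proof is correct, but it proceeds by a different route than the paper's. The paper derives this proposition as a one-line application of the swap lemma (Lemma~\ref{lem:swap}): take $A_1$ to be the subgraph on $\{v,h\}$ consisting of the parallel edges, and $A_2$ the subgraph with the single tree edge $e$; both have boundary $\{v,h\}$, are hybrid-closed, satisfy $\gamma_{A_i}(v,h)=1$, and the length of $e$ is chosen precisely so that the conditional distances agree. Your argument instead works directly from the displayed-tree interpretation (Proposition~\ref{prop:expectation-distance}), conditioning on the choices at all hybrid nodes other than $h$ and then averaging out the choice at $h$ by hand. This is essentially the same conditioning trick the paper uses in the proof of Proposition~\ref{prop:zipped-up} (hybrid zip-up), so your approach is in the spirit of that earlier argument rather than the swap-lemma machinery. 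Your route is more self-contained and avoids invoking Lemma~\ref{lem:swap}; the paper's route is shorter once that lemma is available and illustrates the lemma's utility. The alternative up-down-path bookkeeping you sketch at the end, including the v-structure check at $h$, is also sound.
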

This proposition, proved in section~\ref{sec:swap},
gives a rationale for a traditional assumption that
rooted phylogenetic networks do not have parallel edges \cite{steel16_phylogeny},
despite the biological realism of parallel edges.
First, parallel edges can arise from extinction or unsampled taxa:
hybridization between distant species would appear as a pair of parallel edges
if all the descendants of the two parental species are extinct
or not sampled.
Second, a species may split into 2 populations
and then merge back into a single population
due to evolving geographic barriers, such as glaciations.
Therefore, we allowed for parallel edges in our network definition.
Also, parallel edges may be identifiable from models and data
other than average distances \cite{degnan2018}.

Similarly, $3$-cycles are not identifiable:
any $3$-cycle (which may be part of a larger blob) can be shrunk to a single
node with the loss of one reticulation, without affecting average distances
(Fig.~\ref{fig:shrink3cycles}).

\begin{figure}[h]
  \centering
  \includegraphics[scale=1.5]{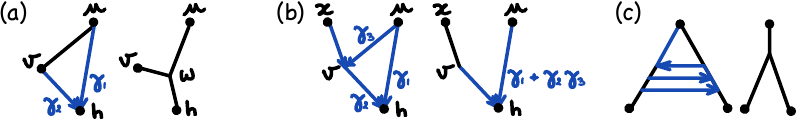}
  \caption{Any $3$-cycle can be swapped 
  while preserving average distances.
  (a) A 3-cycle with a single reticulation (left)
  can be swapped by a subgraph without reticulation (right).
  (b) A 3-cycle with 2 reticulations (left) can be swapped by a single
  reticulation (right). See Proposition~\ref{prop:3-cycle} for edge length adjustments.
  (c) A ladder of reticulations between sister lineages (left) can be eliminated (right)
  by repeated swaps.
  }
  \label{fig:shrink3cycles}
\end{figure}

\begin{proposition}
  \label{prop:3-cycle}
  Let $N$ be a semidirected network on $X$, $h$ a hybrid node in $N$
  with exactly two parents $u$ and $v$.
  Let $\gamma_1 = \gamma(uh)$ and $\gamma_2 = \gamma(vh) = 1 - \gamma_1$.
  \begin{enumerate}
  \item
  If $uv$ is a tree edge (Fig.~\ref{fig:shrink3cycles}a),
  then let $N'$ be the semidirected graph
  obtained by shrinking the 3-cycle $u,v,h$ as follows:
  remove edges $uh$, $vh$, and $uv$;
  add tree node $w$;
  add tree edges $uw$, $vw$, and $wh$ with lengths
  $$\ell(uw) = \gamma_2\ell(uv),\quad\ell(vw)=\gamma_1 \ell(uv), \quad
  \ell(wh)=\gamma_1 \ell(uh) + \gamma_2 \ell(vh)\,.$$
  \item
  If $uv$ is a hybrid edge (Fig.~\ref{fig:shrink3cycles}b) and if
  $v$ has exactly two parents $u$ and $x$ that are not
  adjacent,
  then let $\gamma_3 = \gamma(uv)$ and let $N'$ be the semidirected graph obtained by
  shrinking the 3-cycle as follows:
  remove $uv$; make the other parent edge of $v$ a tree edge;
  and set
  \begin{align*}
  \gamma(uh) &= \gamma_1 + \gamma_2\gamma_3\;;\quad
  \gamma(vh) = \gamma_2(1-\gamma_3)\\
  \ell(uh) &= \big(\gamma_2\gamma_3(\ell(uv) + \ell(vh)) + \gamma_1\ell(uh)\big) /
      (\gamma_2\gamma_3 + \gamma_1) \,.
  \end{align*}
  \end{enumerate}
  Then $N'$ is a semidirected network on $X$ with one fewer reticulation than $N$, and
  $N'$ induces the same average distances as $N$ on $X$.  We may also
  suppress the degree-$2$ nodes in $N'$, which does not affect
  pairwise distances.
\end{proposition}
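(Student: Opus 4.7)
The plan is to apply the swap lemma (Lemma~\ref{lem:swap}, stated in Section~\ref{sec:swap}) separately in each of the two cases. That lemma permits replacing a subgraph $G$ inside a semidirected network by another subgraph $G'$ with the same boundary, provided the two subgraphs induce the same pairwise boundary data: the average distance and the total up-down path probability between each pair of boundary nodes, computed within the subgraph. In each case I would first identify the subgraphs $G, G'$ and their common boundary, then verify the two boundary invariants.

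For Case 1, I would take $G$ to be the 3-cycle on $\{u,v,h\}$ with edges $uv, uh, vh$ and $G'$ to be the subdivision on $\{u,v,h,w\}$ with the three tree edges $uw, vw, wh$ of the specified lengths. Since $w$ is internal to $G'$, the boundary is $\partial G = \partial G' = \{u,v,h\}$. Enumerating up-down paths inside each subgraph gives $d_G(u,v) = \ell(uv)$ and $d_{G'}(u,v) = \ell(uw)+\ell(vw) = (\gamma_1+\gamma_2)\ell(uv) = \ell(uv)$. Between $u$ and $h$ there are two up-down paths in $G$: the direct edge $u\to h$ and the path $u,v,h$, which is valid because $v$ is a tree node so no v-structure arises. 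Their weighted lengths sum to $\gamma_1\ell(uh) + \gamma_2(\ell(uv)+\ell(vh))$, matching $\ell(uw)+\ell(wh)$ after substituting the formulas. The pair $(v,h)$ is symmetric. The total up-down inheritance between every pair of boundary nodes is $1$ in both $G$ and $G'$, so the swap lemma's hypotheses are met.

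For Case 2, I would take $G$ to be the subgraph on $\{u,v,h,x\}$ with edges $uv, uh, vh, xv$ and $G'$ to be the same vertex set with edges $uh, vh, xv$ carrying the updated parameters and with $xv$ now a tree edge. Since $v$ has $h$ as its only child, as depicted in Fig.~\ref{fig:shrink3cycles}(b), the node $v$ is internal in both subgraphs and $\partial G = \partial G' = \{u, x, h\}$. The up-down paths from $u$ to $h$ in $G$ are the direct hybrid edge and the path $u,v,h$ (no v-structure at $v$, since $uv$ points into $v$ but $vh$ points out). They give $d_G(u,h) = \gamma_1\ell(uh) + \gamma_2\gamma_3(\ell(uv)+\ell(vh))$; in $G'$ the only path is the direct edge, contributing $(\gamma_1+\gamma_2\gamma_3)\,\ell_{\text{new}}(uh)$. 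Equating these two expressions recovers precisely the formula for $\ell_{\text{new}}(uh)$ in the statement. From $x$ to $h$ the unique up-down path in either subgraph is $x,v,h$, and both sides evaluate to $\gamma_2(1-\gamma_3)(\ell(xv)+\ell(vh))$ once the new parameters on $vh$ are substituted. No up-down path connects $u$ to $x$ in either subgraph, because the only candidate $u,v,x$ has a v-structure at the hybrid node $v$. The summed inheritance probabilities also match pair by pair: $\gamma_1+\gamma_2\gamma_3$, $\gamma_2(1-\gamma_3)$ and $0$ respectively.

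The main obstacle will be the combinatorial bookkeeping of up-down paths in Case 2, where four hybrid edges meet and one must systematically exclude every v-structure at $v$ and at $h$ in both $G$ and $G'$ before tallying probabilities and weighted lengths. Once the swap lemma's hypotheses are verified, the rest is immediate: in Case 1 both incoming hybrid edges at $h$ disappear and $h$ becomes an ordinary node; in Case 2 the hybrid node $v$ retains only the new tree edge $xv$ as its single parent. In both situations the number of reticulations drops by one, and the resulting degree-$2$ nodes may be suppressed without changing average distances, per the convention following Definition~\ref{def:net-metric}.
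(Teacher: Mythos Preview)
Your proposal is correct and follows essentially the same approach as the paper's own proof: both apply the swap lemma (Lemma~\ref{lem:swap}) with $\partial A = \{u,v,h\}$ in Case~1 and $\partial A = \{u,x,h\}$ in Case~2, verifying that $\gamma_A$ and $d(\cdot,\cdot\mid A)$ agree on the boundary. You actually spell out the up-down path enumeration and the resulting identities more explicitly than the paper, which simply asserts that ``the branch lengths in Proposition~\ref{prop:3-cycle} ensure'' the conditional distances match. One small point to make explicit when you write this up: the swap lemma also requires that $A_1$ and $A_2$ be hybrid-closed, which you should state (it holds because both parent edges of $h$, and in Case~2 both parent edges of $v$, lie inside the chosen subgraph).
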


\noindent
Proposition~\ref{prop:3-cycle} also follows from the swap lemma
and is proved in section~\ref{sec:swap}.
Note that in case 2, if $u$ and $x$ are adjacent, then we may
first shrink the 3-cycle $xvu$ before proceeding and shrinking $uvh$,
possibly recursively (Fig.~\ref{fig:shrink3cycles}c).
The lack of identifiability of 3-cycles and blobs of degree 2 or 3
explains the special cases found by \cite{francis-steel-2015} when
characterizing networks whose average distances fit on a tree.
For some classes, these networks must be trees
except for some local non-tree-like structures.
Namely, the class of ``primitive 1-hybridization" networks was defined
to allow for a short cycle near the root.
When the root is suppressed, this cycle becomes a 3-cycle.
Also, distances from ``HGT networks"
may fit a tree
despite a series of gene exchange
between sister species (Fig.~\ref{fig:shrink3cycles}c),
which form a degree-3 blob.
Our general characterization explains why these local
structures are invisible from average distances,
found by \cite{francis-steel-2015}.


\medskip
The hybrid zippers and $3$-cycles are not the end of identifiability
problems.  
Here we give an example of a level-2 network that is not identifiable
(with generic parameters), showing that in general, it is not possible
to identify the topology of a network with average distances, even
when requiring no degree-2 or 3 blobs and zipped-up reticulations.  As
a result, with average distances, we can only aim to identify networks
given restrictions, or identify only certain features of networks.

\begin{theorem}\label{thm:nonidentifiable}
  Let $k \geq 2$.  Consider the space $\cal S$ of
  zipped-up binary
  semidirected networks of level at most $k$ on $n\geq 4$
  taxa, with no $2$- or $3$-cycles.
  Networks in $\cal S$ are not generically identifiable from
  average pairwise distances, in the sense that
  there exists network topologies $N_1\neq N_2$ in $\cal S$
  and sets of parameters $\Omega_1$ and $\Omega_2$ with
  positive Lesbegue measure satisfying the following:
  for any $(\ell_1,\gamma_1)\in\Omega_1$,
  there exists $(\ell_2,\gamma_2)\in\Omega_2$ such that
  the average distances defined by $(N_1,\ell_1,\gamma_1)$ and
  $(N_2,\ell_2,\gamma_2)$ are identical.
\end{theorem}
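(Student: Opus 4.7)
The plan is to invoke the swap lemma (Lemma~\ref{lem:swap}) on a carefully chosen pair of level-$2$ sub-blobs. Concretely, I will construct two topologically distinct zipped-up binary level-$2$ blobs $B_1 \neq B_2$, both free of $2$- and $3$-cycles, sharing the same boundary $\partial B$ of four attachment points, whose boundary data (pairwise distances and path-inheritance matrices) match on an open subset of their parameter spaces. Once such a pair is produced, the swap lemma immediately yields two full networks $N_1, N_2 \in \mathcal{S}$ with identical average pairwise distances on the entire taxon set, and positive-measure parameter sets follow from the implicit function theorem, exactly as in the spirit of Propositions~\ref{cor:deg-2-3-blob-unid} and~\ref{prop:3-cycle} but for a blob that survives the constraints of $\mathcal{S}$.

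A convenient skeleton for $B_1, B_2$ is the union of two $4$-cycles sharing a single edge: $6$ nodes, $7$ edges, level $2$, no $3$-cycle, and four degree-$3$ boundary nodes carrying external attachments. I choose $B_1, B_2$ to be two non-isomorphic zipped-up orientations of this undirected skeleton in which the two reticulations are placed on different node pairs (for instance, the reticulations trade roles between the two $4$-cycles). Let $\Phi_i \colon \Theta_i \to \RR^{M}$ be the map from parameters of $B_i$ (positive tree-edge lengths and two inheritance values, with hybrid edges constrained to length $0$) to the boundary data required by the swap lemma. By a standard implicit-function-theorem argument, it suffices to exhibit a single pair $(\theta_1,\theta_2) \in \Theta_1 \times \Theta_2$ with $\Phi_1(\theta_1) = \Phi_2(\theta_2) =: D^\star$ and both Jacobians $D\Phi_i(\theta_i)$ of full rank; then a small open neighborhood $U$ of $D^\star$ in $\Phi_1(\Theta_1) \cap \Phi_2(\Theta_2)$ and the preimages $\Omega_i := \Phi_i^{-1}(U)$ have the matching property required. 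A clean way to produce $D^\star$ is a limiting/symmetry argument: degenerate $\theta_1$ and $\theta_2$ toward a common configuration at which $B_1$ and $B_2$ become isomorphic (e.g.\ by shrinking an internal tree-edge length to $0$ so the two reticulation placements collapse), and then perturb slightly back into the interior of both $\Theta_1$ and $\Theta_2$.

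For the full theorem, I embed $B_1, B_2$ into networks $N_1, N_2 \in \mathcal{S}$ on $n \geq 4$ taxa by attaching identical pendant subtrees to the four boundary nodes to carry the remaining leaves, and, if $k > 2$, attaching additional identical level-$1$ or level-$2$ structures elsewhere in the network (disjoint from the swap) to bring the overall level up to $k$. The swap lemma then gives $d_{N_1}(\ell_1,\gamma_1) = d_{N_2}(\ell_2,\gamma_2)$ on $X$ whenever the parameters lie in the matched sets $\Omega_1$ and $\Omega_2$. The main obstacle is producing the common boundary datum $D^\star$ with full-rank Jacobians; direct symbolic computation of boundary distances as rational functions of the parameters is unpleasant, so the limiting/symmetry route is the crux, reducing the positive-measure claim to a single local calculation at a degenerate configuration. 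A careful check that all constraints of $\mathcal{S}$ (binary, zipped-up, level $\leq k$, no $2$- or $3$-cycles) persist after the perturbation then completes the argument.
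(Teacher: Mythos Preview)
Your strategy differs from the paper's in a way worth noting. The paper does not pair two level-$2$ blobs; instead it takes $N_1$ to be a level-$1$ network (the $4$-sunlet) and $N_2$ to be a specific level-$2$ network on four taxa (Fig.~\ref{fig:nonidentifiable}), then invokes Proposition~\ref{prop:swap-4-cycle}: any hybrid-closed $4$-boundary subgraph whose conditional distances violate the $4$-point condition can be swapped for a $4$-sunlet, provided the pendant edges are long enough. This reduces the positive-measure claim to checking a pair of explicit inequalities (\eqref{eq:4sunlet-4-point} and \eqref{eq:4sunlet-cond}) from Theorem~\ref{thm:4-sunlet-metric}, with no implicit-function-theorem argument and no symbolic Jacobian computation. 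Since $\mathcal S$ is ``level at most $k$'', a level-$1$ member is allowed; you do not need to bring the level up to $k$ by attaching extra structure.

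Your two-level-$2$-blobs route is plausible in outline but has a real gap at the step you yourself flag: producing $D^\star$ with full-rank Jacobians. The degeneration you propose (shrinking an internal tree edge to $0$ so the two reticulation placements coincide) lands on the boundary of $\Theta_i$, and there is no reason the Jacobians should be full rank there; indeed, at such a collapse the parametrizations typically become singular. Without either an explicit interior matching point or a concrete rank computation, the positive-measure conclusion does not follow. The paper's approach sidesteps this entirely by exploiting the closed-form parametrization of $4$-sunlet distances, which is what makes the argument short.
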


The proof is presented in section~\ref{sec:4-sunlet}. 
In short, the main idea is to find examples on $4$ taxa, and then embed
these examples in larger networks for any $n$.
Fig.~\ref{fig:nonidentifiable}
provides examples of topologies that can serve the role of
$N_1$ and $N_2$ in Theorem~\ref{thm:nonidentifiable} for $n=4$.
The network on the left is of level 1, showing that level-2 networks
are not distinguishable from level-1 networks in general.
Also, the network on the right is tree-child, implying that
Theorem~\ref{thm:nonidentifiable} also holds for the smaller class
of tree-child networks of level at most $k$ (zipped and without any 2- or 3-cycles),
thus providing a stronger statement.

\begin{figure}[h]
  \centering
  \includegraphics[scale=1.5]{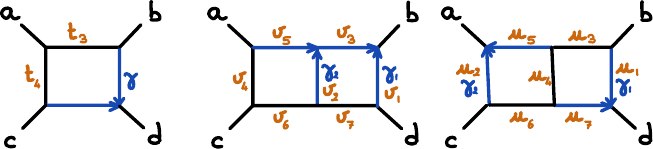}
  \caption{Semidirected networks with identical average distances on a
  parameter set of positive Lebesgue measure. The network on the right is tree-child.
  }
  \label{fig:nonidentifiable}
\end{figure}

\subsubsection{Identifiable features}

Now we move on to positive results, i.e.\ what we can identify
of a network from average distances, subject to certain constraints.

\begin{theorem}[identifying the tree of blobs]
  \label{ebt-refine-identify-thm}
  For a semidirected network $N$ with no degree-2 blob and no internal
  cut edge of length $0$, a refinement of the tree of blobs
  can be
  constructed from pairwise average distances, and which we call the
  \emph{distance split tree}.
\end{theorem}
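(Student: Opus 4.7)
I would construct the distance split tree as the unique $X$-tree determined by the collection of Buneman splits (``d-splits'') of the metric $d$ of average pairwise distances on $X$. Recall that a split $A|B$ of $X$ is a d-split when the strict four-point condition holds: for all $x_1,x_2\in A$ and $y_1,y_2\in B$,
\[
d(x_1,x_2)+d(y_1,y_2) \;<\; d(x_1,y_1)+d(x_2,y_2) \;=\; d(x_1,y_2)+d(x_2,y_1).
\]
A short four-taxon computation shows that any two d-splits are pairwise compatible (otherwise the four pairwise-incompatible intersection representatives give contradictory orderings of the three four-point sums), so by the classical splits-equivalence theorem they correspond to a unique $X$-tree, which I take as the distance split tree.

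The main step is to show that every edge of the tree of blobs of $N$ corresponds to a d-split. A cut edge of a semidirected network is necessarily a tree edge: any hybrid edge has a partner sharing the same child, so removing just one of them does not disconnect that child. Hence, for a cut edge $e=uv$ inducing the partition $A|B$ of $X$, $e$ lies in every tree extracted from $N$, and in every such tree $T$ one has $d_T(x,y)=d_T(x,u)+\ell(e)+d_T(v,y)$ for all $x\in A$, $y\in B$. Averaging over the random extracted tree yields the additive identity
\[
d(x,y)=d(x,u)+\ell(e)+d(v,y)
\]
for the average distance. Plugging this into the three four-point sums for $x_1,x_2\in A$ and $y_1,y_2\in B$, the two ``crossing'' sums both equal $d(x_1,u)+d(x_2,u)+d(v,y_1)+d(v,y_2)+2\ell(e)$, whereas the ``same-side'' sum $d(x_1,x_2)+d(y_1,y_2)$ is bounded above by $d(x_1,u)+d(x_2,u)+d(v,y_1)+d(v,y_2)$ via the triangle inequality applied within each side of $e$. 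Hence, as soon as $\ell(e)>0$, the strict four-point inequality holds and $A|B$ is a d-split. Pendant cut edges induce trivial singleton splits, which are d-splits by convention.

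Under the hypothesis that no internal cut edge has length zero, every edge of the tree of blobs therefore corresponds to a split of the distance split tree, and so the latter refines the former. The main technical obstacle I anticipate is the careful verification of the additive decomposition across a cut edge in the semidirected setting: one must check that every up-down path between $x\in A$ and $y\in B$ crosses $e$ exactly once and splits into two up-down paths realizing the distances $d(x,u)$ and $d(v,y)$, while $d(x_1,x_2)$ and $d(y_1,y_2)$ arise from up-down paths confined to a single side of $e$. The ``no degree-2 blob'' hypothesis plays a complementary role: it prevents two distinct cut edges of $N$ from inducing the same split of $X$, which would otherwise leave the tree of blobs with a degree-2 vertex not reflected in the distance split tree's collection of distinct splits.
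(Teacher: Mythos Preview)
Your argument is correct and the key computation---the additive decomposition $d(x,y)=d(x,u)+\ell(e)+d(v,y)$ across a cut edge $e=uv$, followed by the triangle inequality to bound the same-side sum---is exactly what the paper does in its Lemma~\ref{ebt-edge-distance-compatible-lemma}. The observation that a cut edge must be a tree edge, and your reading of the no-degree-2-blob hypothesis, are both on point.

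There is, however, a definitional difference worth noting. You build the distance split tree from the \emph{strict} Buneman d-splits, which are automatically pairwise compatible. The paper instead takes $\Sigma(d)$ to be all splits satisfying the (non-strict) 4-point condition, and then defines $\Sigma'(d)$ as those splits in $\Sigma(d)$ compatible with every element of $\Sigma(d)$; the distance split tree is the tree on $\Sigma'(d)$. To show $\Sigma_{\bt(N)}\subset\Sigma'(d)$, the paper must therefore prove an additional fact you do not need: that every split satisfying the 4-point condition is compatible with every cut-edge split (this uses that internal cut edges have positive length, forcing strictness on $\Sigma_{\bt(N)}$). Your route is slightly more economical for this theorem in isolation. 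On the other hand, the paper's construction can be strictly finer than yours---it may retain splits where the 4-point condition holds with equality but which remain compatible with all of $\Sigma(d)$---and that finer tree is the object characterized precisely in the later level-1 result (Theorem~\ref{thm:level-1-blockcut-tree}), where splits arising from polytomy refinements satisfy the 4-point condition only non-strictly. So your tree proves the refinement statement here, but would not serve as a drop-in replacement downstream.
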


\noindent
Theorem~\ref{ebt-refine-identify-thm} is proved in section~\ref{sec:ident-blob-tree}.
The distance split tree is defined rigorously in section~\ref{sec:ident-blob-tree},
Definition~\ref{def:dst}. Its construction is based on average distances alone.

Next, we provide examples
showing that the tree of blobs cannot be reconstructed without further assumptions,
and that the restriction of reconstructing a \emph{refinement} is necessary.
A refinement of tree $T$ is a tree $T'$ such that we can obtain
$T$ by contracting edges of $T'$.

\begin{example}\label{ex:blobtree-level2}
Consider the network $N$ in Fig.~\ref{fig:blobtree-level2} (left).
Let $t_i$ and $\gamma_i$ denote edge $i$'s length and hybridization parameter.
$N$ is a binary network of level 2.
Its tree of blobs is a star (Fig.~\ref{fig:blobtree-level2} center).
Its average distances are
equal to those obtained from a tree (Fig.~\ref{fig:blobtree-level2}
right) for specific parameter values,
namely when $\gamma_3 = t_2/(t_1+t_2)$ and $\gamma_7$ is small enough,
and the same topology is obtained by the distance split tree from
Theorem~\ref{ebt-refine-identify-thm}.  In this case, the distance
split tree is a strict refinement of the tree of blobs
$\bt(N)$, but is an exact and parsimonious explanation of the
distances.  We also note that under generic
parameters, 
the distance split tree is equal to the star tree of blobs
(see section~\ref{sec:ident-blob-tree} for the proofs).
\end{example}

\begin{figure}
  \centering
  \includegraphics[scale=1.5]{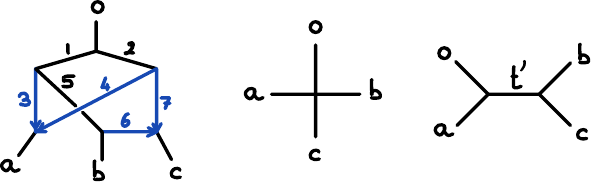}
  \caption{Example of a binary network $N$ of level 2 (left) whose tree of blobs is a star (middle).
   The distance split tree $T$ is also a star for
   generic parameters. But if
  $\gamma_3 = t_2/(t_1+t_2)$ and $\gamma_7$ is small,
  then $T$ is a strict refinement of the tree of blobs (right).
  In this case, $N$ and $T$ have identical average distances.}
  \label{fig:blobtree-level2}
\end{figure}

\begin{example}
Consider the networks $N_1$ and $N_2$ in Fig.~\ref{fig:blobtree-notbinary}.
$N_1$ is not binary. It has one blob, made of two biconnected components, and its tree of blobs is a star.
$N_2$ is a binary resolution of $N_1$ with two blobs: one for each block of $N_1$.
Since the extra cut-edge in $N_2$ has length 0, $N_1$ and $N_2$ have the same average distances
and the same distance split tree $T$
(Fig.~\ref{fig:blobtree-notbinary} right).
$T$ is a strict refinement of $N_1$'s tree of blobs, but it is the tree of blobs of $N_2$:
it recovers the separate blocks with the extra cut edge.
In this case again, the distance split tree represents a true feature of the network.
\end{example}
These examples and our results on non-binary level-1 networks (below)
lead us to state the following conjecture.

\begin{conjecture}[distance split tree as tree of blobs of equivalent network]
Let $N$ be a metric semidirected network on taxon set $X$,
$d_N$ its average distances on $X$, and
let $T$ be the distance split tree reconstructed from $d_N$.
Then there exists a semidirected $N'$ of level equal or less than that of $N$
with $d_{N'}=d_N$ and such that
$T$ is the tree of blobs of $N'$.
\end{conjecture}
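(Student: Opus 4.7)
The plan is to construct $N'$ inductively from $N$ by a sequence of local modifications that introduce cut edges of length $0$ while preserving average distances via the swap lemma (Lemma~\ref{lem:swap}). By Theorem~\ref{ebt-refine-identify-thm}, $T$ is a refinement of $\bt(N)$, so I would induct on the number of edges of $T$ that are not induced by cut edges of $N$. The base case (zero such edges) is trivial, taking $N' = N$; possible degree-$2$ blobs in $N$ can first be shrunk by Proposition~\ref{cor:deg-2-3-blob-unid} without altering distances or the distance split tree.

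For the inductive step, select one such edge $e$ of $T$. Its contraction still yields a refinement of $\bt(N)$. The split $A \mid A^c$ of $X$ induced by $e$ is displayed by the average distances but is not realized by any cut edge of $N$, so $A \mid A^c$ must cut across some blob $B$ of $N$: the partition induces a nontrivial bipartition of the boundary $\partial B$. The goal is to replace the subgraph of $N$ induced by $B$ with a new subgraph having identical boundary inheritance and distance matrices, but containing a cut edge of length $0$ realizing this split. Applying Lemma~\ref{lem:swap} then yields a network $N''$ with $d_{N''} = d_N$, whose distance split tree is still $T$ but which has one more cut edge than $N$. Iterating gives $N'$ with $\bt(N') = T$.

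The crux is therefore a realization result: given a blob $B$ of $N$ and a compatible split of $\partial B$ visible in the boundary distance matrix, there exist sub-blobs $B_1, B_2$ joined by a single cut edge of length $0$, whose combined inheritance and distance data on $\partial B$ matches those of $B$, and with the sum of the levels of $B_1$ and $B_2$ bounded by the level of $B$. The natural construction splits $B$ along the cut implied by the partition, contracting one side of the split to a single node while dragging its contribution into a length-$0$ cut edge, and then verifying by direct computation of up-down path sums that boundary distances are preserved on each side.

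The hard part is precisely this realization step. Unlike the level-$1$ case treated in Section~\ref{sec:level-1}, where blobs are cycles and the internal structure is explicit, higher-level blobs admit many possible topologies and no general inversion of the ``blob $\mapsto$ boundary distance matrix'' map is available. The level bound is a further source of difficulty, since a naive split could distribute hybrid nodes inefficiently between $B_1$ and $B_2$; Propositions~\ref{cor:deg-2-3-blob-unid} and \ref{prop:3-cycle} suggest that certain local reductions can be used to absorb level increases, but a general argument is not immediate. A reasonable first attack would be to prove the conjecture for binary networks of level at most $2$, where the situation is well illustrated by Figures~\ref{fig:blobtree-level2} and~\ref{fig:blobtree-notbinary}, and then to seek an inductive argument on level using the swap lemma together with the level-$1$ identifiability results.
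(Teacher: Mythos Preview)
This statement is presented in the paper as a \emph{conjecture}, not a theorem: the authors offer no proof, only motivation from Examples~\ref{ex:blobtree-level2} and the example in Fig.~\ref{fig:blobtree-notbinary}, together with the level-$1$ result of Theorem~\ref{thm:level-1-blockcut-tree}. So there is no ``paper's own proof'' to compare against; your proposal is an attempt at an open problem.

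As a research plan your outline is sensible, and you have correctly located the obstruction. The inductive framework is sound: by Lemma~\ref{lem:extra-split} every extra split of $T$ over $\bt(N)$ is along some blob $B$, and the swap lemma is the right tool for replacing $B$ by something with a new cut edge while preserving $d_N$. But your proposal is not a proof, and you say as much. Two concrete gaps remain.

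First, the realization step is genuinely unproved. You need, for an arbitrary blob $B$ and an arbitrary split of $\partial B$ that happens to satisfy the $4$-point condition on $d_N$, a replacement subgraph with the same $\gamma_B$ and $d(\cdot,\cdot\mid B)$ on $\partial B$ that contains a cut edge inducing that split. Nothing in the paper supplies this: Proposition~\ref{prop:swap-4-cycle} handles only $|\partial A|=4$, and even there requires side conditions ($\gamma_A\equiv 1$, degree-$1$ boundary, long enough pendant edges). Your suggested construction (``contract one side, drag its contribution into a length-$0$ cut edge'') is not well-defined for a general $2$-edge-connected $B$: up-down paths between boundary nodes on the same side of the desired split may weave through the other side, so there is no obvious ``side'' to contract, and no reason the resulting boundary data would match.

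Second, the level bound is a separate difficulty that your sketch does not address. In the level-$1$ case the paper's $N^R$ works precisely because the refinements are blob-preserving (Lemma~\ref{lem:refinement-properties}): the non-trivial blobs are untouched and only trivial blobs are added. For higher level there is no analogue: splitting a level-$k$ blob into two pieces joined by a cut edge could in principle produce two pieces whose levels sum to more than $k$, and the reductions in Propositions~\ref{cor:deg-2-3-blob-unid} and~\ref{prop:3-cycle} only shrink specific small structures, not arbitrary level excess. Your closing suggestion to first settle the binary level-$2$ case is reasonable, but even there you would need new arguments beyond what the paper provides.
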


\begin{figure}
  \centering
  \includegraphics[scale=1.5]{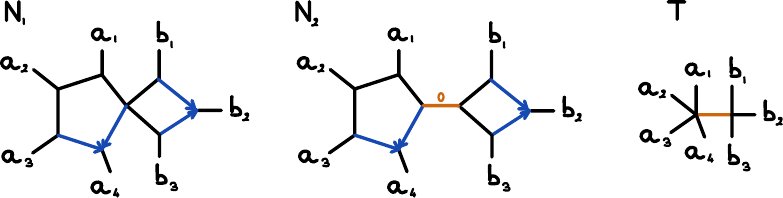}
  \caption{Left: example of a non-binary level-2 network $N_1$.
  Middle: $N_2$ is one possible resolution of $N_1$.
  $N_2$  is of level 1, with an extra cut edge of length $0$
  (in orange). Right: tree of blobs $T$ of $N_2$.
  If, for example, $\ell(e)=1$ for all edges in $N_1$ and $\gamma(e)=0.5$
  for all hybrid edges, then the
  distance split tree for $N_1$ is $T$, which is a
  strict refinement of $N_1$'s tree of blobs (a star).
  Note that the other resolutions of $N_1$ have a star as their tree of blobs.
  }
  \label{fig:blobtree-notbinary}
\end{figure}

In Theorem~\ref{thm:level-1-blockcut-tree} below (proved in section~\ref{sec:ident-blob-tree}),
we add assumptions to identify the tree of blobs exactly.
When we limit the network to be of level 1,
we characterize the distance split tree exactly:
it is the tree of blobs refined by extra edges to partially resolve polytomies
adjacent to blobs.

\begin{theorem}
\label{thm:level-1-blockcut-tree}
  Let $N$ be a level-$1$ network with internal tree edges of positive
  length and with no degree-$2$ blob.
  Then the distance split tree of $N$ is the tree of blobs of $N^R$,
  where $N^R$ is the network obtained as follows.
  For each non-trivial blob $B$ in $N$ and each node $u$ in $B$,
  let $\{a_1u,\ldots,a_ku\}$ be the set of edges of $u$ that
  are not in $B$.
  If $k\geq 2$, then refine the polytomy at $u$ by
  creating a new node $u'$, adding tree edge $uu'$ of length $0$;
  and disconnecting each $a_i$ from $u$ and connecting it to $u'$.
  That is, for each $i = 1, \dots, k$, remove $a_iu$ and create
  tree edge $a_iu'$.
\end{theorem}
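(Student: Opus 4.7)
The plan is to identify the set of splits on $X$ that are induced by the cut edges of $N^R$, and show that this set coincides with the set of splits that define the distance split tree (per Definition~\ref{def:dst}). Since Theorem~\ref{ebt-refine-identify-thm} guarantees that the distance split tree is a refinement of $\bt(N)$, the question reduces to characterizing exactly which extra splits beyond those from $\bt(N)$ appear. I will show that these extra splits are precisely the ones introduced by the new refinement edges $uu'$ in $N^R$.

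First, I would enumerate the cut-edge splits of $N^R$. These come in two families: (a) the cut edges already present in $N$, which are either pendant edges to leaves or internal cut edges of positive length, and (b) the refinement edges $uu'$ of length $0$, one for each blob node $u$ whose number $k$ of external neighbors is at least $2$. For family (a), the corresponding splits are inherited from $\bt(N)$ and are therefore captured by the distance split tree via Theorem~\ref{ebt-refine-identify-thm}. For family (b), let $A$ be the set of leaves reachable from $u$ through the external edges $a_1u,\dots,a_ku$ and $B = X\setminus A$; then every up-down path from $a\in A$ to $b\in B$ must traverse $u$, since removing $u$ disconnects $A$ from $B$ in the underlying graph of $N^R$. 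Consequently $d(a,b) = d(a,u) + d(u,b)$ additively, and the two ``cross" sums coincide: $d(a_1,b_1)+d(a_2,b_2) = d(a_1,b_2)+d(a_2,b_1)$. The Buneman-type isolation of the split $A\mid B$ then reduces to $\beta_{b_1}+\beta_{b_2}-d(b_1,b_2)$ (with $\beta_b := d(b,u)$), which is strictly positive as long as $b_1,b_2$ can be chosen in different ``directions" from $u$. This is achievable because the non-trivial blob containing $u$ together with the surrounding tree structure and positive internal tree edge lengths ensure that at least two such directions exist and contribute positive distance from $u$.

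Second, I would show that no additional splits are detected. Any split not corresponding to a cut edge of $N^R$ must ``cut through" a blob; in a level-1 network, each blob is a single cycle $C$, so a cycle-cutting split must sever $C$ at two cycle edges, splitting the cycle nodes into two arcs. For such a split $A\mid B$, I would select test leaves $a_1,a_2\in A$ and $b_1,b_2\in B$ attached via cut edges to cycle nodes chosen near the ``ends" of both arcs. Using the explicit formula for average distances on a cycle, where paths between leaves on opposite sides contribute $\gamma$-weighted sums over the two arcs of the cycle, I would compute the three pairwise sums $d(a_1,a_2)+d(b_1,b_2)$, $d(a_1,b_1)+d(a_2,b_2)$, and $d(a_1,b_2)+d(a_2,b_1)$, and show that the circular structure forces the maximum of the latter two to equal the first, collapsing the Buneman isolation index to zero. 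This degeneracy follows from the linearity of $\gamma$-weighted combinations across the two arcs, and it shows that the cycle-cutting split cannot be a split of the distance split tree.

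The main obstacle is the second step: verifying that cycle-cutting splits systematically fail the detection criterion for any choice of test leaves, for cycles of arbitrary length with arbitrary attached subtrees and arbitrary $\gamma$-values on the two hybrid edges. This requires a careful case analysis on how the two arcs of the cycle contribute to the four relevant distance sums, and uses crucially that a single reticulation (level-1 assumption) produces only two paths between any pair of cycle nodes, so that the four-point sums decompose linearly into arc-length contributions weighted by $\gamma$ and $1-\gamma$. Combining both directions then yields the equality between the distance split tree of $N$ and $\bt(N^R)$.
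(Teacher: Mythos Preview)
Your proposal has a genuine gap: it conflates $\Sigma(d)$ (splits satisfying the 4-point condition) with $\Sigma'(d)$ (splits in $\Sigma(d)$ that are compatible with \emph{every} split in $\Sigma(d)$). The distance split tree is built from $\Sigma'(d)$, not $\Sigma(d)$, and the compatibility filter does real work here that your two steps do not address.

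Concretely, in Step~1 for family~(b), establishing additivity through $u$ and then $\beta_{b_1}+\beta_{b_2}-d(b_1,b_2)\geq 0$ only places the sibling-group split in $\Sigma(d)$; it does not show compatibility with all of $\Sigma(d)$, which is what membership in $\Sigma'(d)$ requires. In Step~2, you only treat ``cycle-cutting'' splits, but there are other splits in $\Sigma(d)\setminus\Sigma'(d)$ that your argument misses: (i) splits along a \emph{trivial} blob (a tree-node polytomy of degree $\geq 4$) all satisfy the 4-point condition, yet none survive to $\Sigma'(d)$ because they are mutually incompatible; (ii) splits $S\mid\bar S$ with $S$ a \emph{proper} union of sibling sets at a cycle node also satisfy the 4-point condition but must be excluded by incompatibility with other such splits. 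The paper handles exactly these cases in Lemma~\ref{lem:polytomy-split}, which first characterizes $\Sigma(d)$ along each blob and then uses the compatibility criterion to pare it down to $\Sigma'(d)$. A smaller point: your mechanism for why cycle-cutting splits fail is not quite right. The failure is not that ``the maximum of the latter two equals the first'' (isolation index zero); rather, on a 4-cycle with positive tree-edge lengths the three quartet sums are pairwise distinct (Example~\ref{ex:4pc-4sunlet}), so the \emph{equality} in the 4-point condition breaks, not merely the inequality.
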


\begin{remark}
  The assumption that internal tree edges have positive length is a
  weak requirement, because a tree edge of length $0$ can be
  contracted to create a polytomy.
\end{remark}

In the proof, we show that $N^R$ is indeed a valid semidirected
network, with identical average distances as $N$.
The distance split tree from $N^R$ is in fact the
block-cut tree of $N$
(see ``block-cutpoint trees'' \cite[][p.36]{harary1971graph}),
after suppressing its degree-2 nodes. For
  example, the network in Fig.~\ref{fig:blobtree-example} (left) has a
  distance split tree with $2$ extra edges (in orange, bottom right)
  compared to its tree of blobs (top right).
If we further assume that the network is binary, then $N=N^R$ and the
distance split tree equals the tree of blobs:

\begin{corollary}
  \label{thm:level-1-blob-tree}
  For a binary level-1 semidirected network with
  internal tree edges of positive length and
  no degree-2 blob, the
  tree of blobs can be constructed from average distances.
\end{corollary}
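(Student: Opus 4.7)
The plan is to deduce this corollary directly from Theorem~\ref{thm:level-1-blockcut-tree}, which states that the distance split tree of $N$ coincides with the tree of blobs of a refined network $N^R$, together with Theorem~\ref{ebt-refine-identify-thm}, which guarantees that the distance split tree is constructible from pairwise average distances. So the whole task reduces to showing that, when $N$ is binary, the refinement in Theorem~\ref{thm:level-1-blockcut-tree} does nothing, i.e.\ $N^R = N$.

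For this main step, I would consider an arbitrary non-trivial blob $B$ of $N$ and an arbitrary node $u \in B$, and verify that the refinement criterion, namely that $u$ has $k \geq 2$ incident edges outside $B$, is never met. Since $B$ is non-trivial it has more than one node and contains no leaves (leaves have degree $1$ and so form their own trivial blob); hence $u$ has degree exactly $3$ in $N$ by the binary assumption. Furthermore, since $B$ is $2$-edge-connected, $u$ has at least $2$ edges inside $B$: removing either single edge of $u$ lying in $B$ cannot disconnect $u$ from the rest of $B$. Therefore $k = 3 - \deg_B(u) \leq 1$, and the refinement operation is never triggered. Consequently $N^R = N$ and its tree of blobs is just the tree of blobs of $N$.

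It remains to verify the hypotheses of Theorem~\ref{ebt-refine-identify-thm}: no degree-$2$ blob and no internal cut edge of length $0$. The first is part of the assumptions. For the second, in a level-$1$ network every non-trivial blob is a cycle containing exactly one hybrid node and both of its incoming hybrid edges, so every hybrid edge lies inside some blob; hence every internal cut edge is an internal tree edge, which has positive length by hypothesis. Combining Theorems~\ref{ebt-refine-identify-thm} and~\ref{thm:level-1-blockcut-tree} then yields that the tree of blobs of $N$ equals the distance split tree, and is therefore reconstructible from average pairwise distances. I do not anticipate any real obstacle; the one place that needs care is the claim $\deg_B(u) \geq 2$ for $u$ in a non-trivial $2$-edge-connected subgraph, which is an immediate consequence of the definition of $2$-edge-connectedness.
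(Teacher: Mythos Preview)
Your proposal is correct and follows the same approach as the paper, which simply notes that binarity implies $N=N^R$ and then invokes Theorem~\ref{thm:level-1-blockcut-tree}. Your extra care in verifying $\deg_B(u)\geq 2$ and the positivity of internal cut-edge lengths fills in details the paper leaves implicit, but the argument is otherwise identical.
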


\medskip
In a binary level-$1$ network, each blob is a cycle and we can isolate
a blob by sampling an appropriate subset of tips. On this subset, the
induced subnetwork, after removing degree-$2$ blobs, is a ``sunlet''
\cite{gross18_distin_phylog_networ}, that is, a semidirected network
with a single cycle and a single pendant edge attached to each node in
the cycle.
In section~\ref{sec:k-sunlet}, we tackle the identifiability of sunlets.
Some of these results are special cases of those in
\cite{willson13_recon_certain_phylog_networ_from}
(with a simpler proof strategy).  One exception is
the case of $4$-sunlets, which is excluded by the assumptions in
\cite{willson13_recon_certain_phylog_networ_from}: In general, the
$4$-sunlet and its metric $(\ell,\gamma)$ are not identifiable, but the
unrooted $4$-sunlet is a structure that can be detected in the tree of blobs.

In section~\ref{sec:k-sunlet}, we can characterize all the $4$-sunlets
(and their parameters)
that give rise to a given average distance matrix.
Ideally one would choose a ``canonical'' $4$-sunlet as the
representative of all these distance-equivalent sunlets.  However, we
did not find a single sensible choice for such canonical $4$-sunlet.
Consequently we opt to use a separate split-network type of
representation for these $4$-sunlets.

Section~\ref{sec:level-1} introduces \emph{mixed networks},
in which some parts are semidirected and 4-cycles are undirected.
In short, a mixed network encodes the reticulation node and edges in $k$-cycles for $k\geq 5$,
and the unrooted topology in 4-cycles, without identification of the exact
placement of the hybrid node in a 4-cycle.
In section~\ref{sec:level-1}, we define this representation rigorously
and combine results from sections~\ref{sec:ident-blob-tree} and \ref{sec:k-sunlet}
to prove the following main result. 

\begin{theorem}[identifiability of mixed network representation for level-$1$ networks]
\label{thm:level1-4up}
  Let $N_1$ and $N_2$ be level-$1$ semidirected networks on $X$ with no $2$ or
  $3$-cycles or degree-$2$ nodes, and with internal tree edges of positive lengths.
  Let $N_i^*$ be the mixed network representation of $N_i$ after
  zipping up its reticulations for $i = 1, 2$.
  Then $d_{N_1} = d_{N_2}$ implies that $N_1^* = N_2^*$.
\end{theorem}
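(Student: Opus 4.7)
My plan is a divide-and-conquer strategy, combining the global tree of blobs identifiability with local identifiability of each blob via subnetwork sampling. The overall scheme: first recover the tree of blobs, then identify the structure and parameters at each blob by isolating it with a carefully chosen subset of taxa, and finally recover cut-edge lengths from pairwise distances as in ordinary tree reconstruction.

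First, the hypotheses that $N_1, N_2$ are level-1 with positive internal tree-edge lengths and no 2- or 3-cycles ensure no degree-2 blob. Therefore Corollary~\ref{thm:level-1-blob-tree} applies: from $d_{N_1}=d_{N_2}$, the tree of blobs $T$ of $N_1$ equals that of $N_2$, inducing a canonical bijection between the blobs and between the cut edges of the two networks. Since each non-trivial blob of a level-1 network is a $k$-cycle for some $k \geq 4$ (by our 3-cycle exclusion), it will suffice to identify the structure at each such cycle (as a directed hybrid cycle with parameters for $k\geq 5$, or as a split mixed structure for $k=4$), together with all cut-edge lengths.

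Next, I isolate each blob $B$ by choosing $Y_B \subset X$ consisting of one representative taxon from a pendant subtree at each cycle node. Because $T$ is known, the partition of $X$ into subtrees hanging off the nodes of $B$ is identifiable from the distances, so $Y_B$ is determined. The induced subnetwork $(N_i)_{Y_B}$, after shrinking its degree-2 and degree-3 blobs via Proposition~\ref{cor:deg-2-3-blob-unid} and suppressing degree-2 tree nodes, is a $k$-sunlet whose cycle-edge parameters agree with those of $B$ in $N_i$ and whose pendant-edge lengths encode the distances from each cycle node to its chosen taxon. The average distances on $Y_B$ are the restrictions of $d_{N_i}$, hence equal for $i=1,2$. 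Applying the sunlet results of Section~\ref{sec:k-sunlet}: for $k\geq 5$ the entire sunlet (hybrid-node placement, hybrid-edge directions, cycle edge lengths and inheritance) is identifiable; for $k=4$ only the mixed representation is, which is exactly what $N_i^*$ records. Because each $N_i$ has been zipped up before taking the mixed representation, all hybrid edges within each cycle have length $0$, and the identified parameters then pin down the local cycle of $N_i^*$ at each blob. Cut-edge lengths of $T$ are recovered from appropriate differences of distances between taxa separated by the cut edge, as in standard tree distance methods, and the pendant-edge lengths of each sunlet then determine the remaining lengths of cut edges incident to each blob.

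The main obstacle will be cleanly matching the sunlet recovered from $(N_i)_{Y_B}$ back to the cycle $B$ inside $N_i^*$. Specifically, one must verify that the hybrid-node location and the inheritance parameters on the extracted sunlet correspond exactly to the hybrid-node location and inheritance of the cycle in $N_i^*$, after accounting for the fact that zipping-up moves length from the hybrid edges to the child edge via \eqref{eq:zip-up} and that taking the induced subnetwork may absorb pieces of cut edges into pendant lengths. The case $k=4$ requires additional care: the full sunlet is not identifiable, so the argument must invoke the precise definition of the mixed representation from Section~\ref{sec:level-1} and show that two distance-equivalent 4-sunlets always share the same mixed representation, which is exactly the characterization provided by the 4-sunlet analysis of Section~\ref{sec:k-sunlet}. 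Once these local matchings are combined with the global matching of $T$ and the identification of cut-edge lengths, the equality $N_1^* = N_2^*$ follows.
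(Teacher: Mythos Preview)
Your overall strategy matches the paper's proof: recover the tree-of-blobs-like structure, isolate each blob as a sunlet via taxon subsampling, apply the sunlet identifiability results of Section~\ref{sec:k-sunlet}, and then recover cut-edge lengths. Two points require correction.

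First, you invoke Corollary~\ref{thm:level-1-blob-tree}, but that corollary assumes the network is \emph{binary}, whereas Theorem~\ref{thm:level1-4up} does not. The hypothesis ``no degree-$2$ nodes'' excludes degree-$2$ vertices but allows polytomies. In fact $\bt(N_1)=\bt(N_2)$ can fail: as illustrated in Fig.~\ref{fig:4cycle}, a $4$-cycle with a polytomy at one of its nodes can share average distances with a binary $4$-cycle, and these two networks have different trees of blobs (a node of degree $5$ versus degree $4$ at the blob). The paper instead invokes Theorem~\ref{thm:level-1-blockcut-tree}, which applies without the binary assumption, and works throughout with the canonical refinements $N_i^R$: one has $\bt(N_1^R)=\bt(N_2^R)$ (both equal the distance split tree), giving the required bijection between blobs and cut edges; only at the end are the length-$0$ tree edges contracted to pass from $N_i^{R*}$ to $N_i^*$.

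Second, your recovery of cut-edge lengths is underspecified. The four-point formula $\tfrac12\big(d(x_1,x_3)+d(x_2,x_4)-d(x_1,x_2)-d(x_3,x_4)\big)$ gives $\ell(e)$ only when the induced subnetwork on $\{x_1,\ldots,x_4\}$ is (after shrinking degree-$2$ and degree-$3$ blobs) a tree with $e$ as its internal edge. If an endpoint $u$ of $e$ lies in a $k$-cycle ($k\geq 5$) adjacent to the hybrid node, the four-taxon subnetwork retains a $3$-cycle at $u$; the paper handles this explicitly by shrinking that $3$-cycle via Proposition~\ref{prop:3-cycle} and subtracting the resulting correction term $t_u$, which is computable because the cycle-edge parameters were already identified in the previous step. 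Your alternative suggestion, to read cut-edge lengths off the pendant edges of the extracted sunlets, does not work directly either: those pendant lengths aggregate contributions from the entire path to the chosen representative leaf (including shrunk degree-$2$ blobs along the way), not merely the single adjacent cut edge.
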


\noindent
Note that for a network $N$ satisfying the conditions above
  and its mixed network representation $N^*$, $N$ and $N^*$ have the
same unrooted topology, except that polytomies adjacent to $4$-cycles
in $N$ may be partially resolved in $N^*$.
If we consider the space of level-1 networks with no 4-cycles, we
obtain the following result as a special case.

\begin{corollary}[identifiability of zipped-up version of level-1
  networks]
  \label{thm:level1-5up}
    Let $N_1$ and $N_2$ be zipped-up level-$1$ semidirected networks on
    $X$ with tree edges of positive lengths, without any $2$, $3$ or
    $4$-cycles, and without degree-$2$ nodes.
    If $d_{N_1} = d_{N_2}$ then $N_1 = N_2$.
\end{corollary}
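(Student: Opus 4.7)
The plan is to derive this corollary directly from Theorem~\ref{thm:level1-4up}. First I would check that $N_1$ and $N_2$ satisfy every hypothesis of that theorem: they are level-$1$ semidirected networks on $X$, have no $2$- or $3$-cycles, have no degree-$2$ nodes, and have tree edges of positive length (hence in particular internal tree edges of positive length). The only extra assumption here, namely the absence of $4$-cycles, is strictly stronger than what the theorem requires, so the theorem applies. From $d_{N_1} = d_{N_2}$, Theorem~\ref{thm:level1-4up} then yields $N_1^* = N_2^*$, where $N_i^*$ is the mixed network representation of $N_i$ after zipping up its reticulations.

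Next I would argue that the zipping-up step produces no change. Since $N_1$ and $N_2$ are by assumption already zipped up, every hybrid edge already has length $0$ and the procedure of Definition~\ref{def:zip} does nothing: the formula \eqref{eq:zip-up} for $l_h$ simplifies to $\ell(hw)$ because the hybrid edge lengths $\ell(u_ih)$ are all $0$, and no auxiliary tree edge needs to be inserted under any hybrid node. Thus the input to the mixed representation is exactly $N_i$.

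The last step is to show that when a level-$1$ network has no $4$-cycles, its mixed representation coincides with the network itself. By the construction recalled in section~\ref{sec:level-1}, the split-subgraph replacement in the mixed representation only acts on $4$-cycles; every cycle of length at least $5$ keeps its semidirected structure, together with hybrid edge directions, lengths, and inheritance parameters. In the absence of $4$-cycles, no modification is triggered, so $N_i^* = N_i$ for $i = 1, 2$. Combining this with $N_1^* = N_2^*$ gives $N_1 = N_2$.

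Since the argument is a pure reduction to Theorem~\ref{thm:level1-4up}, there is no substantive obstacle. The only point that deserves care is confirming that the mixed representation genuinely acts as the identity on level-$1$ networks without $4$-cycles, which I expect to read off directly from the construction in section~\ref{sec:level-1}; and checking that ``zipped up'' plus ``no degree-$2$ nodes'' together prevent the zip-up procedure from inserting new nodes, as described above.
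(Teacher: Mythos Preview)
Your proposal is correct and matches the paper's approach exactly: the paper simply states that this corollary is obtained ``as a special case'' of Theorem~\ref{thm:level1-4up} without writing out any details, and you have filled in precisely the right ones. The only tiny refinement is that step~1 of Definition~\ref{def:zip} is not triggered at all when a hybrid node is already zipped up (the condition ``is not zipped up'' guards the whole step), so you need not separately argue about auxiliary tree edges; but this does not affect the validity of your argument.
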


\subsection{Biological relevance}

In practice, average distances between pairs of taxa need to be estimated
from data. \cite{allman2022-identifiability-logdet-net} studied the identifiability
of the network topology using the log-det distance,  for level-1 networks and
under a coalescent model.
Future work could study identifiability of the network and its parameters
under various models and for various methods to estimate evolutionary distances,
such as the average coalescence time between pairs of taxa
\cite{liu2009_STAR_STEAC},
average internode distance \cite{liu2011_NJst},
or the $f_2$ statistic when many genomes are available from each species
\cite{peter2016}.

The most frequent reticulations are expected
between incipient species, or sister species that just split from each other
and have yet to achieve reproductive isolation. Our work shows
that these most frequent reticulations are not identifiable from
average distances.
Only the less frequent events between more distant species can be
detected using average distances.

Our work also shows a strong effect of taxon sampling, as observed with
real data \cite{conover2019_malvaceae,karimi2020_baobab}.
Dense taxon sampling is critical to avoid blobs of degree 2 or 3.
For example, if a hybridization forms a cycle of degree 5 in a full level-1 network,
then it is necessary to sample at least one taxon from across each of the 5 cut edges
adjacent to the cycle, for the reticulation to be identifiable from average distances.
Conversely, it may be useful to reduce taxon sampling strategically. Reducing the degree
of some blobs to be 3 or less in the subnetwork could be a strategy to obtain a more
resolved tree of blobs on the reduced taxon set.
When the true network is of level greater than 1, different taxon subsamples may lead
to different trees of blobs, and to the detection of different reticulation events by methods
that assume a level-1 network. While this sensitivity to taxon sampling may be
disconcerting, subsampling can decrease the level
and bring strength to methods that require low-level networks like
SNaQ \cite{Sol_s_Lemus_2016_infer} and NANUQ \cite{Allman_2019}.

Pairwise distances are not unique in causing some features to lack identifiability.
From quartet concordance factors for example, some 3-cycles cannot be identified,
and the hybrid node position is not always identifiable in a 4-cycle
\cite{Sol_s_Lemus_2016,2019banos,solislemus2020_identifiability}.
Software for network inference should provide information on the class of equivalent networks
with identical optimal likelihood, e.g. list the multiple ways to place the reticulation in a 4-cycle.
Bayesian approaches could report on sets of networks that cannot be distinguished
from the data, and whose relative posterior probabilities are solely influenced by the prior.
Interactive visualization tools could facilitate the exploration of networks with equivalent scores,
so practitioners could avoid interpretations that hinge on a strict subset of these networks.
If software is misleadingly presenting a single network as being optimal without presenting
the whole class of networks with equivalent fit, then undue confidence could be
placed on some interpretation.

\section{Proofs related to non-identifiable features}
\label{sec:nonid-struct}

\subsection{Hybrid zip-up}
\label{sec:zipped-up}

To prove Proposition~\ref{prop:zipped-up}, we first need
the following definition and proposition.

\begin{definition}[displayed tree]
\label{def:displayedtree}
  Let $N$ be a directed or semidirected network. For
  hybrid node $h$, let $E_H(h)$ be its parent hybrid edges.
  Let $T$ be the graph obtained by keeping one hybrid edge
  $e\in E_H(h)$ and deleting the remaining edges in $E_H(h)$,
  for each hybrid node $h$ in $N$.
  Then $T$ is a tree, and is called a \emph{displayed tree}.
  The distribution on displayed trees generated by $N$ is
  the distribution obtained by keeping $e\in E_H(h)$
  with probability $\gamma(e)$, independently across $h$.
\end{definition}
\noindent
Note that $T$ is a tree because it is a DAG (considering $N$ as rooted),
all the nodes are still reachable from the root, and all nodes have in-degree at
most $1$.

\begin{proposition}
  \label{prop:expectation-distance}
  Let $N$ be a (directed or semidirected) network.
  For two nodes $u$ and $v$ and tree $T$,
  let $q_{uv}(T)$ be
  the unique path between $u$ and $v$ in $T$.
  Then, for a given up-down
  path $p$ between $u$ and $v$ in $N$,
  \[ \PP(q_{uv}(T) = p) = \gamma(p)\]
  where the probability is taken over a random tree $T$ displayed in $N$.
  As a result,
  \[d(u, v) = \EE \ell(q_{uv}(T))\] where $d(u,v)$ is the distance
  between $u, v$, and $\ell(q)$ is the length of path $q$.
\end{proposition}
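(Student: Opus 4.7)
The plan is to establish the probability identity first, then derive the expected distance formula by summing over up-down paths. Throughout, let $T$ denote the random displayed tree generated by $N$ as in Definition~\ref{def:displayedtree}.

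First I would verify that $q_{uv}(T)$ is well defined and is always an up-down path of $N$. Since at each hybrid node $h$ the procedure retains exactly one parent hybrid edge, each non-root node of $T$ has in-degree $1$, so $T$ is a rooted tree on the same vertex set as $N$ and $q_{uv}(T)$ is unique. In the rooted tree $T$, the path from $u$ to $v$ ascends to the LCA and then descends; in particular, at every internal node of the path at most one adjacent path-edge is incoming, so there is no v-structure. Because $T$ is obtained from $N$ only by deleting hybrid edges, this path also sits inside $N$ and is therefore an up-down path of $N$ according to Definition~\ref{def:up-down-path}.

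Next, for a fixed up-down path $p = u_0 u_1 \dots u_n$ between $u$ and $v$, I would show that $\PP(q_{uv}(T) = p) = \gamma(p)$. The key observation is that $q_{uv}(T) = p$ holds exactly when every edge of $p$ survives in $T$: tree edges are always present, while a hybrid edge $e \in p$ with child hybrid node $h$ is present if and only if $e$ is the parent edge chosen at $h$, which happens with probability $\gamma(e)$, independently across $h$. The crucial bookkeeping step is to verify that distinct hybrid edges of $p$ have distinct child hybrid nodes, so the independence across hybrid nodes in Definition~\ref{def:displayedtree} translates directly to independence across the relevant edge-events. This is precisely the no-v-structure condition: if two hybrid edges of $p$ shared the same child $u_i$, then the segment $u_{i-1}u_iu_{i+1}$ would consist of two hybrid edges into $u_i$, contradicting that $p$ is up-down. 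Thus
\[
\PP(q_{uv}(T)=p) \;=\; \prod_{\substack{e \in p\\ e\text{ hybrid}}} \gamma(e) \;=\; \prod_{e \in p} \gamma(e) \;=\; \gamma(p),
\]
using $\gamma(e)=1$ on tree edges.

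Finally, since $q_{uv}(T)$ is almost surely an up-down path between $u$ and $v$, the probabilities $\PP(q_{uv}(T)=p)$ sum to $1$ over $p \in P_{uv}$, and I would conclude by linearity of expectation:
\[
\EE\,\ell(q_{uv}(T)) \;=\; \sum_{p \in P_{uv}} \ell(p)\,\PP(q_{uv}(T)=p) \;=\; \sum_{p \in P_{uv}} \gamma(p)\,\ell(p) \;=\; d(u,v).
\]
I expect the only genuine subtlety to be the independence step: converting a product over hybrid \emph{edges} of $p$ into a product over distinct hybrid \emph{nodes}, which is exactly where the up-down (no v-structure) hypothesis is needed. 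The rest is essentially bookkeeping and the definition of average distance.
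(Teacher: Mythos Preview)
Your proposal is correct and follows essentially the same approach as the paper: both identify that $q_{uv}(T)=p$ iff all hybrid edges of $p$ are retained in $T$, and both use the no-v-structure condition to ensure distinct hybrid edges of $p$ have distinct child hybrid nodes, yielding independence and the product $\gamma(p)$. You add a bit more detail (explicitly checking that $q_{uv}(T)$ is an up-down path and spelling out the expectation via linearity), but the core argument is the same.
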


\begin{proof}
  Since rooting a semidirected network does not change the process of
  generating displayed trees, nor up-down paths in the network, it
  suffices to consider the case when $N$ is a directed network.
  Let $p$ be an up-down path in $N$.
  Let $E = \{e_1, \dots, e_n\}$ be the set of hybrid edges in $p$,
  and let $H = \{h_1, \dots, h_n\}$ where $h_i$ is the child of $e_i$.
  All the $h_i$'s are distinct hybrid nodes because the up-down path
  $p$ may not go through partner hybrid edges (no v-structure).
  It suffices to show that $p = q_{uv}(T)$ if and only if for
  each $h_i \in H$, $e_i$ is kept.

  The ``only if'' part is evident since $e_i$ has to be in $T$
  for $p$ to be in $T$.  Now consider a displayed tree $S$ where for
  all $h_i \in H$, $e_i$ is kept.  Because all of the edges of $p$ are
  in $S$, $p$ is a path between $u$ and $v$ in $S$; and since $S$ is
  a tree, $p$ is the unique path between $u$ and $v$ in $S$.
\end{proof}

\begin{proof}[Proof of Proposition~\ref{prop:zipped-up} (hybrid zip-up)]
  If $h$ has more than one child, then step 1 in Definition~\ref{def:zip}
  does not modify the set of up-down paths, other than inserting
  $hw$ to the paths containing $u_ih,hc_j$ for some parent $u_i$ and
  child $c_j$ of $h$. Since the length of $hw$ is set to 0, step 1 does not
  change the length of up-down paths, nor average distances.
  If $h$ has a single child $w$,
  we first assume that $h$ is the only hybrid node in
  $N$.
  Let $a_i = \ell(u_ih)$ and $\gamma_i = \gamma(u_ih)$ for $i = 1, \dots, n$
  and let $t = \ell(hw)$.
  Let $x, y$ be two tips of $N$.  There are two cases:

  \begin{enumerate}
  \item If some up-down path $p$ between $x, y$ does not contain $h$, then
    $p$ must be the unique up-down path between $x$ and $y$.  This is
    because $p$ is a path on the $n$ displayed trees of $N$.
    Consequently the distance between $x$ and $y$ does not depend
    on $a_1,\ldots,a_n$ or $t$, and the statement is vacuously true.
  \item If all up-down paths between $x, y$ contain $h$, then
    there are exactly $n$ up-down paths $q_i$, with $q_i$ containing
    the edge $u_ih$.  In this case the average distance is
    \[d(x,y) = \sum_{i=1}^n\gamma_i(a_i + l_i + t) = l_h +
      \sum_{i=1}^n\gamma_i l_i ,\]
    where $l_i = \sum_{e \in q_i, e \notin \{u_1h,\ldots,u_nh, hw\}}\ell(e)$
    does not depend on $(a_1, \dots, a_n, t)$.
  \end{enumerate}
  For the general case, we use Proposition~\ref{prop:expectation-distance}.
  Let $T$ be a random displayed tree in $N$ and $D_{xy} = D_{xy}(T)$
  be the distance between $x$ and $y$ in $T$ (a random quantity).
  Let $R$ be the set of hybrid edges kept in $T$ at all hybrid nodes except for $h$.
  By conditioning on $R$, we reduce the problem to a network with a single
  reticulation: $\EE[D_{xy}  \mid  R]$ is the average distance on a (random) network
  with a single reticulation at $h$. By the above argument, $\EE[D_{xy}  \mid  R]$ only
  depend on $(a_1, \dots, a_n, t)$ through $l_h$.  Taking expectation
  again gives the result.
\end{proof}

\subsection{Swap lemma and related results}
\label{sec:swap}

In this section we present a general swap lemma and apply it to prove the
non-identifiability of specific features.
We first
introduce some necessary definitions and notations.

Let $A$ be a subgraph of a semidirected network $N$ on $X$.
$A$ is \emph{hybrid closed} if for any hybrid
edge in $A$, all of its partner edges are in $A$.  We use $\partial A$
to denote the \emph{boundary} of $A$ in $N$, defined as
the set of nodes in $A$ that are either leaves, or are
incident to an edge not in $A$.
For two nodes $a, b \in \partial A$, we define
\[
  \gamma_A(a, b) = \sum_{p: \; a \leftrightarrow b \text{ in } A}\gamma(p),
\]
where $p$ ranges over the set of up-down paths from $a$ to $b$ that
lie entirely in $A$.  Then we define the \emph{conditional distance in $A$}
as
\[ d(a, b \mid A) =
  \begin{cases}\displaystyle
    \frac{1}{\gamma_A(a, b)}\;\sum_{p: \; a \leftrightarrow b \text{ in } A} \gamma(p) \ell(p) &\text{if }\gamma_A(a, b) >
    0 \\
    0 & \text{if }\gamma_A(a, b) = 0
    \end{cases}
 \]
where in the sum $p$ again ranges over up-down paths
between $a, b$ that lie in $A$.
The following lemma says that average distances are unchanged if we swap
$A$ with another subgraph of identical boundary, provided that
$\gamma_A$ and $d(.,. \mid A)$ are preserved on $\partial A$.

\begin{lemma}[subgraph swap]
  \label{lem:swap}
  Let $N_1$ and $N_2$ be metric semidirected networks on the same leaf set $X$,
  with node sets $V(N_i) = V_{A_i}\sqcup V_B$ and edge set
  $E(N_i) = E_{A_i} \sqcup E_B$ for $i=1,2$, such that
  $A_1$ and $A_2$ are hybrid-closed subgraphs with identical boundary 
  in $N_1$ and $N_2$ respectively: $\partial A_1 = \partial A_2$, denoted as $\partial A$.
  If
  $\gamma_{A_1}(a, b) = \gamma_{A_2}(a, b)$ and
  $d_{N_1}(a, b \mid A_1) = d_{N_2}(a, b \mid A_2)$
  for every $a, b \in \partial A$, then
  $d_{N_1} = d_{N_2}$ on $X$.
\end{lemma}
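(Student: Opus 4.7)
The plan is to compute $d_{N_i}(x,y)$ for every pair of tips $x,y \in X$ by decomposing each up-down path between them into maximal excursions inside the swapped subgraph $A_i$ and maximal segments outside it, then showing that the resulting sum depends only on the shared part $B$ and on the boundary quantities $(\gamma_{A_i}, d(\cdot,\cdot \mid A_i))$ that are equal by hypothesis.

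Fix tips $x$ and $y$. Every up-down path $p$ from $x$ to $y$ in $N_i$ admits a unique decomposition $p = \pi_0 \circ \rho_1 \circ \pi_1 \circ \cdots \circ \rho_k \circ \pi_k$, where each $\rho_j$ is a maximal sub-path whose edges lie in $A_i$ and each $\pi_j$ is a (possibly empty) sub-path whose edges lie in $B$; by maximality, each $\rho_j$ begins and ends at boundary nodes $a_j,b_j \in \partial A$, producing a ``trace'' $\sigma = (a_1,b_1,\dots,a_k,b_k)$ (the first and last $\pi$-pieces may be empty, corresponding to $x$ or $y$ itself lying in $\partial A$). Each $\rho_j$ is then an up-down path in the subgraph $A_i$, and each $\pi_j$ an up-down path in $B$. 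I next want to verify that the only global constraint on such reassemblies is local: a v-structure at a boundary transition node $v \in \partial A$ could fail only if two consecutive hybrid edges of $p$ point into $v$, one in $A_i$ and one in $B$. But such a mixed pair is impossible, because these two edges would be partners, and hybrid-closedness of $A_i$ forces the partners of any hybrid edge in $A_i$ to lie in $A_i$. Consequently a trace $\sigma$ is admissible precisely when each inter-$A$ segment $b_j \leftrightarrow a_{j+1}$ is realized by some up-down path in $B$ (a condition on the shared part $B$) and each $a_j \leftrightarrow b_j$ is realized by some up-down path in $A_i$, i.e.\ $\gamma_{A_i}(a_j,b_j) > 0$; both conditions coincide between $N_1$ and $N_2$ by the hypothesis that $\gamma_{A_1} = \gamma_{A_2}$ on $\partial A$.

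Grouping paths by trace gives
\begin{equation*}
d_{N_i}(x,y) \;=\; \sum_\sigma \;\sum_{p:\,\mathrm{trace}(p)=\sigma} \gamma(p)\,\ell(p),
\end{equation*}
and within each trace the $\gamma$-weight is multiplicative across the pieces while the length is additive, so the inner sum equals $\sum_j D_j \prod_{i \neq j} G_i$, where $j$ indexes the pieces of the decomposition, $G_i$ is the total $\gamma$-weight of admissible realizations of piece $i$, and $D_i$ is the corresponding total of $\gamma \cdot \ell$. Every such $G_i$ or $D_i$ is either a quantity defined by up-down paths in $B$ between two nodes of $\partial A \cup \{x,y\}$, and therefore identical in $N_1$ and $N_2$ because $B$ is shared, or it equals one of $\gamma_{A_i}(a_j,b_j)$ or $\gamma_{A_i}(a_j,b_j)\,d_{N_i}(a_j,b_j \mid A_i)$, and therefore identical in $N_1$ and $N_2$ by hypothesis. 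Hence $d_{N_1}(x,y) = d_{N_2}(x,y)$ for every pair of tips. The main technical care will be the bookkeeping of the decomposition and the degenerate cases in which $x$ or $y$ already sits on $\partial A$ (so that some outer $\pi$-piece is empty); the conceptual heart of the argument is the v-structure check at boundary transitions, which is where hybrid-closedness of $A_i$ is essential.
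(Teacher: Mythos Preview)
Your approach is essentially the paper's: decompose each up-down path at its boundary crossings into $A$-segments and $B$-segments, group by trace, and observe that the resulting sums depend only on the shared part $B$ and on the boundary data $\gamma_A$ and $d(\cdot,\cdot\mid A)$. Your use of hybrid-closedness to rule out v-structures at boundary transitions is exactly the mechanism the paper relies on.

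The organizational difference is that the paper does not tackle the general case directly. It first proves the lemma under the extra assumption that $E_B$ contains no hybrid edges; in that case the subgraph on $V_B\cup\partial A$ with edge set $E_B$ is a forest, so for each trace the $B$-segments are uniquely determined, and the only freedom is in the $A$-segments. This makes the factorization over pieces lighter: only the $\rho_j$'s vary, and the product formula runs over $\prod_j \mathcal P_1((a_j,a_j'))$ rather than over all pieces. The general case is then obtained in one line by conditioning on the hybrid-edge choices in $E_B$ (which form a hybrid-closed set because $A$ is hybrid-closed) and invoking the displayed-tree expectation interpretation of average distances.

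Your direct argument lets both the $\rho_j$'s and the $\pi_j$'s vary simultaneously. The step ``the inner sum equals $\sum_j D_j\prod_{i\ne j}G_i$'' then presumes that \emph{every} combination of piece-realizations reassembles into a valid up-down path; you verify the absence of v-structures at the seams, but you do not address whether the concatenation visits distinct nodes (two different pieces might pass through a common internal node of $A_i$, or through a common boundary node not in the trace). The paper is also brief on this point for the $A$-segments, but its reduction removes the $B$-side of the problem entirely and confines any such worry to $A$, where the conditioning/displayed-tree viewpoint can absorb it. If you want to keep your one-shot argument, this simplicity check is the place that needs a sentence of justification.
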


\noindent

\medskip\noindent

\begin{proof}
  Given an up-down path $p$ and
  two nodes $a, b$ on the path, we write $a \udp{p} b$ for the segment
  of $p$ between $a, b$, which is an up-down path as well.

  We first prove the lemma when there are no hybrid edges in
  $E_B$. 
  For now we consider the distances in $N_1$.  To simplify notations,
  we shall write $N = N_1$ and $A = A_1$.  Let $x, y \in X$ be two
  tips, and $p$ an up-down path between them.  Then $p$ can be subdivided
  into segments that consists of consecutive edges in $E_B$, and segments
  of consecutive edges in $E_A$, the set of edges in $A$.
  Traversing $p$ from $x$ to $y$, let the $j$th
  segment in $E_A$ be $a_j \udp{p} a_j'$, where $a_j,a_j'\in\partial A$.
  Note $a_1 = x$ and $a_k' = y$ are possible for some $k$,
  if $x$ or $y$ is in $A$. Given a
  $2k$-tuple $\bm{a} = (a_1, a_1', \dots, a_k, a_k') \in \partial A^{2k}$, let
  $\mathcal{P}_k(\bm{a})$ denote the set of
  up-down paths $p$ between $x$ and $y$
  with exactly $k$ segments in $A$, and such that segment $j$
  enters and exits $A$ at $a_j,a_j'$: $a_j \udp{p} a_j'$.
  The set of up-down paths between $x$ and $y$ can then be written as
  \[\bigsqcup_{k\geq 0}\bigsqcup_{\bm{a} \in \partial A^{2k}} \mathcal{P}_k(\bm{a}).\]
  Consequently, we have
  \begin{equation}\label{eqn:d-k-decomp}
    d(x, y) = \sum_{k \geq 0} \sum_{\bm{a} \in \partial A^{2k}} \sum_{p \in
      \mathcal{P}_k(\bm{a})} \gamma(p) \ell(p)
  \end{equation}
  For a given $\bm{a} \in\partial A^{2k}$ and $p \in \mathcal{P}_k(\bm{a})$, consider
  the segments in $E_B$, which are the segments
  $a_j' \udp{p} a_{j+1}, j = 1, \dots, k-1$ and possibly also
  $x \udp{p} a_1$ or $a_k' \udp{p} y$ when $x$ or $y$ is in $V_B$.
  These segments are uniquely determined by $k$ and $\bm{a}$
  because we assumed no hybrid edges in $E_B$. There are no undirected cycles in
  the subgraph formed by nodes $V_B \cup \partial A$ and edges $E_B$, so
  for a given $a_j',a_{j+1}\in\partial A$, there is either no path or a single (tree) path
  $a_j' \leftrightarrow a_{j+1}$ in $E_B$.
  Let $E_B(k, \bm{a})$ be the set of edges
  $e \in E_B$ such that $e \in p$ for any (and every) $p \in \mathcal{P}_k(\bm{a})$.

  The segments of an up-down paths must be up-down paths.
  Conversely, the concatenation of contiguous up-down paths
  alternating from $E_A$ and $E_B$
  is still an up-down path, because $E_B$ contains tree edges only.
  Therefore, if $E_B(k, \bm{a})$ is non-empty
  or not needed ($k=1$ and $x,y\in\partial A$),
  then $\mathcal{P}_k(\bm{a})$ is non-empty and there is a bijection between
  $\mathcal{P}_k(\bm{a})$ and $\prod_{j=1}^k \mathcal{P}_1((a_j, a_j'))$:
  each $p \in \mathcal{P}_k(a)$ is mapped to the
  segments in $E_A$, with interleaving segments
  $E_B(k, \bm{a})$. 
  Consequently, if $\mathcal{P}_k(\bm{a})$ is not empty then we have
  \begin{eqnarray*}
  \sum_{p \in \mathcal{P}_k(\bm{a})} \gamma(p)
  &=&
  \sum_{p \in \mathcal{P}_k(\bm{a})}\prod_{i=1}^k\gamma(a_i \udp{p} a_i')
  = \sum_{p_1 \in \mathcal{P}(a_1, a_1')}\dots\sum_{p_k \in \mathcal{P}(a_k, a_k')}
    \prod_{i=1}^k\gamma(p_i)\\
  &=& \prod_{i=1}^k\gamma_A(a_i, a_i') \;.
  \end{eqnarray*}
  For the first summation in \eqref{eqn:d-k-decomp}, we may then
  write
  \begin{eqnarray}
    \sum_{p \in\mathcal{P}_k(\bm{a})} \gamma(p)\ell(p) &=& \nonumber\!\!
    \sum_{p \in\mathcal{P}_k(\bm{a})}\left(\prod_{i=1}^k\gamma(a_i \udp{p} a_i')\right)
      \left( \sum_{e \in E_B(k, \bm{a})}\ell(e) + \sum_{i=1}^k\ell(a_i \udp{p} a_i') \right) \\
    &=& \nonumber
    \prod_{i=1}^k\gamma_A(a_i, a_i')\left( \sum_{e \in E_B(k,\bm{a})}\ell(e)
    \right) + \\
    && \nonumber
    \sum_{p \in \mathcal{P}_k(\bm{a})}\sum_{i=1}^k
      \left(\prod_{j \neq i}\gamma(a_j \udp{p} a_j')\right)
       \gamma(a_i \udp{p} a_i') \ell(a_i \udp{p} a_i') \\
    &=& \nonumber
    \prod_{i=1}^k\gamma_A(a_i, a_i')\left(
      \sum_{e \in E_B(k,\bm{a})}\ell(e) \right) +
      \sum_{i=1}^k d(a_i, a_i'  \mid  A) \prod_{i=1}^k\gamma_A(a_i,a_i') \\
    &=& \prod_{i=1}^k\gamma_A(a_i, a_i') \left(
      \sum_{e \in E_B(k,\bm{a})}\ell(e) +
      \sum_{i=1}^kd(a_i, a_i'  \mid  A) \right) \;.\label{eqn:cond-dist}
  \end{eqnarray}
  From \eqref{eqn:cond-dist}, it follows that as long as
  $d(\cdot, \cdot \mid  A)$ and $\gamma_A$ remain the same on $\partial A$,
  then $d(x, y)$ does not change.

  The general case follows by first conditioning on a choice of hybrid
  edges in $E_B$ (which must be hybrid closed because $A_1$ is) 
  and then using Proposition~\ref{prop:expectation-distance}.
\end{proof}

\noindent

\begin{proof}[Proof of Proposition~\ref{cor:deg-2-3-blob-unid}]
The non-identifiability of blobs of degree 2 or 3 follows as an immediate
corollary of Lemma~\ref{lem:swap}:
If $A$ is a blob of degree 2 or 3, then $\partial A$ has 2
or 3 nodes, and $\gamma_A \equiv 1$
(Fig.~\ref{fig:shrink2d_3d_blobs}).  Since any metric on a set of 2 or
3 elements can be represented by a tree metric, we may replace $A$ by
one tree edge or by three tree edges, to match $d(.,. \mid A)$
exactly.
Specifically, using Fig.~\ref{fig:shrink2d_3d_blobs}, on the left the degree-2 blob
can be swapped by a single edge $(a,b)$ in ${\tilde N}_1$
of length set to $d_{N_1}(a,b)$.
On the right, a degree-3 blob can be swapped by a single tree node.
Edge lengths in ${\tilde N}_2$ are determined
by the average distances between $a,b,c$ in $N_2$. For example,
the length of the edge to $a$ is $(d_{N_2}(a,b) + d_{N_2}(a,c) -
d_{N_2}(b,c))/2 \geq 0$.
\end{proof}

\begin{proof}[Proof of Proposition~\ref{cor:parallel}]
  The subgraph $A_1$ induced by $\{v,h\}$ contains the parallel edges $e_1,\dots,e_n$
  exactly, has $\partial A_1=\{v,h\}$ and is hybrid closed because all parent edges of $h$
  are assumed to be parallel. $A_2$ is the subgraph on $\partial A_2=\{v,h\}$
  with a single tree edge $e=(vh)$. Trivially,
  $\gamma_{A_1}(v,h) = 1 = \gamma_{A_2}(v,h)$, and the length of $e$ was
  defined to ensure that $d_{N}(v,h \mid  A_1) = d_{N'}(v,h \mid A_2)$.
\end{proof}

\begin{proof}[Proof of Proposition~\ref{prop:3-cycle}]
  We consider here a subgraph that contains a triangle,
  and swap it with a simpler subgraph in which the triangle is shrunk.
  First note that $N'$ is a valid semidirected network, because the
  swapping operation can be made on a rooted network to obtain
  a valid rooted network (with the same root).
  In case 1 we apply Lemma~\ref{lem:swap} to swap
  the subgraph $A_1$ induced by $\{u,v,h\}$ on the left of Fig.~\ref{fig:shrink3cycles}(a),
  with $A_2$ induced by $\{u,v,h,w\}$ on the right.
  $A_1$ is hybrid closed in $N$ and $A_2$ is hybrid closed in $N'$;
  $\partial A = \partial A_1 = \partial A_2 = \{u,v,h\}$ and
  $\gamma_{A_1}\equiv \gamma_{A_2}\equiv 1$ on $\partial A$.
  The branch lengths in Proposition~\ref{prop:3-cycle} ensure that
  $d_{N}(.,. \mid A_1) \equiv d_{N'}(.,. \mid A_2)$ on $\partial A$.

  In case 2, let $x$ be the parent node
  of $v$ other than $u$. We apply Lemma~\ref{lem:swap} to swap
  $A_1$ with $A_2$ in Fig.~\ref{fig:shrink3cycles}(b), with $V(A_1)=V(A_2)=\{u,v,h,x\}$.
  $A_1$ and $A_2$ are hybrid closed in
  $N$ and in $N'$, both with boundary $\partial A  = \{x,u,h\}$.
  In $N$ and $N'$, we have $\gamma_A(x,u)=0$,
  $\gamma_A(x,h)=\gamma_2(1-\gamma_3)$ and
  $\gamma_A(u,h)=\gamma_1 + \gamma_2\gamma_3$.
  The branch lengths in Proposition~\ref{prop:3-cycle} ensure that
  $d_{N}(.,. \mid A_1) \equiv d_{N'}(.,. \mid A_2)$ on $\partial A$.
\end{proof}

\section{Identifying the tree of blobs}
\label{sec:ident-blob-tree}

In this section, we prove Theorems~\ref{ebt-refine-identify-thm} and \ref{thm:level-1-blockcut-tree}.
The key arguments are as follows.
First, edges in the tree of blobs $\bt(N)$ define the same splits of leaves as cut-edges in $N$.
Second, pairwise distances satisfy the ``4-point condition" for any set of four taxa that
spans one of these cut-edge splits.
These terms and statements are made rigorous below.
\com{
First, we justify that we can label the leaves of $\bt(N)$ by the
taxon labels in $X$, and also identify the edges in $T$ with
the cut edges of $N$.
}

\begin{proposition}
  \label{prop:net-blobtree-same-cutedges-tips}
  For a semidirected network $N$, there is a bijection between the
  edges of $\bt(N)$ and the cut edges of $N$, and a bijection between
  the leaves of $\bt(N)$ and the tips of $N$.
\end{proposition}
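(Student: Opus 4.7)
The plan is to establish the two bijections separately, using the standard fact that the cut edges of an undirected graph are precisely the edges that lie on no cycle, together with the definition of blobs as $2$-edge-connected components of $U(N)$.

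For the edge bijection, the key observation is that an edge $e$ of $N$ is a cut edge of $U(N)$ if and only if its two endpoints lie in distinct blobs. If both endpoints lie in the same blob $B$, then $e$ belongs to the same $2$-edge-connected component and therefore lies on a cycle, hence is not a cut edge. Conversely, an edge whose endpoints lie in distinct blobs cannot lie on any cycle, since such a cycle would witness that both endpoints share a $2$-edge-connected component. A similar argument shows that at most one edge of $N$ joins any two distinct blobs $B_1$ and $B_2$: two such edges, together with paths internal to $B_1$ and to $B_2$ (or, in the case of parallel edges between the same two nodes, used as a length-$2$ cycle on their own), would produce a cycle meeting both blobs and forcing $B_1 = B_2$. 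This yields the desired bijection between cut edges of $N$ and edges $\{B_1, B_2\}$ of $\bt(N)$.

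For the leaf bijection, each tip $v$ of $N$ has degree $1$, hence forms its own trivial blob, and its unique incident edge is a cut edge, so $\{v\}$ is a leaf of $\bt(N)$. Conversely, suppose $B$ is a blob of degree $1$ in $\bt(N)$, with unique incident cut edge $e$. If $B$ is trivial, consisting of a single node $v$, then the degree of $v$ in $N$ equals its number of incident cut edges, namely $1$. But every non-leaf node in $N$ has degree at least $2$: hybrid nodes have in-degree at least $2$ by definition; tree nodes have in-degree $1$ and out-degree at least $1$ because only leaves have out-degree $0$; and a retained $\lsa(X)$ has out-degree at least $2$ in any LSA network inducing $N$, since a unique child of $\lsa(X)$ would itself be a stable ancestor of $X$, contradicting the defining property of $\lsa(X)$. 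Hence $v$ must be a tip.

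The main obstacle is to rule out a non-trivial $B$ with a single incident cut edge. Fix an LSA network $N^+$ inducing $N$; by the remark following Definition~\ref{def:blob}, $N^+$ and $N$ share the same blobs. Since tips have degree $1$ but $B$ is $2$-edge-connected on at least two nodes, $B$ contains no tips. In $N^+$ every node lies on some directed path from $\lsa(X)$ to a tip (because tree and hybrid nodes have out-degree at least $1$ and non-$\lsa(X)$ nodes have in-degree at least $1$), so each node of $B$ lies on such a path, and since no tip belongs to $B$, any such path visiting $B$ must enter and leave it through incident cut edges. If $B$ contains $\lsa(X)$, every path from $\lsa(X)$ to any tip exits $B$ through $e$ at its outside endpoint $b'$, making $b'$ a stable ancestor of $X$ strictly below $\lsa(X)$ and contradicting the definition of $\lsa(X)$. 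If $B$ does not contain $\lsa(X)$, some path from $\lsa(X)$ must enter and later leave $B$, using two distinct incident cut edges. Either way the uniqueness of $e$ fails, so $B$ must be trivial, completing the proof.
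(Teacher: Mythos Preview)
Your proof is correct and follows essentially the same approach as the paper. Both arguments identify cut edges with edges of $\bt(N)$ via the characterization ``endpoints in distinct blobs,'' and both handle the leaf bijection by rooting at an LSA network and deriving a contradiction with the defining property of $\lsa(X)$; the only differences are organizational (you split the leaf case explicitly into trivial versus non-trivial $B$ and into root-in-$B$ versus root-not-in-$B$, whereas the paper first pins down the orientation of the unique cut edge to force the root into $B$).
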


The proof is in appendix~\ref{sec:blob-tree-proofs}
since it is simply technical.
Recall that a \emph{split} $A \mid B$ of a set $X$ is a partition of $X$
into two disjoint nonempty subsets $A$ and $B$.  For a phylogenetic
$X$-tree $T$ and an edge $e$ of $T$, the split $\sigma(e)$ induced by
$e$ is the partition on $X$ induced from the two connected components
of $T$ when $e$ is removed.  We denote the set of edge-induced splits
of a phylogenetic $X$-tree by $\Sigma(T)$. Two splits $A \mid B$ and
$C \mid D$
of $X$ are \emph{compatible} if at least one of $A \cap C$,
$A \cap D$, $B \cap C$, and $B \cap D$ is empty.  By the
Splits-Equivalence theorem \cite{charles03_phylog}, all the splits in
$\Sigma(T)$ are compatible.  Furthermore, two sets of splits
$\Sigma_1$ and $\Sigma_2$ are \emph{pairwise compatible} if for all
$\sigma_1 \in \Sigma_1, \sigma_2 \in \Sigma_2$, $\sigma_1$ and
$\sigma_2$ are compatible.  A single split $\sigma$ and a set of
splits $\Sigma$ are pairwise compatible if $\{\sigma\}$ and $\Sigma$
are pairwise compatible.

If $N$ has no degree-$2$ blob, then its tree of blobs $T$ can be viewed as
a phylogenetic $X$-tree.  Different cut-edges in $N$, and therefore
different edges in $T$, correspond to different splits $A \mid B$ in $X$.

\begin{definition}[4-point condition]
  Given a metric $d$ on $X$, the tuple $(x, y, u, v)$ of leaves in $X$
  satisfies \emph{the 4-point condition} if
  \begin{equation}\label{4point}
    d(x, y) + d(u, v) \leq d(x, u) + d(y, v) = d(x, v) + d(y, u).
  \end{equation}
  Because \eqref{4point} is the same if we switch $x, y$ or $u, v$, we
  can define the above condition as the 4-point condition on the
  quartet $xy \mid uv$ (short for $\{x, y\} \mid \{u, v\}$).
We also say the 4-point condition is satisfied for $\{x, y, u, v\}$
if it holds for some permutation of $(x, y, u, v)$.
  We say that $xy \mid uv$ satisfies the 4-point condition
  \emph{strictly} if the inequality in~\eqref{4point} is strict.

  A split $A \mid B$ on $X$ is said to satisfy the
  4-point condition (strictly) if for any $x, y \in A$ and $u, v \in B$, the
  4-point condition on $xy \mid uv$ is satisfied (strictly).
\end{definition}

On a tree, the 4-point condition is
satisfied for any choice of four nodes.  In the example below,
the 4-point condition is not satisfied.

\begin{example}[4-point condition on a 4-cycle]
  \label{ex:4pc-4sunlet}
  Let $N$ be the leftmost network in Figure~\ref{fig:nonidentifiable}
  with $t_3 > 0$ and $t_4 > 0$.  A quick calculation shows that
  $d(a, d) + d(b, c) - d(a, c) - d(b, d) = 2 \gamma t_3 > 0$ and
  $d(a, d) + d(b, c) - d(a, b) - d(c, d) = 2 (1-\gamma) t_4 > 0$.
  Therefore the 4-point condition is not satisfied on the tips
  $\{a, b, c, d\}$.
\end{example}

\begin{lemma}
  \label{ebt-edge-distance-compatible-lemma}
  Let $N$ be a semidirected network and $T$ its tree of blobs.
  All splits $\sigma \in \Sigma(T)$ satisfy the 4-point condition for $d_N$,
  and $\sigma(e)$ satisfies the 4-point condition strictly if $\ell(e)>0$.
  Furthermore, if all internal cut-edges in $N$ have positive length,
  then any split $\sigma'$ on $X$
  that satisfies the 4-point condition is pairwise compatible with
  $\Sigma(T)$.
\end{lemma}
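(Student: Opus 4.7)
The plan for the first assertion is to pass through displayed trees. I would first verify two structural facts: (i) every cut edge of $N$ is a tree edge, since any hybrid edge has a partner and its removal leaves the hybrid node connected through an alternative ancestral path; and (ii) if $e$ is a cut edge of $N$ with leaf split $A\mid B$, then in every tree $T$ displayed in $N$, $e$ is present with its original length and removing $e$ from $T$ yields the same leaf split $A\mid B$. This last point follows because each component of $T\setminus\{e\}$ is contained in one component of $N\setminus\{e\}$ (as $T$ is a spanning subgraph of $N$), and since $T\setminus\{e\}$ has exactly two components while $N\setminus\{e\}$ has exactly two components, they must coincide. Now fix $x,y\in A$ and $u,v\in B$. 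On each displayed tree $T$ the induced quartet topology is $xy\mid uv$, so $d_T(x,u)+d_T(y,v)=d_T(x,v)+d_T(y,u)=d_T(x,y)+d_T(u,v)+2L_T$ with $L_T\geq\ell(e)$ the length of the internal path of the quartet in $T$ (which contains $e$). Taking expectations using Proposition~\ref{prop:expectation-distance} gives $d_N(x,u)+d_N(y,v)=d_N(x,v)+d_N(y,u)\geq d_N(x,y)+d_N(u,v)+2\ell(e)$, which is the 4-point condition, strict whenever $\ell(e)>0$.

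For the second statement I would argue by contradiction. Suppose $\sigma'=A'\mid B'$ satisfies the 4-point condition but is incompatible with some $\sigma=\sigma(e)\in\Sigma(T)$, where $e$ has leaf split $A\mid B$. Incompatibility means all four intersections $A'\cap A$, $A'\cap B$, $B'\cap A$, $B'\cap B$ are non-empty; in particular $|A|,|B|\geq 2$, so $e$ is an internal cut edge and by hypothesis $\ell(e)>0$. Pick $x\in A'\cap A$, $y\in B'\cap A$, $u\in A'\cap B$, $v\in B'\cap B$. The first part applied to the quartet $xy\mid uv$ yields $d(x,u)+d(y,v)=d(x,v)+d(y,u)>d(x,y)+d(u,v)$. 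But the 4-point condition on $\sigma'$ with $x,u\in A'$ and $y,v\in B'$ forces $d(x,u)+d(y,v)\leq d(x,y)+d(u,v)$, contradicting the strict inequality just derived.

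The step I expect to require the most care is the structural claim underlying (i) and (ii): that every cut edge of $N$ is a tree edge carried by every displayed tree, inducing there the same leaf split. Once that is established, both assertions reduce to the classical 4-point identity on a tree together with the expectation formula of Proposition~\ref{prop:expectation-distance}. A minor subtlety worth noting is that trivial (pendant) splits are vacuously compatible with every other split and trivially satisfy the 4-point condition, so the positive-length hypothesis in the second half needs to be invoked only on internal cut edges, which is precisely what the hypothesis supplies.
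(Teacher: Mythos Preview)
Your proof is correct and follows essentially the same strategy as the paper. The only stylistic difference is in the first assertion: the paper decomposes each cross-split distance through the two endpoints $c,w$ of the cut edge, writing $d(a,u)=d(a,c)+\ell(e)+d(w,u)$ and then invoking the triangle inequality for $d_N$, whereas you reduce to the classical 4-point identity on each displayed tree and average via Proposition~\ref{prop:expectation-distance}. Both routes exploit the same structural fact (the cut edge separates the two sides in every displayed tree/up-down path), and the second part of your argument is essentially identical to the paper's.
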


\begin{proof}
  As in the previous discussion, we identify edge $e$ in $T$ with the
  corresponding cut edge in $N$.

  Let $\sigma = \sigma(e) = A \mid B$.  Take $a, b \in A$, $u, v \in B$.
  Since $e$ is a cut edge in $N$, removing $e$ results in two
  connected components, such that $a, b$ are in the
  same component and $u, v$ are in the other.  Let $c, w$ be the
  vertices of edge $e$, with $c$ in the same connected component as
  $a, b$, and $w$ in the same one as $u, v$.

  Let $D(p,q)$ be the random up-path length between nodes $p$ and $q$,
  that is, the length of the up-down path between $p$ and $q$ induced by
  a randomly sampled displayed tree.
  Since all up-down paths from $a$ to $u$ must contain $e$,
  we have
  \[ D(a,u) = D(a,c) + \ell(e) + D(w,u). \] Taking expectations,
  \[ d(a,u) = d(a,c) + \ell(e) + d(w,u). \] Similar equations hold
  for the pairs $(a, v)$, $(b, u)$, and $(b, v)$,
  from which we get
  \begin{align*}
    d(a,b) + d(u,v) &\leq d(a,c) + d(b,c) + d(w,u) + d(w,v) \\
    &= d(a,u) + d(b,v) - 2\ell(e) = d(a,v) + d(b,u) - 2\ell(e)\,.
  \end{align*}
  Hence $A \mid B$ satisfies the 4-point condition, strictly if $\ell(e)>0$.

  To prove the second claim, assume that there exists a split $\sigma' = U \mid V$
  satisfying the 4-point condition, but that is not compatible with
  a split $\sigma = A \mid B$ induced by some edge $e$ in $T$.  Since
  $\sigma$ is nontrivial, $e$ is an internal edge and $\ell(e) > 0$.
  Then we can find $a, b, u, v$ such that
  $a, b \in A$, $u, v \in B$, and $a, u \in U$, $b, v \in V$.
  Consequently the 4-point condition holds both on $ab \mid uv$ and
  $au \mid bv$.  It then follows that the three sums
  $d(a, b) + d(u, v)$, $d(a, u) + d(b, v)$ and $d(a, v) + d(b, u)$
  are all equal.
  Then the 4-point condition on $ab \mid uv$ cannot be strict,
  implying $\ell(e)=0$: a contradiction.
\end{proof}

\begin{definition}[distance split tree]
\label{def:dst}
  Let $d$ be a metric on $X$.  Let $\Sigma(d)$ be the
  set of splits on $X$ that satisfy the 4-point condition, and
  $\Sigma'(d)$ the set of splits in $\Sigma(d)$ that are pairwise
  compatible with $\Sigma(d)$.  Note that by construction,
  $\Sigma'(d)$ is pairwise compatible.
  The \emph{distance split tree} is defined as the $X$-tree $\tau(d)$ that
  induces $\Sigma'(d)$.
\end{definition}

\noindent
By the splits-equivalence theorem, $\tau(d)$ exists and is unique.
Also, $\tau(d_T)=T$ if $T$ is a tree \cite{charles03_phylog}.

\begin{proof}[Proof of Theorem~\ref{ebt-refine-identify-thm}]
  Let $N$ be a semidirected network on $X$ satisfying the requirements
  in Theorem~\ref{ebt-refine-identify-thm} and $d=d_N$.
  For a tree $T$, let $\Sigma_T$ the set of splits induced by $T$.
  Using the notations in
  Definition~\ref{def:dst}, we have
  \begin{equation*}
    \Sigma_{\bt(N)} \subset \Sigma_{\tau(d)} = \Sigma'(d) \subset \Sigma(d).
  \end{equation*}
  Because $\Sigma_{\bt(N)} \subset \Sigma_{\tau(d)}$, $\tau(d)$ is a refinement of $\bt(N)$.
\end{proof}

\begin{proof}[Proof of Example~\ref{ex:blobtree-level2}]
For the network $N$ in Fig.~\ref{fig:blobtree-level2} (left),
we prove here that the distance split tree $\tau(d_N)$
is a star for generic parameters,
and is the tree $oa \mid bc$ when $\gamma_3 = t_2/(t_1+t_2)$ for
$\gamma_7$ small enough. It is easy to write the expressions
\begin{align*}
S_a &= d(o,a)+d(b,c) =
  S_0 + \gamma_6 u +
  \gamma_7(2(\gamma_3t_1+\gamma_4t_2) + v)\\
S_b &= d(o,b)+d(a,c) =
  S_0 + \gamma_6(u + 2(\gamma_4t_1+t_5)) +
  \gamma_7(2\gamma_3t_1 + v)\\
S_c &= d(o,c)+d(a,b) =
  S_0 + \gamma_6(u + 2(\gamma_4t_1+t_5)) +
  \gamma_7(2\gamma_4t_2 + v)
\end{align*}
where $S_0$ is the sum of the external edge lengths after zipping-up
the network,
$u=\gamma_3t_1+\gamma_4t_2$ and
$v=\gamma_3t_2+\gamma_4t_1  + t_5$.
Consequently,
\begin{align*}
S_b &= S_c &\Longleftrightarrow&&
\gamma_3&=t_2/(t_1+t_2)\\
S_a &= S_b &\Longleftrightarrow&&
\gamma_6&=\gamma_4t_2/(\gamma_4(t_1+t_2)+t_5)\\
S_a &= S_c &\Longleftrightarrow&&
\gamma_6&=\gamma_3t_1/(t_1+t_5)\,.
\end{align*}
Therefore, except on a subspace of Lebesgue measure $0$,
the pairwise sums $S_a$, $S_b$ and $S_c$ take distinct values,
all non-trivial splits violate the 4-point condition, and the
distance split tree is a star.
Furthermore, we see that
\[S_a<S_b=S_c \quad\Longleftrightarrow\quad \gamma_3=t_2/(t_1+t_2)\mbox{ and }
1-\gamma_7=\gamma_6 >  \frac{t_1t_2}{(t_1+t_2)(t_1+t_5)}
\]
in which case $oa \mid bc$ is the only non-trivial split satisfying
the 4-point condition, and forms the distance split tree.
\end{proof}

Turning to the proof of Theorem~\ref{thm:level-1-blockcut-tree}, we
introduce a few more definitions.
We first define network refinements that preserve
up-down paths and distances (Fig.~\ref{net-refinement-fig}).
They are defined
for networks of any level and at any polytomy,
so they are more general than the refinements described in
Theorem~\ref{thm:level-1-blockcut-tree}.

\begin{definition}[network refinements]
  Let $N$ be a semidirected network on $X$, $u$ a non-binary node
  (i.e.\ of degree $4$ or more) in $N$, and
  let $E(u)$ be the set of edges adjacent to $u$.
  Let $\{E_1, E_2\}$ be a partition of $E(u)$ such that
  $\abs{E_1}, \abs{E_2} \geq 2$ and all the
  incoming hybrid edges (into $u$), if any, are in $E_1$.
  Then the network $N'$ obtained
  by the following steps is called a
  \emph{refinement of $N$ at $u$ by 
  $E_1 \mid E_2$}:
  \begin{enumerate}
  \item Add a new node $u'$, and add a tree edge $uu'$ of length $0$;
  \item replace each edge $uv \in E_2$ by a new edge $u'v$.
  \end{enumerate}
  Further, if $B$ is a blob 
  and $u$ is a node in $B$,
  let $E_B(u)$ denote the set of edges in $B$ incident to $u$.
  If $E_1$ contains $E_B(u)$, then we call the resulting refinement a
  \emph{blob-preserving refinement} at $u$.
  If $E_1=E_B(u)$ and $E_2=E(u)\setminus E_B(u)$
  then we call the refinement the 
  \emph{canonical refinement} at $u$.
\end{definition}

\begin{figure}[h]
  \centering
  \includegraphics[scale=1.4]{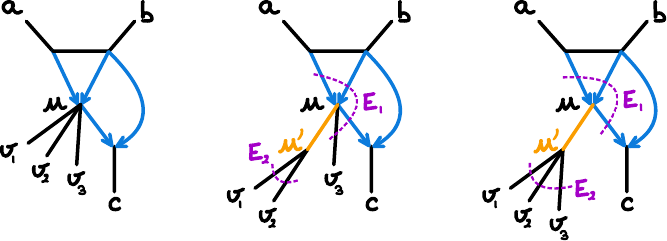}
  \caption{Left: network with a non-binary node $u$.
  Middle and right: refinements at $u$ for two choices of $E_1 \mid E_2$.
  In both, $E_1$ contains the 3 hybrid edges incident to $u$ and
  $E_2$ contains the edges incident to $v_1$ and $v_2$.
  The edge incident to $v_3$ is either in $E_1$ (middle) or in $E_2$ (right).
  The canonical refinement at $u$ is the rightmost network.}
  \label{net-refinement-fig}
\end{figure}

Since leaves must have degree one and refinements
  are defined at non-binary nodes, $u$ cannot be a leaf.
  Also, if either $E_1$ or $E_2$ has only one edge, then $u$ or $u'$ would become
  of degree 2, rendering the refinement uninteresting.
It is easy to see that $N'$ is still a valid
semidirected network, since for any rooted
network $N^+$ that induces $N$, one can keep the root and
direct the new edges consistently to get a rooted network ${N^+}'$
that induces $N'$. Namely, if $u$ is a hybrid node, then $uu'$ is directed
towards $u'$. Otherwise, we can direct $uu'$ depending
on whether the single parent of $u$ is in $E_1$ or in $E_2$.
In both cases, $uu'$ is a tree edge.

A blob-preserving refinement does
not change the non-trivial blobs, but adds a new trivial blob
$\{u'\}$:
suppose $B$ is a non-trivial blob in $N$, then
$E_2\subseteq E(u) \setminus E_B(u)$ contains cut edges only.
Therefore, the new tree edge $uu'$ is also a cut edge.
  The following lemma is a result of this property.

\begin{lemma}
  \label{lem:refinement-properties}
  Let $B$ be a blob
  in a metric semidirected network $N$ on $X$,
  $b$ the corresponding node in the tree of blobs $T$ of $N$,
  and let $u$ be a non-binary node in $B$.
  If $N'$ is a refinement of $N$ at
  $u$, then the pairwise distance on $X$ is the same on $N'$ and $N$.
  Furthermore, if $N'$ is a blob-preserving refinement at $u$ by the
  edge bipartition $E_1 \mid E_2$, then $\bt(N')$ is a refinement of
  $T$ at $b$ by
  $\widetilde{E_1} \mid \widetilde{E_2}$, where
  $\widetilde{E_2} = E_2$ 
  (which are cut-edges and appear in $T$)
  and $\widetilde{E_1} = E(b)\setminus\widetilde{E_2}$.
\end{lemma}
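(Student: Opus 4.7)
The plan is to handle the two statements separately, using results already established earlier in the excerpt. For the distance-preservation claim, I would observe that $N$ is exactly the network obtained from $N'$ by contracting the tree edge $uu'$: this contraction identifies $u$ and $u'$ into a single node whose incident edges are $E_1 \cup E_2 = E(u)$, recovering $N$. Since $uu'$ is a tree edge of length $0$ by construction, the remark following the definition of average distance (on contracting tree edges of length $0$) applies directly, giving a length- and $\gamma$-preserving bijection between up-down paths in $N'$ and those in $N$, and hence $d_{N'} = d_N$ on $X$.

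For the tree-of-blobs claim under a blob-preserving refinement, I would first establish how the refinement affects the block structure: (i) the blob $B$ of $N$ remains a blob of $N'$ with the same node set $V_B$, since $E_B(u) \subseteq E_1$ implies the induced subgraph on $V_B$ is unchanged and therefore still $2$-edge-connected; (ii) $\{u'\}$ is a new trivial blob of $N'$, and $uu'$ is a new cut-edge; (iii) no other blob is altered, since the refinement is local to $u$. The main technical step is (ii): to see that $uu'$ is a cut-edge, removing it leaves $u'$ connected only through edges of $E_2$, and any detour from $u'$ back to $u$ would have to traverse some $e \in E_2$ and then find a path in $N$ from $e$'s far endpoint back to $u$ avoiding $e$ itself, which is impossible because $e$ was already a cut-edge of $N$ (using $E_2 \subseteq E(u)\setminus E_B(u)$). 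The same observation shows that each edge of $E_2$ remains a cut-edge of $N'$, with $u$ replaced by $u'$ as one endpoint.

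With this structural information, I would conclude using Proposition~\ref{prop:net-blobtree-same-cutedges-tips}: the cut-edges of $N'$ are those of $N$ together with $uu'$, and the blobs of $N'$ are those of $N$ together with $\{u'\}$. Therefore $\bt(N')$ differs from $T$ only by replacing $b$ with two adjacent vertices, namely $b'$ (for $B$) and the trivial-blob vertex for $\{u'\}$, joined by the edge corresponding to $uu'$; and the edges of $T$ at $b$ are partitioned according to whether the associated cut-edge of $N$ lies in $E_2$ (moving to the new vertex) or in $E(b)\setminus E_2$ (remaining at $b'$). This is precisely the refinement of $T$ at $b$ by $\widetilde{E_1} \mid \widetilde{E_2}$ with $\widetilde{E_2} = E_2$ and $\widetilde{E_1} = E(b)\setminus \widetilde{E_2}$. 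The only part needing genuine care is the cut-edge verification in (ii); the rest is routine bookkeeping against this bijection.
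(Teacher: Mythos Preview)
Your proposal is correct and follows essentially the same route as the paper: for distance preservation you invoke the earlier remark on contracting length-$0$ tree edges, which packages exactly the three-case up-down-path bijection the paper writes out explicitly; for the tree-of-blobs claim you reproduce the paper's argument that $E_2$ consists of cut edges (hence $uu'$ is a cut edge and $\{u'\}$ a new trivial blob), and then read off the refinement of $T$ via the cut-edge/blob bijection of Proposition~\ref{prop:net-blobtree-same-cutedges-tips}. The only difference is presentational: you spell out the cut-edge verification for $uu'$ in more detail than the paper, which had already recorded it just before the lemma statement.
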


\begin{proof}
  For the first claim, it suffices to show that for any
  tips $x, y \in X$, there is a bijection between the sets of up-down
  paths $x \longleftrightarrow y$ in $N$ and in $N'$ that preserves the lengths
  and hybridization parameters.  Assume the refinement is by $E_1 \mid E_2$.
  Let $p$ be an up-down path between $x$ and $y$ in $N$.  Consider the
  following map $f$. 
  If $p$ does not include $u$, or if $p$ includes $u$ but the two edges
  incident to $u$ in $p$ are both in $E_1$, then $p$ is also an up-down path in
  $N'$, and we let $f(p) = p$.
  If $p$ includes $u$ and the two edges incident to $u$ in $p$ are both in $E_2$,
  then we may change $u$ to $u'$ in $p$ to obtain up-down path $p'$ in $N'$
  and set $f(p) = p'$.
  Finally, if $p$ includes $u$ with one incident edge in $E_1$ and one in
  $E_2$, then we may assume, without loss of generality, that
  $p = x \dots a u b \dots y$ with $au \in E_1$ and $bu \in E_2$.  Then let
  $p' = x \dots a u u' b \dots y$.  Since $uu'$ is a tree edge, $p'$
  has no v-structure and is an up-down path in $N'$.  We set
  $f(p) = p'$.
  It is easy to see that $f$ is injective, that $\gamma(f(p)) = \gamma(p)$,
  and $\ell(f(p)) = \ell(p)$.  As a result, the up-down paths
    in the image of $f$ 
  have hybridization parameters that sum up to one, so $f$ is surjective as well.

  For the second claim, let $u'$ be the new node introduced in the
  refinement.  As previously noted, $E_2$ contains cut edges only,
  so we are only deleting and adding
  cut edges during a blob-preserving refinement.  Hence all the
  operations correspond to the operations on the tree of blobs $T$.
  Consequently we can get $\bt(N')$ from $T$ by adding a node $b'$
  corresponding to the trivial blob $B' = \{u'\}$ in $N'$,
  cut edge $bb'$ corresponding to $uu'$, and replace edges $bc$ with
  $b'c$ for each cut edge $uv \in E_2$, where $c$ corresponds
    to the
  blob containing $v$.  This is the refinement of $T$ at $b$ by
  $\widetilde{E_1} \mid \widetilde{E_2}$.
\end{proof}

We now introduce definitions for splits that resolve a
polytomy in the tree of blobs without affecting the
blob itself in the network. Later, we prove that the distance
split tree resolves the tree of blobs with splits of this kind.

\begin{definition}[split along a blob; sibling groups]
  Let $N$ be a semidirected network on $X$, $T$ its tree of blobs,
  and $B$ a blob of $N$ with corresponding node $b$ in $T$.
  When $b$ is removed, suppose $T$ is disconnected into
  $k$ connected components, with taxa $Y_1, \ldots, Y_k$ in each.
  We call $\{Y_i; i\leq k\}$ \emph{the partition induced by $B$}.
  If a split $\sigma$ on $X$ has a set that is the union of two or more
  $Y_i$'s, then $\sigma$ is \emph{along the partition induced by
    $B$}, or \emph{along $B$}.
  Let $e_i$ be the cut edge in $T$ (or $N$) adjacent to $B$
  whose removal disconnects $Y_i$ from $B$.
  If $e_i$ and $e_j$ share a node $u\in B$ then
  $Y_i$ and $Y_j$ are called \emph{sibling sets of $B$ at $u$}.
  For a node $u\in B$, the \emph{sibling group} at $u$ is the
  union of all sibling sets of $B$ at $u$.
\end{definition}

\noindent
The following is a restatement of
Lemma~3.1.7 in \cite{charles03_phylog}, using our definitions.
\begin{lemma}
  \label{lem:extra-split}
  Let $T$ be a phylogenetic $X$-tree, and $T'$ a refinement of $T$.
  Then every split $\sigma \in \Sigma(T') \setminus \Sigma(T)$ is along
  some node $u$ of $T$.
\end{lemma}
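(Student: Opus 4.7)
The plan is to use the contraction picture of refinements: $T$ is obtained from $T'$ by contracting some set $F$ of edges, and for each node $u$ of $T$, the preimage $\phi^{-1}(u)$ under the contraction map $\phi:V(T')\to V(T)$ is a subtree $S_u$ of $T'$ whose edges all lie in $F$. A split $\sigma\in\Sigma(T')\setminus\Sigma(T)$ is induced by some edge $e$ of $T'$. If $e\notin F$, then its image in $T$ is an edge inducing the same split $\sigma$, which would place $\sigma$ in $\Sigma(T)$, a contradiction. Hence $e\in F$, which means that $e$ lies entirely inside some $S_u$.

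Next I would analyse how removing $e$ interacts with the sibling sets at $u$. Let $v_1,\dots,v_k$ be the neighbours of $u$ in $T$ and $Y_1,\dots,Y_k$ the corresponding sibling sets, that is, the leaf sets of the connected components of $T$ obtained by deleting $u$. Each edge $uv_i$ in $T$ corresponds to a non-contracted edge of $T'$ whose endpoint in $S_u$ is some node $w_i$, and every leaf in $Y_i$ reaches $S_u$ in $T'$ only through $w_i$. Removing $e$ disconnects $S_u$ into two subtrees $S_u^A$ and $S_u^B$, which together with the leaves attached through each $w_i$ on the corresponding side give the two components of $T'\setminus e$. Because $Y_i$ is joined to $S_u$ solely through $w_i$, the whole of $Y_i$ stays on the same side as $w_i$. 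Consequently both parts $A$ and $B$ of the split $\sigma=A\mid B$ are unions of some of the $Y_i$'s.

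To finish I would check the ``union of two or more'' clause in the definition of ``along $u$''. If one side of $\sigma$ consisted of a single $Y_i$, then $\sigma$ would equal $Y_i\mid X\setminus Y_i$, the split induced in $T$ by the edge $uv_i$; but this places $\sigma$ in $\Sigma(T)$, contradicting the hypothesis. Hence each side is a union of at least two $Y_i$'s and $\sigma$ is along $u$. I do not anticipate a real obstacle here: the only care needed is bookkeeping of the contraction map and the observation that each sibling set $Y_i$ is attached to $S_u$ through exactly one node $w_i$, which is automatic because $T'$ is a tree.
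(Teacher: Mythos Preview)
Your argument is correct. Note, however, that the paper does not actually supply a proof of this lemma: it simply records it as ``a restatement of Lemma~3.1.7 in \cite{charles03_phylog}, using our definitions'' and moves on. Your contraction-map argument is the standard way to establish this fact and fills in what the paper defers to the literature. One minor remark: the paper's definition of ``along $u$'' only asks that \emph{one} side of $\sigma$ be a union of two or more of the $Y_i$, so your final step (ruling out that either side is a single $Y_i$) proves a hair more than strictly required, but the argument is clean and the extra strength is harmless.
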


\noindent
Next, we characterize the set of splits $\Sigma_D$
that satisfy the 4-point condition on $N$.
\begin{lemma}
  \label{lem:polytomy-split}
  Let $N$ be a level-$1$ network on $X$ with no degree-$2$ blobs, and
  with internal tree edges of positive lengths. Let $B$ be a blob of $N$
  of degree $4$ or more.
  If $B$ is trivial, then any split along $B$ satisfies the
  4-point condition. If $B$ is nontrivial, then a split
  along $B$ satisfies the 4-point condition if and only if it is of
  the form $S \mid \bar S$ where $S$ is a union of sibling sets of $B$.
  Furthermore, a split $\sigma$ along $B$ is in $\Sigma(T')$, $T'$
  being the distance split tree, if and only if $B$ is a nontrivial
  blob and $\sigma$ is of the form $S \mid \bar S$ where $S$ is a sibling group
  of $B$.
\end{lemma}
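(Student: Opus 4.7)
My plan is to establish the three assertions in order, leveraging the factorization of up-down paths through a blob at the cycle-entry node of each taxon.

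For part~(a), the trivial blob is a single node $p$, so every up-down path between taxa in distinct $Y_i$'s passes through $p$, giving the additive identity $d(x,z)=d(x,p)+d(p,z)$ when $x$ and $z$ lie in different $Y_i$'s, combined with the triangle inequality $d(x,y)\le d(x,p)+d(p,y)$ within each $Y_i$. Summing over a quadruple $x,y\in S$ and $u',v'\in\bar S$ yields $d(x,y)+d(u',v')\le d(x,u')+d(y,v')=d(x,v')+d(y,u')$, which is exactly the 4-point condition on $xy\mid u'v'$.

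For part~(b), the cycle $B$ carries a unique hybrid node $h$, and I denote by $D(u,u')$ the within-blob average distance between two cycle nodes: the tree-path length avoiding $h$ for two non-hybrid nodes, and the $\gamma$-weighted mixture of the two arcs when one endpoint is $h$. For the \emph{if} direction, suppose $S\subseteq Y^u$: every taxon of $S$ enters $B$ at $u$, so the off-diagonal sums $d(x,u')+d(y,v')$ and $d(x,v')+d(y,u')$ coincide by symmetric additivity through $u$, and the remaining inequality $d(x,y)+d(u',v')\le d(x,u')+d(y,v')$ reduces to the triangle inequality on $d$ within $Y^u$'s subnetwork combined with the triangle inequality on $D$ through $u$. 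For the \emph{only if} direction, I argue in two subcases. (i) If some sibling group $Y^w$ is split across $S$ and $\bar S$, choose $x\in S\cap Y^w$, $u'\in\bar S\cap Y^w$, $y\in S\cap Y^{w'}$ for $w'\ne w$, and any $v'\in\bar S$; the three 4-point sums, after cancelling common terms, equal $D(w,w')+D(w,u_{v'})$, $D(w',u_{v'})$, and $D(w,u_{v'})+D(w,w')$, and the triangle inequality on $D$ makes the middle one strictly smaller, so the 4-point condition holds for $xu'\mid yv'$ rather than $xy\mid u'v'$. (ii) If every sibling group is entirely in $S$ or in $\bar S$, then $S=\bigcup_{u\in U}Y^u$ with $|U|\ge 2$; since $Y^h\ne\emptyset$ (the hybrid node must carry a child edge for $N$ to be a valid network), $h\in U\cup\bar U$, and I choose two $S$-taxa and two $\bar S$-taxa attached at four cycle nodes one of which is $h$, so that the $\gamma$-weighted distances involving $h$ break the tree-metric 4-point identity on the cycle.

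For part~(c), I combine (b) with pairwise compatibility. Distinct sibling groups are disjoint, so $Y^u\mid\overline{Y^u}$ and $Y^{u'}\mid\overline{Y^{u'}}$ are compatible, and each $Y^u\mid\overline{Y^u}$ is compatible with any subset split $S'\subseteq Y^{u''}$ either by disjointness (for $u''\ne u$) or by nesting (for $u''=u$), placing each full sibling group split into $\Sigma'(d)$. Conversely, a strict subset $S\subsetneq Y^u$ with $|S|\ge 2$ and $|Y^u\setminus S|\ge 2$ admits a crossing competitor $S'\subseteq Y^u$ (also 4-point-satisfying by (b)) with all four intersections non-empty, so $S$ fails pairwise compatibility with $\Sigma(d)$ and is excluded from $\Sigma(T')$. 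When $B$ is trivial, the 4-point-satisfying splits from (a) are pairwise incompatible via the same crossing argument, so no nontrivial refinement of the polytomy enters $\Sigma(T')$. The main obstacle will be subcase (ii) of the only-if direction of (b): verifying the precise algebraic inequality among the $\gamma$-weighted sums requires a careful choice of quadruple and sub-analyses according to whether $h\in U$ or $h\in\bar U$ and the cyclic positions of the chosen taxa, relying on the fact that the $k$-cycle-with-hybrid distances ($k\ge 4$) cannot be realized by any tree metric on the cycle's $k$ nodes.
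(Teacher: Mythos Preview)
Your overall architecture is sound, and parts (a), the ``if'' half of (b), and (c) are essentially correct, if more computational than the paper's approach. The paper obtains (a) and (b)-if by constructing a (blob-preserving) refinement of $N$ in which the candidate split becomes a genuine cut-edge split, and then invoking Lemma~\ref{ebt-edge-distance-compatible-lemma}; your direct factorization through the entry node $u$ achieves the same thing without the refinement machinery.

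There is, however, a genuine gap in subcase~(i) of the ``only if'' direction of (b). You assert that the triangle inequality on $D$ makes $D(w',u_{v'})$ \emph{strictly} smaller than $D(w,w')+D(w,u_{v'})$, but this is false in general. Restricted to non-hybrid cycle nodes, $D$ is the path metric on the cycle with $h$ deleted---a tree metric---so the triangle inequality is an equality precisely when $w$ lies between $w'$ and $u_{v'}$ on that path. In that situation your three sums satisfy $\Sigma_1=\Sigma_2=\Sigma_3$ and the 4-point condition \emph{does} hold for $xy\mid u'v'$; your quadruple fails to witness a violation. Saying ``any $v'\in\bar S$'' is therefore too loose, and no choice avoiding $h$ can work in general, because the obstruction to the 4-point condition in a level-1 cycle comes entirely from the $\gamma$-mixing at $h$.

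The subcase~(i)/(ii) split is in fact unnecessary. The paper treats both cases at once: whenever both $S$ and $\bar S$ meet at least two sibling groups (which is exactly the negation of ``$S$ or $\bar S$ is a union of sibling sets at one node''), one can always select four leaves $x_1,x_2\in S$, $x_3,x_4\in\bar S$ lying in four \emph{distinct} sibling groups, one of which is $Y^h$. The induced subnetwork on $\{x_1,x_2,x_3,x_4\}$ is then a $4$-sunlet with both cycle tree edges of positive length, and Example~\ref{ex:4pc-4sunlet} shows the 4-point condition fails for \emph{every} ordering of the four leaves, so in particular for $x_1x_2\mid x_3x_4$. This is the clean route for your subcase~(ii), and it covers subcase~(i) as well; your acknowledged ``main obstacle'' dissolves once you reduce to the $4$-sunlet rather than compute the $\gamma$-weighted sums by hand.
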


\begin{proof}
  If $B=\{u\}$ is trivial, then for any split
  $\sigma$ along $B$ we can find the corresponding refinement $N^R$ at
  $u$ with the extra edge inducing $\sigma$ in the tree of blobs.  Since
  $N^R$ has the same pairwise distances, $\sigma \in \Sigma(\bt(N^R))$
  satisfies the 4-point condition by
  lemma~\ref{ebt-edge-distance-compatible-lemma}.  
  Furthermore, we claim that for any split $\sigma$ along $B$, there
  is another split $\sigma'$ along $B$ that is incompatible with $\sigma$.
  This would imply that $\sigma \notin \Sigma(T')$.
  To show the claim, let
  $\{Y_i; i \leq d\}$ be the partition induced by $B$, with
  $d =\mathrm{deg}(B) \geq 4$.  Let
  $\sigma$ be of the form $\cup_{i \in I_1}Y_i  \mid  \cup_{i \in I_2} Y_i$, where
  $\{I_1, I_2\}$ is a bipartition of $\{1, \dots, d\}$.  Now we may choose
  $\sigma' = \cup_{i \in I_1'}Y_i  \mid  \cup_{i \in I_2'} Y_i$ where
  $\{I_1', I_2'\}$ is a bipartition of $\{1, \dots, d\}$ incompatible with
  $\{I_1, I_2\}$: such that $I_i\cap I'_j$ are all non-empty.
  Then $\sigma'$ is along
  $B$ and incompatible with $\sigma$.

  If $B$ is nontrivial, first consider $\sigma=S \mid \bar S$ where $S$ is a
  union of sibling sets of $B$, that is, $S$ contains the leaves
  corresponding to a set $E$ of cut edges adjacent to some node
  $u\in B$.  Then, in the blob-preserving
  refinement of $N^R$ at $u$ by $E(u)\setminus E \mid E$,
  the extra cut edge induces $\sigma$.  Hence
  $\sigma \in \Sigma(N^R)$ satisfies the 4-point condition.
  Conversely, consider a non-trivial split $\sigma=A \mid \bar A$ where
  both $A$ and $\bar A$ intersect at least two of the sibling groups of $B$.
  Let $Y_1,\ldots,Y_d$ be the sibling groups such that $Y_1$ is the
  sibling group at $B$'s hybrid node.
  Since $d=\mathrm{deg}(B) \geq 4$, it is easy to see that we can find distinct
  $\{i_1,\ldots,i_4\}$, $x_1,x_2\in A$ and $x_3,x_4\in\bar A$ such that
  $1\in\{i_1,\ldots,i_4\}$, 
  and $x_j\in Y_{i_j}$.
  Then the subnetwork $\tilde N = N_{\{x_1x_2x_3x_4\}}$ is
  equivalent to
  the leftmost network in Figure~\ref{fig:nonidentifiable}
  with positive branch lengths for both tree edges
  in the cycle.
  By Example~\ref{ex:4pc-4sunlet},
  the 4-point condition is not met for $x_1x_2 \mid x_3x_4$,
  which finishes the proof of the claim.

  Finally, consider a split $\sigma=S \mid \bar S$ along $B$ that satisfies
  the 4-point condition, but where $S$ is a proper subset
  of the sibling group at some node $u\in B$. Then $u$ must
  be adjacent to $k\geq 3$ cut edges, and $S$ must be the union
  of $l$ sibling sets at $u$, with $2\leq l<k$.
  Then similarly to the case when $B$ is trivial,
  we can find a nonempty union of sibling sets $S'$,
  such that $\sigma'=S' \mid \bar S'$ is incompatible
  with $\sigma$. Since $\sigma'$ satisfies the 4-point condition,
  $\sigma \notin \Sigma(T')$.
\end{proof}

\begin{proof}[Proof of Theorem~\ref{thm:level-1-blockcut-tree}]
  Note that the procedure described to obtain $N^R$ is a series of
  canonical refinements. So by Lemma~\ref{lem:refinement-properties},
  $N^R$ is a valid semidirected network with average distances
  identical to those in $N$.

  Let $T = \bt(N)$ and $T'$ the distance split tree of $N$.  By
  Lemma~\ref{lem:extra-split}, any split
  $\sigma \in \Sigma(T') \setminus \Sigma(T)$ is along some blob $B$
  of $N$.
  By Lemma~\ref{lem:polytomy-split}, there is no such extra split
  $\sigma$ when $B$ is trivial.  If $B$ is nontrivial, the extra
  splits must be of the form $S \mid \bar S$ where $S$ is a sibling group at
  some node $u$.
  Such a split corresponds to the split
  introduced in the canonical refinement at $u$.

  Finally, since $N^R$ can be obtained from a series of canonical refinements,
  by Lemma~\ref{lem:refinement-properties}, the tree of blobs of $N^R$ can
  also be obtained from the series of corresponding refinements, which
  introduces exactly all the extra splits described above.  As a
  result, $\bt(N^R) = T'$.
\end{proof}

\section{Identifying sunlets}
\label{sec:k-sunlet}

A $k$-sunlet is a semidirected network with a single $k$-cycle and
reticulation, and for each node on the cycle, one or more pendant
edge(s) (adjacent to a leaf).
The sunlet is binary if $k$ equals the number of leaves, $n$.
This section considers the problem of identifying the branch lengths
and hybridization parameters in a sunlet from the average
distances between the $n$ tips. We assume that we know the network is
a $k$-sunlet, but $k$ is unknown
and the ordering of the tips around the cycle is unknown.
In other words, we consider the problem of identifying
the exact network topology given that it is a sunlet.

A \emph{circular ordering} of the leaves $X=\{x_{1},\ldots,x_{k}\}$ is,
informally, the order of the leaves when placed around an undirected cycle.
Formally, it is the class of an ordering $(x_{i_1},\ldots,x_{i_k})$
up to the equivalence relations $(u_1,\ldots,u_k)\sim(u_k,\ldots,u_1)$
and $(u_1,u_2,\ldots,u_k)\sim(u_2,\ldots,u_k,u_1)$.

\subsection{4-sunlets}
\label{sec:4-sunlet}

First we consider the problem of identifying the lengths
and hybridization parameter in a binary $4$-sunlet,
assuming the labelled semidirected topology is known
(Fig.~\ref{4-sunlet-fig} left). Specifically, we
  assume that we know
 $h$ is of hybrid origin, $a$ and $b$ are its half-sisters and $g$
is opposite of the hybrid node.

\begin{figure}[h]
  \centering
  \includegraphics[scale=1.4]{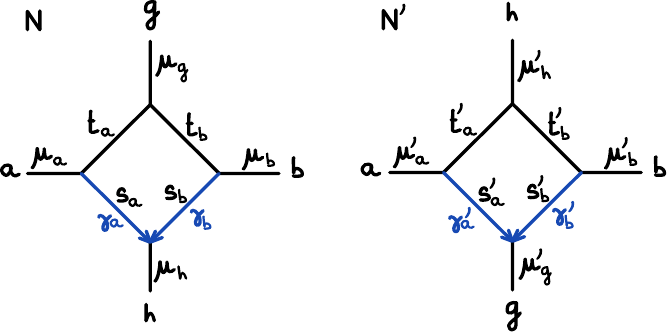}
  \caption{$4$-sunlets with the same undirected topology.
  Left: $h$ is of hybrid origin. Right: $g$ is of hybrid origin.
  By Theorem~\ref{thm:4-sunlet-switch-gh}, parameters can be chosen such that both networks
  have the same average distances between leaves.}
  \label{4-sunlet-fig}
\end{figure}

In this case, we have $6$ average distances, but $9$ parameters.
Zipping up the $4$-sunlet removes $2$ degrees of freedom,
but one free parameter still remains.
Specifically, we have
\begin{align}
\label{eq:4-sunlet-dist}
\begin{split}
  d_{ga} &= (\mu_a + t_a) + \mu_g \\
  d_{gb} &= (\mu_b + t_b) + \mu_g \\
  d_{ab} &= (\mu_a + t_a) + (\mu_b + t_b) \\
  d_{ah} &= (\mu_h + s_a\gamma_a + s_b\gamma_b) + \gamma_b(t_a + t_b) +
  \mu_a \\
  d_{bh} &= (\mu_h + s_a\gamma_a + s_b\gamma_b) + \gamma_a(t_a + t_b) +
  \mu_b \\
  d_{gh} &= (\mu_h + s_a\gamma_a + s_b\gamma_b) + \gamma_at_a +
  \gamma_bt_b + \mu_g
\end{split}
\end{align}

\begin{theorem}
  \label{thm:4-sunlet-metric}
  Let $d$ be a metric on four tips $\{a, b, g, h\}$.
  The $4$-sunlet $N$ with circular ordering $(a,g,b,h)$
  and in which $h$ is of hybrid origin (left of Fig.~\ref{4-sunlet-fig})
  has average distances $d$ for some set of parameters
  such that $t_a>0$ and $t_b>0$
  if and only if
  \begin{equation}\label{eq:4sunlet-4-point}
  d _{gh} + d_{ab} > \max \{d_{ah} + d_{bg}, d_{bh} + d_{ag}\}
  \end{equation}
  and
  \begin{equation}\label{eq:4sunlet-cond}
  \frac{d_{gh} + d_{ab} - d_{ah} - d_{bg}}{d_{ab}+d_{ag}-d_{bg}} +
    \frac{d_{gh} + d_{ab} - d_{bh} - d_{ag}}{d_{ab}+d_{bg}-d_{ag}} \leq 1.
  \end{equation}
  In this case, we can identify $\mu_g$ and the following composite parameters:
  \begin{align}
      \label{eq:4-sunlet-identify}
    \begin{split}
    \mu_g &= \frac{1}{2} (d_{ag}+d_{bg}-d_{ab}) \\
    \mu_a+t_a &= \frac{1}{2} (d_{ab}+d_{ag}-d_{bg}) \\
    \mu_b+t_b &= \frac{1}{2} (d_{ab}+d_{bg}-d_{ag}) \\
    l_h := \mu_h + s_a\gamma_a + s_b\gamma_b &= \frac{1}{2} (d_{ah}+d_{bh}-d_{ab}) \\
    \gamma_at_a &= \frac{1}{2} (d_{gh} + d_{ab} - d_{ah} - d_{bg}) \\
    \gamma_bt_b &= \frac{1}{2} (d_{gh} + d_{ab} - d_{bh} - d_{ag})\;.
  \end{split}
  \end{align}
  However, $\gamma$, $t_a$, $t_b$, $\mu_a$, $\mu_b$ are not identifiable.
  In particular, $\gamma_a$ can take any value in the following interval:
  \[\left[ \frac{d_{gh} + d_{ab} - d_{ah} - d_{bg}}{d_{ab}+d_{ag}-d_{bg}} \; , \;
    1 - \frac{d_{gh} + d_{ab} - d_{bh} - d_{ag}}{d_{ab}+d_{bg}-d_{ag}}\right]\,.\]
  Furthermore, \eqref{eq:4sunlet-4-point} is an equality if and only if one of the
  tree edges in the cycle has zero length: $t_a=0$ or $t_b=0$.
\end{theorem}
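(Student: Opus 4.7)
The plan is to treat the six distance equations in \eqref{eq:4-sunlet-dist} as a linear system in the edge parameters of the $4$-sunlet and read off from them exactly the composite quantities pinned down by pairwise distances. First I would derive \eqref{eq:4-sunlet-dist} by enumerating up-down paths: the three pairs not involving $h$ each admit a unique tree path, while each pair involving $h$ admits two up-down paths, one through each parent hybrid edge, contributing weights $\gamma_a$ and $\gamma_b = 1 - \gamma_a$. Taking natural sums and differences immediately yields the first four expressions in \eqref{eq:4-sunlet-identify}, namely $\mu_g$, $\mu_a+t_a$, $\mu_b+t_b$, and the composite $l_h = \mu_h + s_a\gamma_a + s_b\gamma_b$. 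The last two come from the combinations $d_{gh}+d_{ab}-d_{ah}-d_{bg}$ and $d_{gh}+d_{ab}-d_{bh}-d_{ag}$, which after cancellation equal $2\gamma_a t_a$ and $2\gamma_b t_b$ respectively.

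Next I would establish necessity. The inequalities \eqref{eq:4sunlet-4-point} are exactly the statement that $\gamma_a t_a$ and $\gamma_b t_b$ are positive, which, since $\gamma_a,\gamma_b\in(0,1)$, is equivalent to $t_a, t_b > 0$. Writing $P_a = \gamma_a t_a$, $P_b = \gamma_b t_b$, $S_a = \mu_a+t_a$, $S_b = \mu_b+t_b$, the inequalities $t_a \leq S_a$ and $t_b \leq S_b$ translate (using $\gamma_b = 1 - \gamma_a$) into $\gamma_a \geq P_a/S_a$ and $\gamma_a \leq 1 - P_b/S_b$; the existence of a valid $\gamma_a$ then forces $P_a/S_a + P_b/S_b \leq 1$, which after substitution is precisely \eqref{eq:4sunlet-cond}.

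For sufficiency and non-identifiability I would reverse the construction: given $d$ satisfying the hypotheses, pick any $\gamma_a$ in the interval stated in the theorem (nonempty by \eqref{eq:4sunlet-cond}) and set $t_a = P_a/\gamma_a$, $t_b = P_b/(1-\gamma_a)$, $\mu_a = S_a - t_a$, $\mu_b = S_b - t_b$, along with, for instance, $s_a = s_b = 0$ and $\mu_h = l_h$. The triangle inequality on $d$ supplies $\mu_g, S_a, S_b, l_h \geq 0$; the interval bounds on $\gamma_a$ supply $\mu_a,\mu_b \geq 0$; and strictness of \eqref{eq:4sunlet-4-point} keeps both endpoints of the interval inside $(0,1)$, so $\gamma_a$ is a legitimate hybridization parameter. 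This produces a $4$-sunlet realizing $d$, and shows every $\gamma_a$ in the stated interval is achieved. The equality case of \eqref{eq:4sunlet-4-point} then drops out: equality in one half means $P_a=0$ or $P_b=0$, equivalently $t_a=0$ or $t_b=0$ since $\gamma_a,\gamma_b\in(0,1)$.

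The main bookkeeping obstacle will be verifying that the implicit nonnegativity constraints on $\mu_g$, $l_h$, $\mu_a$, $\mu_b$ really do follow from the two stated conditions together with the metric axioms, and that the stated interval lies inside $(0,1)$; but these are straightforward consequences of the triangle inequality and the strictness in \eqref{eq:4sunlet-4-point}, so no subtle estimate is needed.
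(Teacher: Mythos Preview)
Your proposal is correct and follows essentially the same route as the paper: both invert the six distance equations \eqref{eq:4-sunlet-dist} by elementary algebra to obtain \eqref{eq:4-sunlet-identify}, then interpret \eqref{eq:4sunlet-4-point} as positivity of $\gamma_a t_a,\gamma_b t_b$ and \eqref{eq:4sunlet-cond} as the compatibility $\gamma_a t_a/(\mu_a+t_a)+\gamma_b t_b/(\mu_b+t_b)\le 1$, with sufficiency handled by choosing $\gamma_a$ in the stated interval and back-solving. Your shorthand $P_a,P_b,S_a,S_b$ and the explicit appeal to the triangle inequality for $\mu_g,S_a,S_b,l_h\ge 0$ make the bookkeeping slightly more transparent than in the paper, but the argument is the same.
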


The proof below uses basic algebra.
Condition \eqref{eq:4sunlet-cond} ensures that 
\eqref{eq:4-sunlet-identify} can be solved
to give non-negative $\mu_a$ and $\mu_b$, and
\eqref{eq:4sunlet-4-point} ensures that
$\gamma_at_a>0$ and $\gamma_bt_b > 0$.
If \eqref{eq:4sunlet-4-point} is an equality, then the
4-point condition is satisfied and $d$ is a tree metric.
$\gamma_a=0$ or $1$ lead to a tree metric,
but hybrid edges are required to have $\gamma>0$ by definition.
Setting $t_a$ or $t_b$ to $0$ also leads to a tree metric.
Contracting the corresponding edge creates an unidentifiable degree-3 blob.

\begin{proof}[Proof of Theorem~\ref{thm:4-sunlet-metric}]
  It is easy to check with basic algebra that \eqref{eq:4-sunlet-identify}
  is equivalent to \eqref{eq:4-sunlet-dist}. Therefore, we simply need to
  show that additionally imposing \eqref{eq:4sunlet-4-point} and \eqref{eq:4sunlet-cond}
  is equivalent to requiring edge lengths be non-negative,
  $t_a,t_b>0$
  and hybridization parameters be in $(0,1)$.
  Suppose that $d$ comes from the $4$-sunlet $N$.
  Condition \eqref{eq:4sunlet-4-point} is
  equivalent to
  $d_{gh} + d_{ab} - d_{ah} - d_{bg} = 2 \gamma_at_a \geq 0$ and
  $d_{gh} + d_{ab} - d_{bh} - d_{ag} = 2 \gamma_bt_b \geq 0$.
  For condition \eqref{eq:4sunlet-cond}, we have
  \begin{equation*}
  \frac{d_{gh} + d_{ab} - d_{ah} - d_{bg}}{d_{ab}+d_{ag}-d_{bg}} +
  \frac{d_{gh} + d_{ab} - d_{bh} - d_{ag}}{d_{ab}+d_{bg}-d_{ag}}
  = \frac{\gamma_at_a}{\mu_a+t_a} + \frac{\gamma_bt_b}{\mu_b+t_b}
  \leq 1.
  \end{equation*}
  Conversely, suppose that a metric $d$ on $\{a,g,b,h\}$ satisfies
  \eqref{eq:4sunlet-4-point} and \eqref{eq:4sunlet-cond}.
  Then there exists $\tilde\gamma$ such that
  \[0 < \frac{d_{gh} + d_{ab} - d_{ah} - d_{bg}}{d_{ab}+d_{ag}-d_{bg}} \leq \tilde\gamma \leq
    1 - \frac{d_{gh} + d_{ab} - d_{bh} - d_{ag}}{d_{ab}+d_{bg}-d_{ag}} < 1\,.\]
  Then we can set $\gamma_a=1-\gamma_b=\tilde\gamma$ in
  \eqref{eq:4-sunlet-identify} to solve for $t_a$ first,
  getting $t_a>0$ from \eqref{eq:4sunlet-4-point}.
  Then solving for $\mu_a$, we get
  \[\mu_a = \frac{1}{2}\left((d_{ab}+d_{ag}-d_{bg}) - (d_{gh} + d_{ab} - d_{ah} - d_{bg}) / \tilde\gamma \right) \geq 0\]
  because of our condition on $\tilde\gamma$.
  Similarly, solving for $t_b$ and $\mu_b$ gives $t_b>0$ and $\mu_b\geq 0$.
\end{proof}

When the sunlet topology is unknown, we need to identify the circular ordering
of the tips around the cycle, and which of $a, b, g$ or $h$ is of hybrid origin.
Suppose the tips are labelled by $x, y, z, w$, then by Theorem~\ref{thm:4-sunlet-metric},
the opposing pairs $\{x, y\}$ and $\{z, w\}$ correspond to the largest
sum among $d_{xy} + d_{zw}$, $d_{xz} + d_{yw}$ and $d_{xw} + d_{yz}$.

Identifying the opposing pairs $\{a, b\}$ and $\{g, h\}$ is enough to identify the
undirected graph of the $4$-sunlet.  However, identifying
which tip is of hybrid origin is impossible, as we show below.

\begin{theorem}
  \label{thm:4-sunlet-switch-gh}
  Let $N$ be the $4$-sunlet with circular ordering $(a,g,b,h)$
  in which $h$ is of hybrid origin (Fig.~\ref{4-sunlet-fig}, left).
  Let $N'$ be the $4$-sunlet with the same circular ordering,
  but in which $g$ is of hybrid origin (Fig.~\ref{4-sunlet-fig}, right).
  For any parameters $(\ell,\gamma)$ on $N$, there exist
  parameters $(\ell',\gamma')$ on $N'$
  such that $N$ and $N'$ have the same average distances.
\end{theorem}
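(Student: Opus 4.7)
The plan is to apply Theorem~\ref{thm:4-sunlet-metric} to the target network $N'$. Since $N'$ has the same circular ordering $(a,g,b,h)$ as $N$ but with $g$ playing the role of the hybrid node, invoking Theorem~\ref{thm:4-sunlet-metric} on $N'$ is equivalent to invoking it on $N$ after a relabeling that interchanges $g$ and $h$. Hence the existence of parameters $(\ell',\gamma')$ on $N'$ realizing $d_N$ reduces to verifying that $d_N$ satisfies the analogs of \eqref{eq:4sunlet-4-point} and \eqref{eq:4sunlet-cond} with $g$ and $h$ swapped.

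First I would dispatch the swapped \eqref{eq:4sunlet-4-point}. The inequality is symmetric in $g$ and $h$, so it follows immediately from the forward direction of Theorem~\ref{thm:4-sunlet-metric} applied to $N$. As a direct check, expanding via \eqref{eq:4-sunlet-dist} yields $d_{gh}+d_{ab}-d_{ag}-d_{bh} = 2\gamma_b t_b$ and $d_{gh}+d_{ab}-d_{bg}-d_{ah} = 2\gamma_a t_a$, both strictly positive since $\gamma_a,\gamma_b \in (0,1)$ and $t_a, t_b > 0$.

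Next I would verify the swapped \eqref{eq:4sunlet-cond}. Substituting from \eqref{eq:4-sunlet-dist} gives the identities
\[
d_{ab}+d_{ah}-d_{bh} = 2\bigl(\mu_a + \gamma_b(t_a+t_b)\bigr), \qquad d_{ab}+d_{bh}-d_{ah} = 2\bigl(\mu_b + \gamma_a(t_a+t_b)\bigr),
\]
so the swapped \eqref{eq:4sunlet-cond} reduces to
\[
\frac{\gamma_b t_b}{\mu_a + \gamma_b(t_a+t_b)} + \frac{\gamma_a t_a}{\mu_b + \gamma_a(t_a+t_b)} \leq 1.
\]
Dropping $\mu_a \geq 0$ from the first denominator and cancelling $\gamma_b > 0$ bounds the first fraction by $t_b/(t_a+t_b)$; symmetrically the second is bounded by $t_a/(t_a+t_b)$. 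Their sum is exactly $1$, establishing the inequality.

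With both conditions in hand, the converse direction of Theorem~\ref{thm:4-sunlet-metric} (after the $g \leftrightarrow h$ relabeling) supplies parameters $(\ell',\gamma')$ on $N'$ with $d_{N'} = d_N$, completing the argument. The main obstacle is the algebraic simplification in the third paragraph: once one recognizes the closed forms $2\gamma t$ for numerators and $2(\mu + \gamma(t_a+t_b))$ for denominators, bounding the sum by the tree-edge weights $t_b/(t_a+t_b)$ and $t_a/(t_a+t_b)$ is immediate. Everything else follows from the symmetry of \eqref{eq:4sunlet-4-point} and the prior theorem.
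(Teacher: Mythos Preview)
Your proposal is correct and follows essentially the same approach as the paper: apply Theorem~\ref{thm:4-sunlet-metric} to $N'$, observe that \eqref{eq:4sunlet-4-point} is symmetric in $g$ and $h$, and verify the swapped \eqref{eq:4sunlet-cond} by rewriting it as $\frac{\gamma_b t_b}{\mu_a + \gamma_b(t_a+t_b)} + \frac{\gamma_a t_a}{\mu_b + \gamma_a(t_a+t_b)}$ and dropping the nonnegative $\mu_a,\mu_b$ from the denominators. Your write-up is slightly more explicit about the intermediate identities, but the argument is the same.
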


\begin{proof}
We apply Theorem~\ref{thm:4-sunlet-metric} to $N'$ and the distance $d$
obtained from $N$. We need to check that \eqref{eq:4sunlet-4-point} and
\eqref{eq:4sunlet-cond} are met.
Condition \eqref{eq:4sunlet-4-point} is met because it is symmetric in $g$ and $h$.
Condition \eqref{eq:4sunlet-cond} is not symmetric however.
To fit $d$ on $N'$, \eqref{eq:4sunlet-cond}
can be written as (after permuting $g$ and $h$):
\[\frac{d_{gh} + d_{ab} - d_{ag} - d_{bh}}{d_{ab}+d_{ah}-d_{bh}} + \frac{d_{gh} + d_{ab} - d_{bg} - d_{ah}}{d_{ab}+d_{bh}-d_{ah}} \leq 1\,.\]
Applying \eqref{eq:4-sunlet-identify} to $N$ from which $d$ is obtained,
we can rewrite the left-hand side as:
\begin{multline*}
\frac{\gamma_b t_b}{(\mu_a+t_a) -\gamma_a t_a + \gamma_b t_b} + \frac{\gamma_a t_a}{(\mu_b+t_b) -\gamma_b t_b + \gamma_a t_a} = \\ 
\frac{\gamma_b t_b}{\mu_a+ \gamma_b (t_a+t_b)} + \frac{\gamma_a t_a}{\mu_b + \gamma_a (t_a+t_b)} \leq
\frac{\gamma_b t_b}{\gamma_b (t_a+t_b)} + \frac{\gamma_a t_a}{\gamma_a (t_a+t_b)} = 1\,.
\end{multline*}
Hence \eqref{eq:4sunlet-cond} is met on $N'$, and parameters
can be set to match the average distances from $N$.
\end{proof}

Depending on the parameters in the $4$-sunlet,
it may be possible to switch $g$ with $a$ and $h$
with $b$ as well, if condition \eqref{eq:4sunlet-cond} holds for the network
in which $a$ (or $b$) is of hybrid origin.
Namely, this is possible if $\mu_h$ and $\mu_g$ are large enough to satisfy
\[\frac{\gamma_at_a}{\mu_g+\gamma_at_a} +
 \frac{\gamma_bt_b}{\mu_h + s_a\gamma_a + s_b\gamma_b+\gamma_bt_b} \leq 1.
 \]
Usually we do not have external information about which tip is of hybrid
origin, and even if we do, by Theorem~\ref{thm:4-sunlet-metric} we can only
identify $\mu_g$, the length of the branch ``across'' from the hybrid
node.  It is therefore not possible to identify the individual edge
lengths.
More generally, we can combine the swap lemma and
Theorem~\ref{thm:4-sunlet-metric} to
prove that almost any hybrid-closed subgraph with 4 boundary nodes can be swapped
with a 4-cycle without affecting distances.

\begin{proposition}[swap a subgraph with a 4-sunlet]
\label{prop:swap-4-cycle}
In a network $N$, let $A$ be a hybrid-closed connected subgraph
with 4 boundary nodes such that $\gamma_A\equiv 1$
and each node $u\in\partial A$ has degree 1 in $A$.
Let $t(u)$ denote the length of the edge incident to $u$ in $A$. Then there
exists $\eta\geq 0$ (which depends on $A\setminus \partial A$)
such that the following holds:
If $t(u)\geq\eta$ for each $u \in \partial A$, then we can swap $A$ with a tree
or with a 4-sunlet $A'$
on leaf set $\partial A$ to obtain a valid semidirected network $N'$ with $d_{N'} = d_N$.
\end{proposition}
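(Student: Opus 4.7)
The plan is to apply Lemma~\ref{lem:swap} with $A$ and a replacement subgraph $A'$ on leaf set $\partial A$ that is either a phylogenetic tree or a $4$-sunlet, chosen so that $\partial A' = \partial A$, $\gamma_{A'}\equiv 1$, and the conditional distances $d(\cdot,\cdot\mid A')$ match $d(\cdot,\cdot\mid A)$ on $\partial A$. Writing $d_A(u,v) = d_N(u,v\mid A)$, the hypothesis that each $u\in\partial A$ has degree~$1$ in $A$ with pendant edge of length $t(u)$ forces every up-down path in $A$ between boundary nodes to traverse both pendant edges, so
\[
d_A(u_i,u_j) \;=\; t(u_i) + t(u_j) + \delta(u_i,u_j),
\]
where $\delta$ depends only on $A\setminus\partial A$ and on the hybridization parameters inside $A$, but not on $t$. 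In particular, the differences between the three pair-sums $d_A(u_i,u_j)+d_A(u_k,u_\ell)$ coincide with those of the three pair-sums of $\delta$, so the ranking of these three sums is determined by $\delta$ alone.

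First, if $d_A$ satisfies the $4$-point condition on $\partial A$, then $d_A$ is a tree metric and is realized on a unique unrooted $\partial A$-tree with nonnegative edge lengths. Take $A'$ to be this tree: it is trivially hybrid-closed and $\gamma_{A'}\equiv 1$, and Lemma~\ref{lem:swap} applies with $\eta=0$. Otherwise both inequalities in \eqref{eq:4sunlet-4-point} are strict for some pairing, which (after relabelling) we may take as $\{a,b\}\mid\{g,h\}$. I then set $A'$ to be the $4$-sunlet on circular ordering $(a,g,b,h)$ with $h$ as the hybrid tip. Between any two leaves of a $4$-sunlet, the hybridization parameters of the up-down paths sum to $1$, so $\gamma_{A'}\equiv 1$; moreover the two hybrid edges of $A'$ are internal, so $A'$ is hybrid-closed.

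To invoke Theorem~\ref{thm:4-sunlet-metric} on $A'$ it remains to verify \eqref{eq:4sunlet-cond}. Using \eqref{eq:4-sunlet-identify} together with the decomposition $d_A(u_i,u_j)=t(u_i)+t(u_j)+\delta(u_i,u_j)$, the two numerators $\gamma_a t_a$ and $\gamma_b t_b$ in \eqref{eq:4sunlet-cond} are functions of $\delta$ only, while the two denominators $\mu_a+t_a$ and $\mu_b+t_b$ take the form $t(a) + \tfrac12(\delta(a,b)+\delta(a,g)-\delta(b,g))$ and $t(b) + \tfrac12(\delta(a,b)+\delta(b,g)-\delta(a,g))$, respectively, each an increasing affine function of $t(a)$ or $t(b)$. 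Consequently, choosing $\eta$ large enough in terms of $\delta$ (hence of $A\setminus\partial A$) makes the left-hand side of \eqref{eq:4sunlet-cond} at most~$1$ whenever $t(u)\geq\eta$ for every $u\in\partial A$, and Theorem~\ref{thm:4-sunlet-metric} then provides parameters on $A'$ realizing $d_A$. Lemma~\ref{lem:swap} yields $d_{N'}=d_N$.

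The main obstacle is the case analysis driven by Theorem~\ref{thm:4-sunlet-metric}: one must verify that the quartet split and the choice of hybrid tip in $A'$ are unambiguously determined by the sign pattern of $\delta$ (not by the pendant lengths $t(u)$) and must handle the borderline subcases in which exactly one inequality in \eqref{eq:4sunlet-4-point} is an equality. In each such borderline subcase, two of the three pair-sums are equal and dominate the third, so $d_A$ in fact satisfies the $4$-point condition and we fall back to the tree case, with no $4$-sunlet required. The remaining bookkeeping to check that $N'$ is a valid semidirected network is routine, since $N'$ may be obtained from any rooted version of $N$ by directing the new tree edges of $A'$ consistently away from the unchanged root.
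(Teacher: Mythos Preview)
Your overall architecture mirrors the paper's, but there is a real gap in the step you call ``routine bookkeeping'': the claim that $N'$ is a valid semidirected network. Lemma~\ref{lem:swap} presupposes that both $N_1$ and $N_2$ are semidirected networks, so this must be established before the lemma can be invoked.

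In the $4$-sunlet case the issue is not cosmetic. The hybrid tip of $A'$ cannot be chosen freely ``after relabelling''. Fix any rooted version $N^+$ of $N$; because $\gamma_A\equiv 1$ and $A$ is hybrid-closed, $A$ has a unique \emph{entry} boundary node $u^*\in\partial A$ (the one whose pendant edge in $A$ is directed away from the root). If your chosen hybrid tip $h$ happens to equal $u^*$, then directing the tree edges of $A'$ away from the root forces the pendant edge $h v_0$ to point into the hybrid node $v_0$, giving $v_0$ three incoming edges and destroying the DAG structure; no rerooting rescues this, since every valid root of $N$ lies on the $u^*$ side. The paper avoids this by first observing that failure of the $4$-point condition forces $A$ to contain a hybrid edge, and then choosing the label $h$ so that $h$ lies \emph{below} some hybrid edge in $A$; this guarantees that no valid root of $N$ sits below $h$, hence $u^*\neq h$ and the $4$-sunlet can be oriented consistently. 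Your relabelling must respect this constraint, and you have not argued that it can (in fact one can: the metric fixes only the opposite pairs, so within the pair $\{g,h\}$ one is free to assign the entry node to $g$).

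Even in the tree case the validity of $N'$ is not automatic. Replacing $A$ by a tree could, a priori, create a directed cycle through edges of $B=N\setminus A$ and the new tree edges. The paper dispatches this with the same entry-node observation: any undirected path in $B$ between two boundary nodes must contain a v-structure (since only $u^*$ receives an edge from outside), so it cannot combine with the newly directed tree edges of $A'$ to form a directed cycle. Your sentence ``directing the new tree edges of $A'$ consistently away from the unchanged root'' asserts the conclusion but does not supply this argument.
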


\begin{proof}
To simplify notations, let $d$ denote $d(.,. \mid A)$.
If $d$ satisfies the 4-point condition, then there is a unique 4-taxon tree $A'$
on $\partial A$ such that $d_{A'}=d$ on $\partial A$ (and $\gamma_{A'} \equiv 1$).
Swapping $A$ with $A'$ leads to a valid semidirected network topology $N'$
because any valid root position in $N$ remains valid in $N'$.
$N'$ remains acyclic because $\gamma_A\equiv 1$ and $A$ is hybrid-closed:
when edges are directed away from the root, $A$ must have exactly
one ``entry" boundary node, whose incident edge in $A$ is outgoing.
Therefore, an undirected path between two nodes in $\partial A$
made of edges not in $A$ must have a v-structure, and then $N'$ cannot
contain directed cycles.
Finally, we can apply Lemma~\ref{lem:swap} to prove the claim with $\eta=0$.

If $d$ does not satisfy the 4-point condition, then $A$ must contain at least
one hybrid edge. We may label the nodes in $\partial A$ as $\{h,a,g,b\}$
such that \eqref{eq:4sunlet-4-point} holds for $d$ and
$h$ is below some hybrid edge in $A$.
Let $A'$ be a 4-sunlet on leaf set $\partial A$ with $h$ below the hybrid node
and circular ordering $(h,a,g,b)$.
Swapping $A$ with $A'$ leads to a valid semidirected network topology $N'$
because any valid root position in $N$ is not below $h$, and is again valid in $N'$.
We also have $\gamma_{A'} = \gamma_A\equiv 1$.
We now want to assign edge parameters in $A'$ such that $d_{A'} = d$ on $\partial A$.
By Theorem~\ref{thm:4-sunlet-metric}, this is possible provided that
\eqref{eq:4sunlet-cond} holds for $d$. 
Modifying $t(u)$ for $u\in\partial A$ does not modify the numerator of
either term in \eqref{eq:4sunlet-cond}. Let $aa_0$ and $b_0b$ be
the edges in $A$ incident to $a$ and $b$ respectively.
Then the denominators in \eqref{eq:4sunlet-cond} can be expressed as
$2t(a)+d_{a_0b_0}+d_{a_0g}-d_{b_0g}$ and
$2t(b)+d_{a_0b_0}+d_{b_0g}-d_{a_0g}$.
Therefore \eqref{eq:4sunlet-cond} holds if $t(a)>\eta$ and $t(b)>\eta$
where $\eta$ is the maximum of
$d_{gh} + d_{a_0b_0} - d_{a_0h} - d_{b_0g} - (d_{a_0b_0}+d_{a_0g}-d_{b_0g})/2$
and
$d_{gh} + d_{a_0b_0} - d_{b_0h} - d_{a_0g} - (d_{a_0b_0}+d_{b_0g}-d_{a_0g})/2$.
This concludes the proof by Lemma~\ref{lem:swap}.
\end{proof}

We can now prove Theorem~\ref{thm:nonidentifiable}
on networks of level up to $k$, $k\geq 2$.
\begin{proof}[Proof of Theorem~\ref{thm:nonidentifiable}]
It suffices to consider $k = 2$.
Consider the networks in Fig.~\ref{fig:nonidentifiable},
say $A_1$ on the left and $A_2$ on the right.
Let $n\geq 4$. If $n=4$, set $N_1=A_1$ and $N_2=A_2$.
If $n\geq 5$, we can form networks $N_i$ ($i=1,2$) with $n$ taxa
by replacing the leaves $a$, $b$, $c$ and/or $d$ in $A_i$ by subtrees
with enough taxa.
Given any values for the parameters labelled in Fig.~\ref{fig:nonidentifiable}
for $A_2$ such that $u_4>0$,
$d_{A_2}$ satisfies \eqref{eq:4sunlet-4-point} with the same
ordering as $d_{A_1}$.
By Proposition~\ref{prop:swap-4-cycle} and its proof,
we can swap $A_2$ with $A_1$ 
provided that the edges incident to $b$ and $c$ are long enough
in $A_2$.
It follows that $d_{N_1}=d_{N_2}$ for parameters
in subsets of positive Lebesgue measure.
\com{
To find choices of parameters, we can match the split weights
between networks.
We use the
notations in Fig.~\ref{fig:nonidentifiable} for internal edges
and subscripts with leaf labels for external edges
(e.g. length $t_a$ in $N_1$ for the length of the edge to $a$).
For any parameters on $N_3$, we first zip-up $N_3$ then assign the
following parameters in $N_1$ to get $d_{N_1}=d_{N_3}$:
$t_a=u_a$, $t_d=u_d$, $t_c=u_c$,
$\gamma  = (\gamma_1 u_4 + u_6)/(u_4+u_6)$,
$t_4 = (1-\gamma_2)(u_4+u_6)$,
$t_3 = (\gamma_2 + \frac{\gamma_1 u_3}{\gamma_1 u_4 + u_6}) (u_4+u_6)$
and
$t_b = (1-\gamma_1) u_3 u_6/(\gamma_1 u_4 + u_6)$.
This example holds with the full parameter space for $N_3$
and its image by the assignment above, which has positive
Lebesgue measure in the parameter space for $N_1$.
For any parameters on $N_2$, we may assume that $N_2$ is
zipped up with $v_1=v_2=v_3=v_5=0$, up to increasing $v_b$.
Then we define the following parameters on $N_1$:
$\gamma=\gamma_1 + (1-\gamma_1)\gamma_2\in (0,1)$,
$t_4=v_4$,
$t_3=v_6+\gamma_1v_7/\gamma$,
$t_a=v_a+\gamma v_4$ (this is $\hat\mu_a$),
$t_d= v_d + (1-\gamma)v_6 + (1-\gamma_1)v_7$ (this is $\hat\mu_d$),
$t_b+(1-\gamma)t_3=v_b$ (this is $\hat\mu_b$),
$t_c+\gamma t_4=v_c$ (this is $\hat\mu_c$),
provided that $v_b$ and $v_c$ are not too small.
}
\end{proof}

\begin{definition}[canonical 4-sunlet split network]\label{def:canonical-4sunlet}
Consider a 4-sunlet whose undirected topology has
circular ordering $(a,g,b,h)$ (e.g.\ Fig.~\ref{4-split-network} right)
and with cycle tree edges of positive lengths.
The underlying undirected graph (e.g.\ Fig.~\ref{4-split-network}
left) can be considered as a \emph{split network}, in which
each pair of parallel edges identifies a single split and a single split weight (edge length),
with \emph{canonical} edge lengths defined 
as follows.
\begin{align}
  \label{eqn:canonical-edge-lengths}
  \hat{\mu}_g &= \frac{1}{2}(d_{ga} + d_{gb} - d_{ab}) &
  \hat{\mu}_a &= \frac{1}{2}(d_{ag} + d_{ah} - d_{gh}) \nonumber \\
  \hat{\mu}_h &= \frac{1}{2}(d_{ha} + d_{hb} - d_{ab}) &
  \hat{\mu}_b &= \frac{1}{2}(d_{bg} + d_{bh} - d_{gh}) \\\nonumber
  \hat{t}_{hb \mid ga} &= \frac{1}{2}(d_{gh} + d_{ab} - d_{ga} - d_{hb}) &
  \hat{t}_{ha \mid gb} &= \frac{1}{2}(d_{gh} + d_{ba} - d_{gb} - d_{ha})\,.
\end{align}
Distances on this canonical split network, calculated between any two
tips as the length of the shortest path between them
\cite{huson_rupp_scornavacca_2010}, are identical to the average
distances on the original semi-directed 4-sunlet.
\end{definition}

\begin{figure}[h]
  \centering
  \includegraphics[scale=1.5]{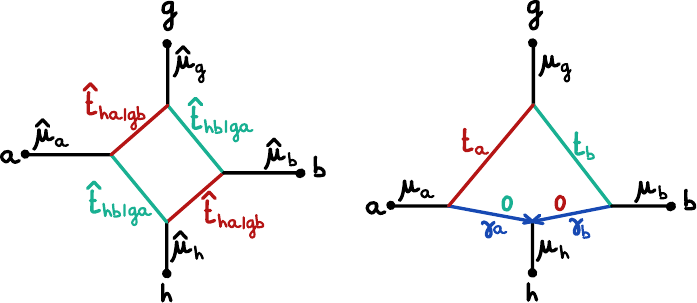}
  \com{\scalebox{0.7}{
  \begin{tikzpicture}
    \grsetup
    \Vertices[Math,unit=4]{circle}{b, g, a, h}
    \SetVertexNoLabel
    \Vertices[unit=2.5]{circle}{c, d, e, f}
    \Edge[label=$\hat{\mu}_g$](g)(d)
    \Edge[label=$\hat{\mu}_a$](a)(e)
    \Edge[label=$\hat{\mu}_b$](b)(c)
    \Edge[label=$\hat{t}_{hb \mid ga}$](c)(d)
    \Edge[label=$\hat{t}_{ha \mid gb}$](d)(e)
    \Edge[label=$\hat{\mu}_h$](f)(h)
    \Edge[label=$\hat{t}_{hb \mid ga}$](e)(f)
    \Edge[label=$\hat{t}_{ha \mid gb}$](c)(f)
  \end{tikzpicture}
  }}
  \caption{
  Left: split network, with canonical edge lengths from Definition~\ref{def:canonical-4sunlet}.
  A pair of parallel edges represent the same split and thus share the same length
  ($\hat{t}_{hb \mid ga}$ or $\hat{t}_{ha \mid gb}$ here).
  The distance between two nodes is defined as the length of the shortest path between them.
  Right: example zipped-up $4$-sunlet represented by the split network on the left,
  in which $h$ is of hybrid origin. Distances (using up-down paths) are identical to distances
  on the split network and satisfy \eqref{eqn:4-block-split-lengths}.
  }
  \label{4-split-network}
\end{figure}
This split network provides a unique representation of what can
be identified from pairwise distances: undirected topology and identifiable
composite parameters.
Ideally, we would have liked a semi-directed representation,
but since the location of the hybrid node is not identifiable, this was not an option.
\begin{theorem}[identifiability of $4$-sunlet split network]
  \label{thm:4-sunlet-identify}
  Let $N$ and $N_0$ be binary $4$-sunlets with identical leaf set
  and internal tree edges of positive lengths.
  If $N$ and $N_0$ have identical average distances, then
  the canonical $4$-sunlet split networks of $N$ and $N_0$ are identical.
\end{theorem}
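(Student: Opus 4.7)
The plan is to show that both ingredients of the canonical $4$-sunlet split network---its underlying undirected split graph and its canonical edge lengths---are determined by the pairwise distances alone. Since $d_N = d_{N_0}$ by assumption, each ingredient will then automatically agree between $N$ and $N_0$.

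For the edge lengths, the canonical values given in \eqref{eqn:canonical-edge-lengths} are explicit linear combinations of the six pairwise distances $d_{ij}$. So, once the four leaves are organized by a common circular ordering, the canonical edge lengths computed for $N$ and $N_0$ coincide term by term. This part is a direct substitution and requires no further work.

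The substantive step is to recover the circular ordering from $d$ alone. The split graph underlying a $4$-sunlet is determined, up to the symmetries preserving the split network, by the unordered partition of the four leaves into the two ``opposing pairs'' of the cycle. By the last sentence of Theorem~\ref{thm:4-sunlet-metric}, the hypothesis that the internal tree edges of the cycle have positive length (i.e.\ $t_a>0$ and $t_b>0$ in the parametrization of \eqref{eq:4-sunlet-dist}) forces the inequality in \eqref{eq:4sunlet-4-point} to be strict. Consequently, of the three sums $d_{xy}+d_{zw}$ obtained by pairing up the four leaves, exactly one---namely $d_{gh}+d_{ab}$---is strictly largest, and this uniquely identifies the opposing pairs $\{g,h\}$ and $\{a,b\}$. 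Applying this reasoning to both $N$ and $N_0$ yields the same opposing-pair partition, hence the same circular ordering up to rotation and reflection, hence the same underlying split graph.

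The main obstacle to anticipate is guaranteeing uniqueness of this maximum sum. Without the positivity hypothesis on the internal cycle edges, two of the three pair sums could tie and two distinct circular orderings would be consistent with the distance matrix, producing genuinely different split networks. The positive-length assumption is precisely what rules this out; combining the identified circular ordering with the substitution into \eqref{eqn:canonical-edge-lengths} then finishes the proof that the canonical split networks of $N$ and $N_0$ are identical.
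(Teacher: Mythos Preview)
Your proposal is correct and follows essentially the same approach as the paper's proof: use the strict inequality \eqref{eq:4sunlet-4-point} (guaranteed by positivity of the cycle's tree edges via Theorem~\ref{thm:4-sunlet-metric}) to recover the opposing-pair partition, and then observe that the canonical edge lengths in \eqref{eqn:canonical-edge-lengths} depend only on the distances and this circular ordering, not on which leaf is the hybrid.
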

\begin{proof}
  The positivity of cycle tree edge lengths ensures that
\begin{equation*}
d _{ab} + d_{gh} > \max\{d_{ag} + d_{bh}, d_{ah} + d_{bg}\}
\end{equation*}
is satisfied strictly, so
$N$ and $N_0$ must have the same circular ordering.
Finally, the definition of canonical edge lengths from average distances
in \eqref{eqn:canonical-edge-lengths} is symmetric with respect
to the hybrid node: canonical lengths depend on the circular ordering
only.
\end{proof}
There is a tight correspondence between edge lengths in the semi-directed
network and edge lengths in the split network, provided that the placement of the
hybrid node is known.
For example, if $h$ is of hybrid origin and if the network is zipped up
as in Fig.~\ref{4-split-network} (right), then by
  Theorem~\ref{thm:4-sunlet-metric} we have that:
\begin{align}
  \label{eqn:4-block-split-lengths}
  \hat{\mu}_g &= \mu_g, & \hat{\mu}_a &= \mu_a + \gamma_b t_a, & \hat{t}_{ha \mid gb} &= \gamma_a t_a,
  \nonumber \\
  \hat{\mu}_h &= \mu_h, & \hat{\mu}_b &= \mu_b + \gamma_a t_b, & \hat{t}_{hb \mid ga} &= \gamma_b t_b.
\end{align}

By
\eqref{eqn:4-block-split-lengths} we have that
$\hat{\mu}_s \geq \mu_s$ for each cut edge.
In fact, $\hat{\mu}_s = \mu_s$ is a correct length
estimate for the zipped-up child edge of the hybrid node and for the cut edge
opposite to the hybrid node.
For the other cut edges, $\hat{\mu}_s$ is an overestimate of $\mu_s$.
For example, for the network in Fig.~\ref{4-split-network} (right)
where
$h$ is the hybrid node, then $\hat{\mu}_a = \mu_a + \gamma_b t_a \geq \mu_a$.
We do not know which cut edge length is correctly represented, however.
Similarly, $\hat{t}_{hb \mid ga}$ and $\hat{t}_{ha \mid gb}$ are
underestimates of the length of \emph{tree} edges in the cycle,
although we do not know which edges in the cycle are tree or hybrid edges.

\smallskip
If a 4-sunlet has a polytomy, its canonical split network can be defined
(and Theorem~\ref{thm:4-sunlet-identify} can be applied)
after resolving the polytomy with an extra edge of length 0.
As Fig.~\ref{fig:4cycle} shows, networks with polytomies adjacent
to a 4-cycle may have the same average distances as networks without
polytomies.

\begin{figure}[h]
  \centering
  \includegraphics[scale=1.5]{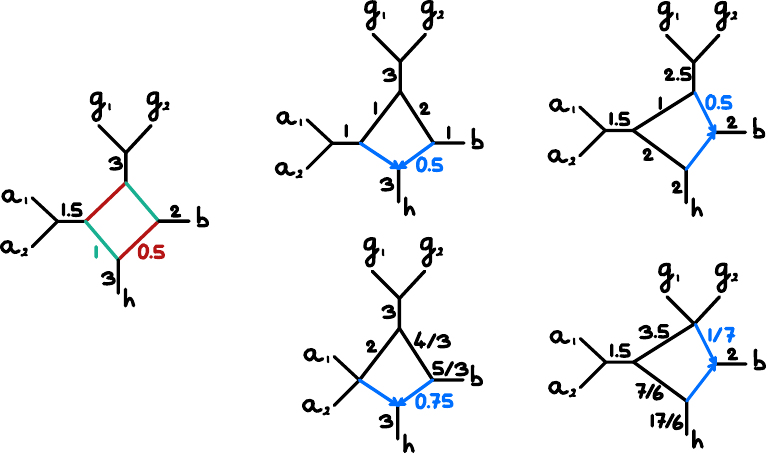}
  \caption{Canonical split network representation (left)
  of 4 zipped-up semidirected networks with identical average distances,
  in which $h$ (middle) or $b$ (right) is of hybrid origin.
  Two networks (bottom) have a polytomy adjacent to the 4-cycle.
  In the split network, pairs of parallel edges of identical color
  represent a single split and share the same length (split weight).
  Hybrid edges (arrows) have length $0$ and
  inheritance $\gamma$ shown in blue for one of them.
  Numbers in black indicate edge lengths.
  The split network shows the 6 composite parameters identifiable
  from distances, pertaining to the 4-cycle. Zipped-up semidirected networks
  have 7 associated parameters.
  }
  \label{fig:4cycle}
\end{figure}

\subsection{$k$-sunlet for $k \geq 5$}

With $5$ or more nodes in the cycle, we can identify the topology,
branch lengths, and hybridization parameters of the zipped-up version
of the sunlet.

\begin{theorem}[$k$-sunlet identifiability, $k\geq 5$]
  \label{thm:5-sunlet-identify}
  Let $N$ and $N_0$ be semidirected networks with identical leaf set
  $\{u_0,\ldots, u_{n-1}\}$ and internal tree edges having positive lengths, such that
  $N$ is a $k$-sunlet and $N_0$ is a $k_0$-sunlet with $k_0 \geq 5$.  If
  $N$ and $N_0$ have identical average distances, then the zipped-up
  versions of $N$ and $N_0$ are identical.
\end{theorem}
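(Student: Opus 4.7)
The plan is to read off the zipped-up $k$-sunlet from $d_{N_0}$ in four stages, using the tree-of-blobs identifiability of Theorem~\ref{thm:level-1-blockcut-tree} and the 4-sunlet analysis of Section~\ref{sec:4-sunlet}. Label the cycle of $N_0$ as $v_0,v_1,\dots,v_{k_0-1}$ with hybrid node $h=v_0$ and (zipped) hybrid edges $v_1h$ and $v_{k_0-1}h$; write $L_m=\ell(v_mv_{m+1})$ for the internal cycle tree-edge lengths, $1\le m\le k_0-2$; and let $T_i$ denote the leaves attached (after the canonical refinement of polytomies) at $v_i$.

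First I would identify $k=k_0$ together with the groups $\{T_i\}$. Since $k_0\ge 5$ and every cycle node carries at least one leaf, the unique non-trivial blob of $N_0$ has degree $k_0\ge 5$, so Theorem~\ref{thm:level-1-blockcut-tree} applies: the distance split tree of $d_{N_0}$ equals $\bt(N_0^R)$, whose unique non-leaf node has degree $k_0$ and is adjacent to $k_0$ subtrees that partition $X$ into $\{T_0,\dots,T_{k_0-1}\}$. Since $d_N=d_{N_0}$, the same tree equals $\bt(N^R)$, forcing $N$ to have central blob of degree $k_0$ with the same $\{T_i\}$. The small cases $k\in\{2,3,4\}$ are excluded by a mismatch in the degree of the central node (and for $k\le 3$ also because Proposition~\ref{prop:3-cycle} would render $d_N$ a tree metric, which $d_{N_0}$ cannot be by Example~\ref{ex:4pc-4sunlet} applied to any four cycle representatives including $v_0$). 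Next I would single out the hybrid group $T_0$: any up-down path between two tips not in $T_0$ must avoid $h$, since traversing $h$ would create a v-structure at $h$, so $d$ restricted to $X\setminus T_0$ is a tree metric. Conversely, for $i\neq 0$ the induced subnetwork $N_{X\setminus T_i}$ is a $(k_0-1)$-sunlet with cycle size $\ge 4$; choosing one representative from $T_0$ and three from distinct $T_j$ ($j\neq 0,i$) yields an induced 4-sunlet whose two cycle tree edges are strictly positive (each a nonempty sum of the positive $L_m$), and Example~\ref{ex:4pc-4sunlet} shows the 4-point condition fails on that quartet. Thus $T_0$ is characterized uniquely as the only group whose removal reduces $d$ to a tree metric.

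With $T_0$ known, I would identify the circular ordering together with the cycle and pendant lengths on $X\setminus T_0$. The restricted metric is a tree metric on a caterpillar with spine $v_1v_2\cdots v_{k_0-1}$, pendant edges of length $\mu_x=\ell(v_ix)$ attached to $v_i$ for each $x\in T_i$, $i\neq 0$, and internal edge lengths $L_1,\dots,L_{k_0-2}$; the Splits-Equivalence Theorem \cite{charles03_phylog} reconstructs this tree uniquely, pinning down the circular ordering (up to the sunlet's natural reversal symmetry), all $\mu_x$ for $x\notin T_0$, and all $L_m$. Finally I would recover the hybrid-side pendant lengths and the inheritance parameter from cross-group distances. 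For $x_0\in T_0$ and $x_j\in T_j$ with $j\neq 0$,
\[
d(x_0,x_j)=\mu_{x_0}+\mu_{x_j}+\gamma_1\sum_{m=1}^{j-1}L_m+\gamma_2\sum_{m=j}^{k_0-2}L_m,\qquad \gamma_1+\gamma_2=1.
\]
Summing this identity at $j=1$ and $j=k_0-1$ collapses the mixture to $S=L_1+\cdots+L_{k_0-2}>0$ and yields $\mu_{x_0}=\tfrac12\bigl(d(x_0,x_1)+d(x_0,x_{k_0-1})-\mu_{x_1}-\mu_{x_{k_0-1}}-S\bigr)$; the $j=1$ equation then delivers $\gamma_2=(d(x_0,x_1)-\mu_{x_0}-\mu_{x_1})/S$. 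All zipped-up edge parameters are functions of $d_{N_0}$, so the zipped-up versions of $N$ and $N_0$ must coincide.

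The main obstacle is the second half of stage one: the claim that restricting to $X\setminus T_i$ for $i\neq 0$ still violates the 4-point condition must be reduced to a 4-taxon induced subnetwork, and one must verify that both cycle tree edges of the resulting induced 4-sunlet remain strictly positive. This is precisely where the positivity assumption on internal tree edges of $N_0$ is essential, since the cycle tree-edge lengths in the induced subnetwork are nonempty sums of the original $L_m$. The remaining algebraic steps are then routine.
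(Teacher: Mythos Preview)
Your overall strategy mirrors the paper's: use the distance split tree to extract the sibling groups, single out the hybrid group as the only one whose removal yields a tree metric, read off the caterpillar on the complement, and then recover the hybrid-side parameters. The gap is in your third stage, and it propagates into the fourth.

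You assert that the caterpillar on $X\setminus T_0$ has spine $v_1v_2\cdots v_{k_0-1}$ and therefore determines the full circular ordering together with all $L_m$ and all $\mu_x$. This is false whenever $|T_1|=1$ (or $|T_{k_0-1}|=1$), in particular in the binary case. After deleting $T_0$, the node $v_1$ is left with only its single pendant leaf and the edge to $v_2$; it is suppressed, and the caterpillar acquires a cherry at that end (see Fig.~\ref{fig:k-sunlet-tree-part}, middle). From the caterpillar alone you cannot tell which leaf of that cherry sits at $v_1$ (adjacent to a hybrid parent) and which at $v_2$, so the circular ordering is \emph{not} pinned down. Correspondingly, you only recover the sums $\mu_{u_1}+L_1$ and $\mu_{u_{k_0-1}}+L_{k_0-2}$, not the individual quantities, and hence not $S$. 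Your formula for $\mu_{x_0}$ happens to survive because only the combination $\mu_{x_1}+\mu_{x_{k_0-1}}+S$ enters it; but your formula $\gamma_2=(d(x_0,x_1)-\mu_{x_0}-\mu_{x_1})/S$ collapses, since it requires $\mu_{x_1}$ and $S$ separately, neither of which you have.

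The paper closes exactly this gap with an additional step you omitted: it applies Theorem~\ref{thm:4-sunlet-metric} to the induced $4$-sunlets on $\{u_0,u_1,u_2,u_{k_0-1}\}$ and $\{u_0,u_1,u_{k_0-2},u_{k_0-1}\}$, whose circular orderings (determined by \eqref{eq:4sunlet-4-point}) resolve which leaf in each end-cherry is adjacent to a hybrid parent. Only then does it recover the remaining parameters, by reducing to the $5$-sunlet on $\{u_0,u_1,u_2,u_{k_0-2},u_{k_0-1}\}$ and extracting $\gamma_1$ from quantities like $\gamma_1(v_1v_2+v_2v_3)-\gamma_1\cdot v_1v_2$ divided by the known $v_2v_3$. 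Your argument needs an analogous device before the final algebra can go through.
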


\begin{proof}
We first we show that $N$ and $N_0$
must have the same topology, and then
the same branch lengths and hybridization parameter.
In $N_0$, let the hybrid node be $v_0$,
and let the other internal nodes be $v_1, \ldots, v_{k-1}$ such that
$v_{i-1}$ and $v_{i}$ are neighbors (as in Fig.~\ref{fig:k-sunlet}).
Let $C_i$ be the set of leaves adjacent to $v_i$ in $N_0$.
If there are no polytomies, then each $C_i$ is reduced to a single leaf $u_i$.

\begin{figure}[h]
  \centering
  \includegraphics[scale=1.5]{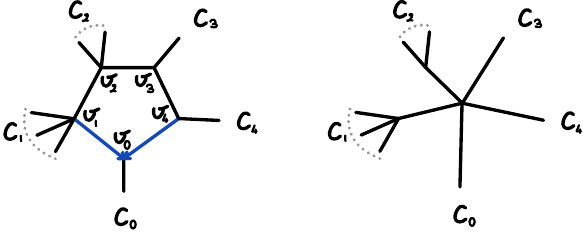}
  \caption{Left: $5$-sunlet $N$ on $8$ taxa. Its tree of blobs is a star.
  $C_i$ is the sibling group at $v_i$ ($i=0,\dots,4$).
  Right: distance split tree constructed from $d_N$.
  It is a refinement of the tree of blobs, but identifies the polytomies in the sunlet.}
  \label{fig:k-sunlet}
\end{figure}

By Lemma~\ref{lem:polytomy-split}, each non-trivial split in the
distance split tree of $N$ is of the form $C  \mid  \bar C$, where $C$ is a
set of all the sister leaves that are adjacent to the same cycle node
in $N$, and $\bar C = X\setminus C$.  The same holds for $N_0$.

Since $N$ and $N_0$ have identical pairwise distances, they have the
same distance split tree, and consequently identical sets of sister
leaves (polytomies).  In particular, we must have that $k=k_0$.  So
without loss of generality, we choose a single representative leaf
from each $C_i$ for the remainder of the proof.
In other words, we may
assume that both $N$ and $N_0$ have no polytomies
and $k=n$ (as in Fig.~\ref{fig:k-sunlet-tree-part}).

\paragraph{Identifying the circular ordering of leaves.}
First, we claim that the leaf of hybrid origin $u_0$ is the only leaf $u$ such that:
\begin{equation}\label{hybridproperty}
\forall x,y,z, \{u,x,y,z\} \mbox{ does \emph{not} satisfy the 4-point condition.}
\end{equation}
Indeed, if $u=u_0$, then $\{u,x,y,z\}$ induces in $N_0$ a 4-sunlet in which
both tree edges in the cycle have positive length, so $\{u,x,y,z\}$ does not
satisfy the 4-point condition by Theorem~\ref{thm:4-sunlet-metric}.
If $u=u_i$ for $i>0$, then we can
choose $3$ other leaves $u_j$, $u_k$, $u_l$ different from $u_0$, because $k_0\geq 5$.
The induced subnetwork is then a tree, so the 4-point condition holds,
such that $u_i$ does not satisfy \eqref{hybridproperty} for $i\geq 0$.
Therefore, the leaf of hybrid origin must be the same in $N$ as in $N_0$: $u_0$.

Next, we consider the subnetwork of $N_0$ induced by the leaves other
than $u_0$.  (For sake of brevity, in what follows in this subsection,
all the subnetworks have degree-$2$ nodes suppressed.)  This
subnetwork has no reticulation, it is binary and its internal branch
lengths are positive, so it is equal to its tree of blobs and
its distance split tree. Therefore, the subnetwork of $N$ induced by the
leaves other than $u_0$ has the same tree topology. This tree must be
a caterpillar (Fig.~\ref{fig:k-sunlet-tree-part}, middle) with two
cherries: $\{u_1,u_2\}$ and $\{u_{k-2},u_{k-1}\}$ and internal nodes
that correspond to $v_2,\ldots,v_{k-2}$.  Its topology determines the
ordering of the other leaves.  In other words, the ordering
of  $u_3,\ldots,u_{k-3}$ must be identical in $N$ and in
$N_0$. We can also match the internal nodes $v_3,\ldots,v_{k-3}$ in
$N_0$ to internal nodes in $N$.  What remains to be identified is
which of $\{u_1,u_2\}$ and which of $\{u_{k-2},u_{k-1}\}$ is adjacent
to either parent of the hybrid node in $N$.
For this, consider the subnetworks from $N$ and $N_0$ induced by
$\{u_2, u_1, u_0, u_{k-1}\}$.
By \eqref{eq:4sunlet-4-point} in Theorem~\ref{thm:4-sunlet-metric},
the average distances on $\{u_0, u_1,u_2,u_{k-1}\}$ determine the
circular ordering of these 4 taxa, such that $u_1$ must be adjacent
to a hybrid parent in $N$, as it is in $N_0$.
Similarly, the average distances on $\{u_0, u_1,u_{k-2},u_{k-1}\}$
determine the circular ordering of these 4 taxa such that
$u_{k-1}$ must be adjacent to a hybrid parent in $N$, like in $N_0$.
This finishes the proof that the circular ordering of leaves is identical
in $N$ and in $N_0$.

\begin{figure}[h]
  \centering
  \includegraphics[scale=1.5]{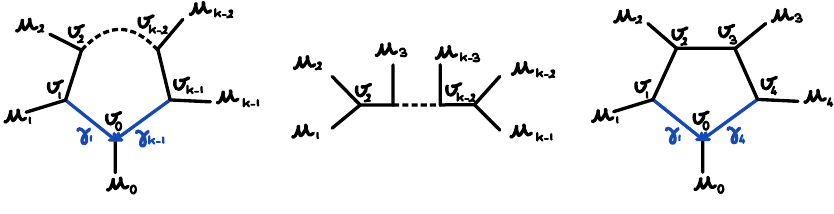}
  \caption{Left: binary $k$-sunlet.
  Middle: after excluding $u_0$, the subnetwork is a tree.
  Right: subnetwork on $\{u_0, u_1, u_2, u_{k-2}, u_{k-1}\}$, a sunlet with $k=5$.}
  \label{fig:k-sunlet-tree-part}
\end{figure}

\paragraph{Identifying branch lengths and hybridization parameters.}
By considering distances between $u_1, \ldots, u_{k-1}$, we can
determine the lengths of all the edges of the caterpillar tree $N_{\{u_1\ldots,u_{k-1}\}}$
(Fig.~\ref{fig:k-sunlet-tree-part}, middle). In particular, we get that the
following edges have the same length in $N$ as in $N_0$:
$v_i v_{i+1}$ for $2\leq i\leq k-3$ 
and $v_i u_i$ for $2\leq i\leq k-2$. 
The parameters that remain to be identified are in the subnetwork
induced by $\{u_0, u_1, u_2, u_{k-2}, u_{k-1}\}$.
Therefore, we may assume that $k=5$, as we do below
(Fig.~\ref{fig:k-sunlet-tree-part}, right).

For brevity in this paragraph, for an edge $uv$ we also write its length as $uv$.
From the tree
induced by $\{u_1,\ldots,u_4\}$, we have the lengths $u_2v_2$,
$u_3v_3$, $v_2v_3$, $v_3v_4 + u_4v_4 $, and
$u_1v_1 + v_1v_2$.
From the subnetwork induced by $\{u_0,u_1,u_2,u_4\}$, by
Theorem~\ref{thm:4-sunlet-metric} we can identify
$l_h = \gamma_1\cdot v_1v_0 + \gamma_4\cdot v_4v_0 + u_0v_0$,
which is the length of $v_0u_0$ after unzipping.
We can also identify
$\gamma_1\cdot v_2v_1$ and $\gamma_4(v_2v_3+v_3v_4)$.
From the subnetwork on $\{u_0,u_1,u_3,u_4\}$, we also get
$\gamma_1(v_1v_2+v_2v_3)$.
Hence we can identify $\gamma_1$ as
$\big(\gamma_1(v_1v_2+v_2v_3) - \gamma_1\cdot v_1v_2\big) / v_2v_3$
from pairwise distances.
All other parameters in the unzipped version of $N$ are also identifiable, using:
$v_1v_2 = \gamma_1 \cdot v_1v_2 / \gamma_1$,
$\gamma_4 = 1 - \gamma_1$,
$v_3v_4=\big(\gamma_4(v_2v_3+v_3v_4)\big)/\gamma_4 - v_2v_3$,
and as a result $u_1v_1$ and $u_4v_4$.

\end{proof}

\section{Identifying level-1 networks}
\label{sec:level-1}

While a degree-3 blob is not detectable, a $4$-cycle in a
level-$1$ network corresponds to a polytomy in the tree of blobs.
Its hybrid node and its zipped-up version is unidentifiable,
but the canonical split network of a 4-sunlet is identifiable,
by Theorem~\ref{thm:4-sunlet-identify}.
To prove Theorem~\ref{thm:level1-4up},
we first define mixed networks formally.

\subsection{Mixed network representation}

In the mixed representation of a semi-directed level-1 network,
the cycles of size $5$ or greater are unchanged. The $4$-cycles,
which are only partially identifiable, are replaced by split networks,
extending the split network representation of $4$-sunlets from
section~\ref{sec:4-sunlet} with canonical edge lengths given by
\eqref{eqn:canonical-edge-lengths}.

\begin{definition}[mixed network]
A \emph{mixed network} is a semidirected graph where undirected edges
are partitioned into two sets: tree edges $E_T$ and split edges $E_S$;
and where $E_S$ is itself partitioned into a set of classes.
When the graph is embedded in a Euclidean space, split edges
within the same class are represented as parallel segments.
A \emph{metric} $(\ell, \gamma)$ on a mixed network $M$ is such that
$\ell: E \to \RR_{\geq 0}$ assigns the same length to all edges in the same
class of split edges; and $\gamma: E \to [0, 1]$ assigns $\gamma(e) = 1$
if $e$ is undirected and $\gamma(e) \in (0, 1)$ if $e$ is directed.
\end{definition}

\begin{definition}[mixed network representation of a level-1 network]
  \label{def:mixedrep}
  Let $N$ be a level-$1$ semidirected network with no
  $2$- or $3$-cycles.  The \emph{mixed network representation}
  $N^*$ of $N$ is the mixed network obtained as follows:
  \begin{enumerate}
  \item In each $4$-cycle,
  \com{ the $4$ edges in the cycle are set to be undirected
  split edges categorized into $2$ classes.
  Each class is made of a pair of opposite edges:
  one tree edge and one hybrid edge in $N$.
  Adjacent tree edges are added and edge lengths are defined as shown in
  Fig.~\ref{fig:4-cycle-split-network}. Note that the hybrid edges in $N$
  that are part of a $4$-cycle become split edges with $\gamma=1$.
  }
  the subgraph on the left of Fig.~\ref{fig:4-cycle-split-network} is excised
  and replaced with that on the right.
  \item Suppress any degree-$2$ node.
  \end{enumerate}
  In $N^*$, $4$-cycles consist of split edges and are called \emph{split blobs}.
\end{definition}

\begin{figure}[h]
  \centering
  \includegraphics[scale=1.5]{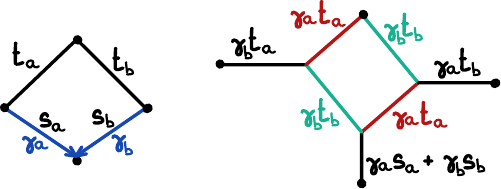}
  \caption{Mixed network: representation of a $4$-cycle in a
  semidirected network $N$ (left) to form a split blob (right).
  Edges in the cycle are converted to undirected split edges
  with $\gamma=1$, categorized in two classes depicted by colors.
  Adjacent tree edges are added.
  If $N$ is zipped up, then $s_a=s_b=0$ and the lower tree edge
  (adjacent to the hybrid node in $N$) is not needed.}
  \label{fig:4-cycle-split-network}
\end{figure}

\noindent
In a mixed network $M$, a \emph{mixed up-down path} between two nodes $a,b$
is a path $p = u_0 u_1\dots u_n$ between $u_0 = a$ and $u_n = b$ in $U(M)$
such that:
\begin{enumerate}
  \item $p$ has no v-structure, that is, no segment $u_{i-1}u_iu_{i+1}$
  such that $(u_{i-1}u_i)$ and $(u_{i+1}u_i)$ are directed edges in $M$;
  \item if a segment $u_i \dots u_j$ consists solely of split edges,
    then the segment is a shortest path between $u_i$ and $u_j$ in
    $U(M)$.
\end{enumerate}
Given a metric $(\ell, \gamma)$ on $M$, the length of $p$
is $\ell(p) = \sum_{e \in p} \ell(e)$, and the probability of $p$ is
$\gamma(p) = \prod_{e \in p} \gamma(e)$.
A \emph{split segment} is a path that consists solely of split edges.
Two split segments $S_1, S_2$ are \emph{equivalent} if
they have the same endpoints and $\ell(S_1) = \ell(S_2)$.
Note that a split segment $S$ must have $\gamma(S) = 1$.
Two mixed up-down paths $p$ and $q$ are \emph{equivalent} if one can obtain $q$ from
$p$ by replacing some split segments of $p$ by equivalent split segments.

\begin{definition}[average distance in a mixed network]
  The \emph{average distance} between two nodes $u, v$ in a mixed
  network $M$ is the weighted average length of mixed up-down paths,
  up to equivalence, between $u$ and $v$:
  \[d_M(u, v) = \sum_{p \in P_{uv}} \gamma(p)\ell(p)\] where
  $P_{uv}$ is a set of mixed up-down paths between $u$ and $v$,
  containing exactly one representative from each equivalence class.
  This distance is well-defined
  because equivalent paths have the same lengths and probabilities.
\end{definition}

\noindent
Importantly, average distances are preserved by the mixed representation
of a level-1 network:
\begin{theorem}\label{thm:mixedrepdistances}
  Let $N$ be a level-$1$ semidirected network on taxon set $X$,
  and $N^*$ the mixed network representation of $N$.  Then for any
  $x, y \in X$,
  \[d_N(x, y) = d_{N^*}(x, y).\]
\end{theorem}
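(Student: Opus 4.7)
The plan is to reduce the global statement to a local claim about a single 4-cycle and then apply a mixed-network analogue of the swap lemma (Lemma~\ref{lem:swap}). Since $N$ is level-$1$ with no 2- or 3-cycles, its 4-cycles are pairwise disjoint blobs, so the transformation in Definition~\ref{def:mixedrep} can be carried out one 4-cycle at a time. By induction on the number of 4-cycles in $N$, it suffices to prove the theorem when $N^*$ is obtained from $N$ by replacing a single 4-cycle $C$ with its split blob. Fix such a $C$ with cycle nodes arranged so that $u_h$ is the hybrid node, $u_a$ and $u_b$ are its parents, and $u_g$ is opposite $u_h$; let $a,g,h,b$ be the respective neighbors of these cycle nodes outside $C$. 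Define $A$ to be the hybrid-closed subgraph of $N$ consisting of $C$ together with the four edges from the cycle nodes to $\{a,g,h,b\}$, so $\partial A = \{a,g,h,b\}$; and define $A^*$ analogously in $N^*$, namely the split blob together with the four tree edges from its corners to $\{a,g,h,b\}$, with lengths given by the canonical formula \eqref{eqn:canonical-edge-lengths}. Outside $A$ and $A^*$, the two networks are identical.

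The key step is to verify the two boundary conditions needed to apply the swap argument. First, $\gamma_A(x,y)=1=\gamma_{A^*}(x,y)$ for every pair $x,y \in \partial A$: in $A$, the 4-sunlet has exactly one up-down path between $x$ and $y$ on each of its two displayed trees, and these probabilities sum to 1; in $A^*$, every edge has $\gamma = 1$ by definition of the mixed representation. Second, $d_N(x,y \mid A) = d_{N^*}(x,y \mid A^*)$ for every $x,y \in \partial A$. This is the crucial numerical content and follows from the canonical definition \eqref{eqn:canonical-edge-lengths}: the mixed distance between two boundary nodes of $A^*$ equals the canonical pendant edge lengths plus the shortest-path length through the parallelogram, and a direct algebraic check (using \eqref{eqn:4-block-split-lengths} and summing over the two displayed trees in the 4-sunlet) shows that this equals the corresponding 4-sunlet average distance. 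Since $\gamma_A \equiv \gamma_{A^*} \equiv 1$ on $\partial A$, this is simply the statement that the canonical split network reproduces the six pairwise distances of the 4-sunlet.

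The final step is to adapt the path-decomposition argument in the proof of Lemma~\ref{lem:swap} to mixed networks. For tips $x,y \in X$, each up-down path from $x$ to $y$ in $N$ decomposes into segments that alternate between $A$ and its complement $E_B$; similarly each equivalence class of mixed up-down paths in $N^*$ decomposes into segments alternating between $A^*$ and $E_B$, where an $A^*$-segment entering at $a_i$ and exiting at $a_i'$ has well-defined length equal to the shortest-path length and probability $1$. Conditioning on the choice of hybrid edges in $E_B$ (which is common to $N$ and $N^*$) via Proposition~\ref{prop:expectation-distance}, we may reduce to the case where $E_B$ has no hybrid edges, and formula \eqref{eqn:cond-dist} gives
\begin{equation*}
  d_N(x,y) - d_{N^*}(x,y) = \sum_{k \geq 0} \sum_{\bm a \in \partial A^{2k}} \Bigl(\prod_{i=1}^{k} \gamma_A(a_i, a_i')\Bigr) \sum_{i=1}^{k} \bigl( d_N(a_i, a_i' \mid A) - d_{N^*}(a_i, a_i' \mid A^*) \bigr),
\end{equation*}
which vanishes by the boundary identities above.

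The main obstacle is formalizing the path-decomposition for mixed up-down paths: one must show that equivalence classes of mixed up-down paths between $x$ and $y$ in $N^*$ are in bijection with tuples $(k, \bm a)$ together with choices of $A^*$-segments (up to split-segment equivalence) and $E_B$-segments, and that under this bijection the length-times-probability contribution of a class equals the length-times-probability contribution of the corresponding structure. The split-segment equivalence relation in the definition of mixed up-down paths is tailored precisely for this: multiple shortest paths through the parallelogram are identified into a single class whose contribution is $1 \cdot (\hat t_{ha\mid gb} + \hat t_{hb\mid ga})$, matching the weighted sum over the two $\gamma$-distinct up-down paths in the 4-sunlet of $N$. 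Once this bijection is set up, the argument is a routine adaptation of the proof of Lemma~\ref{lem:swap}.
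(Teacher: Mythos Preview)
Your approach is correct but takes a different route from the paper. Both proofs proceed by induction, replacing one semidirected 4-cycle at a time in an intermediate mixed network; the difference is in how the single-cycle step is handled. The paper exploits the level-1 structure directly via the tree of blobs: since the 4-cycle $C$ is itself a blob, any (mixed) up-down path from $x$ to $y$ either avoids $C$ entirely or enters and exits $C$ exactly once through two specific cut edges at nodes $u,v$. This yields the simple additive decomposition $d(x,y)=d(x,u)+d(u,v)+d(v,y)$, established first on displayed split networks and then lifted to the mixed network via Proposition~\ref{prop:mixed-udp}; the outer terms are unaffected by the swap and the middle term is preserved by the canonical split construction. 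Your swap-lemma adaptation reaches the same conclusion but carries the full $\sum_{k\geq 0}\sum_{\bm a}$ machinery, which is unnecessary here: because $A$ is a blob plus pendant cut edges, every path has $k\in\{0,1\}$ and your formula collapses to exactly the paper's three-term sum. Two minor points to tighten: the induction should be phrased over intermediate mixed networks (after one replacement you no longer have a semidirected $N$), and correspondingly the conditioning step in $E_B$ needs the mixed-network analogue Proposition~\ref{prop:mixed-udp} rather than Proposition~\ref{prop:expectation-distance}, since $E_B$ may already contain split blobs from earlier replacements. Neither is a real obstacle, and you flag the required bookkeeping in your final paragraph.
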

\noindent
The proof, in appendix~\ref{sec:mixedrep-proofs}, first shows that average distances in
mixed networks can be interpreted as the expected shortest path length
over ``displayed split networks''.

\subsection{Identifying the mixed representation of level-1 networks}

We now have the tools to prove Theorem~\ref{thm:level1-4up}.

\begin{proof}[Proof of Theorem~\ref{thm:level1-4up}]
  For $i=1,2$, let $N_i$ be a zipped-up level-1 semidirected network on $X$ with
  internal tree edges of positive lengths, and $N_i^R$ the
  refinement described in Theorem~\ref{thm:level-1-blockcut-tree}.
  Also let $N_i^{R*}$ be the mixed representation of $N_i^R$, and $N_i^*$
  the mixed representation of $N_i$.
  Assume that $d_{N_1}=d_{N_2}=d$. By
  Theorem~\ref{thm:level-1-blockcut-tree}, $N_1^R$ and $N_2^R$ have the same
  tree of blobs $T$, so their blobs and cut edges are in bijection.

  Let $b$ be a node of degree $k\geq 4$ in $T$, and
  $\{e_1,\ldots,e_k\}$ be the cut edges incident to $b$ in $T$.
  Let $B_i$ be the corresponding cycle in $N_i^R$ ($i=1,2$), of length $k\geq 4$.
  Removing $e_j$ disconnects $T$ into two components.
  We select a leaf $x_j$ from the component that does not contain
  $b$ and use distances on $\{x_1,\ldots,x_k\}$.
  If $k\geq 5$, $B_1$ and $B_2$ have the same topology and edge parameters
  by Theorem~\ref{thm:5-sunlet-identify}.
  If $k=4$, the split cycle representing $B_1$ (in $N_1^{R*}$) and
  $B_2$ (in $N_2^{R*}$) have the same topology and canonical length of split
  edges, by Theorem~\ref{thm:4-sunlet-identify}.
  So $N_1^{R*}$ and $N_2^{R*}$ have the same topology (referred
  below as $N^{R*}$) and same parameters
  for edges within a blob.

  \medskip
  Next, we need to prove that cut edges have the same length in
  $N_1^{R*}$ and $N_2^{R*}$. These edges are also cut
  edges in $N_i^R$ ($i=1,2$) and $T$.
  Let $e=uv$ be a cut edge and $B(u), B(v)$ the blobs that
  $u$ and $v$ belong to.
  We can select leaves as follows.
  If $B(u)=\{u\}$ (resp. $B(v)=\{v\}$) is an internal tree node that is not in
  any cycle, let $e_1$ and $e_2$ (resp. $e_3$ and $e_4$) be the two
  cut edges incident to $u$ (resp. $v$) besides $e$.
  If $u$ (resp. $v$) is part of a cycle, let $e_1$ and $e_2$
  (resp. $e_3$ and $e_4$) be the cut edges incident to the two nodes
  adjacent to $u$ (resp. $v$) in that cycle.
  We then choose a leaf $x_j$, $j=1,2$ (resp. $j=3,4$)
  in the connected component disjoint from $B(u)$ (resp. $B(v)$)
  when $e_j$ is removed from $T$
  (see Fig.~\ref{fig:mixed-subnetwork}).
  Finally, if $u$ (resp. $v$) is a leaf, we set $x_1=x_2=u$ (resp. $x_3=x_4=v$).
  Note that $\{x_1,x_2,x_3,x_4\}$ contains at least 3 distinct leaves because
  $N^{R*}$ has no degree-2 blob.
  If $u$ and $v$ do not have an outgoing hybrid edge in $N^{R*}$,
  then the length of $e$ must be the same in $N_1^{R*}$ and $N_2^{R*}$
  and equal to
  \[\ell(e)= t :=\frac{1}{2}\left(d(x_1,x_3)+d(x_2,x_4) - d(x_1,x_2)-d(x_3,x_4)\right)\]
  because $N_i^{R*}$ is unzipped and because this equality holds
  on a tree and
  for external edges of 
  canonical degree-$4$ split sunlets.
  If $u$ (resp. $v$) has an outgoing edge in $N^{R*}$, then
  $B(u)$ (resp. $B(v)$) is a $k$-cycle with $k\geq 5$, $u$ (resp. $v$)
  is adjacent to its hybrid node,
  and $B(u)$ (resp. $B(v)$) is reduced to a 3-cycle in the subnetwork
  induced by $\{x_1,\dots,x_4\}$. By Proposition~\ref{prop:3-cycle}, shrinking
  this 3-cycle makes $u$ (resp. $v$) of degree 2, incident to $e$ and to a new edge
  of length $t_u$ (resp. $t_v$). This new edge length is known because
  parameters are known for all cycle edges in $B(u)$ (resp. $B(v)$).
  We can then identify $\ell(e)$ by subtracting $t_u$ (and/or $t_v$) from $t$.

  At this point we have that $N_1^{R*}=N_2^{R*}$.  By contracting the tree edges
  of length $0$ we get that $N_1^* = N_2^*$. This
  finishes the proof of Theorem~\ref{thm:level1-4up}.
\end{proof}

\begin{figure}[h]
  \centering
  \includegraphics[scale=1.5]{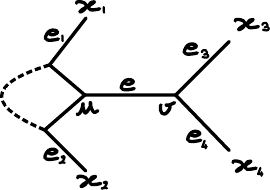} 
  \caption{Taxon sampling to cover a cut edge $e = uv$.
  In this example, the blob containing $v$ is $\{v\}$ and the blob containing $u$
  is non-trivial.
  The subnetwork on $\{x_1,x_2,x_3,x_4\}$ includes $e$ and
  both $u$ and $v$ as degree-3 nodes.}
  \label{fig:mixed-subnetwork}
\end{figure}

\backmatter
\bmhead{Acknowledgments}
We are grateful to Sebastien Roch for insightful discussions on this work
and John Fogg for detailed feedback.
We thank 3 reviewers, whose comments greatly improved the paper.
This work was supported in part by the National Science Foundation,
(DMS 1902892 
and DMS 2023239) 
and by a H. I. Romnes faculty fellowship (to C.A.) provided by the
University of Wisconsin-Madison Office of the Vice Chancellor for Research and Graduate Education
with funding from the Wisconsin Alumni Research Foundation.

\begin{appendices}

\section{Proofs of technical results}

\subsection{Up-down paths and average distances}
\label{sec:path-dist-proofs}

\begin{proof}[Proof of Proposition~\ref{prop:up-down-path}]
Proposition~\ref{prop:up-down-path} states the equivalence
of up-down-path Definition~\ref{def:up-down-path-rooted} from \cite{Bordewich_2018},
and Definition~\ref{def:up-down-path} for rooted networks.
  Let $N^+$ be a rooted network, and let $p = u_0 \dots u_n$ in $N^+$
  satisfy Definition~\ref{def:up-down-path-rooted}.
  Then clearly there is no v-structure in $p$, that is, there is no segment
  $u_{i-1}u_iu_{i+1}$ such that $u_{i-1}u_i$ and $u_{i+1}u_i$ are both
  edges in $N^+$, and $p$ satisfies Definition~\ref{def:up-down-path}.
  Next, let $p = u_0 \dots u_n$ satisfy
  Definition~\ref{def:up-down-path}. There are two cases:
  \begin{itemize}
  \item If $u_0u_1$ is an edge in $N^+$, then since the direction cannot
    reverse during the path, we have that for all $i$, $u_iu_{i+1}$ is an
    edge in $N^+$, and $p$ satisfies Definition~\ref{def:up-down-path-rooted}.
  \item If $u_1u_0$ is an edge in $N^+$, then we can look for the smallest index
    $j$ such that $u_ju_{j+1}$ is an edge in $N^+$.  If there is none,
    then $p$ is a directed path from $u_n$ to $u_0$.
    If there is such $j$, then by the same argument as before, for all $i\geq j$,
    $u_iu_{i+1}$ is an edge in $N^+$.
    Either way, $p$ satisfies Definition~\ref{def:up-down-path-rooted}.
  \end{itemize}
\end{proof}

\subsection{Proof that the zipped-up network is unique}
\label{sec:zip-proofs}

In this section we prove that a metric semidirected network $N$ has a
unique zipped-up version $N^*$ where all the hybrid edges have length
$0$, that can be obtained from $N$ by a series of ``zipping
operations''.

First we shall restrict ourselves to networks with all hybrid nodes
having a single child edge (i.e.\ tree edge or outgoing hybrid
edge). For a network that does not satisfy the requirement, we can
apply step 1 of Definition~\ref{def:zip} (shown in Fig.~\ref{fig:hybrid-zipper})
and work with the resulting network instead.
By Proposition~\ref{prop:zipped-up},
such modifications do not change the distances between existing
nodes.

\begin{definition}
  Let $N$ be a metric semidirected network in which hybrid nodes have
  a single child edge.  Let $h$ be a hybrid node, $t$ be the length of
  its child edge and $a_i, i = 1, \dots, n$ be the lengths of
  its $n$ incoming hybrid edges.
  A \emph{zipping operation} at $h$ is a modification of
  $t, a_1, \dots, a_n$ such that
  $t$ and all $a_i$ remain non-negative, and
  $t+\sum_{i = 1}^n\gamma_i a_i$
  stays constant. Two networks are \emph{zipping-equivalent} if one
  can be obtained from the other through a series of zipping operations.

  \com{
  A \emph{hybrid tree} at hybrid node $h$ is a connected subgraph of
  $N$ that consists of hybrid nodes $u$ such that there is a directed
  path from $u$ to $h$ made of hybrid edges only, and the hybrid edges
  in the aforementioned paths.  We call $h$ the \emph{base} of the
  hybrid tree.
  }
  The \emph{hybrid funnel} $F(h)$ based at a hybrid node $h$ is the
  maximal connected subgraph that consists of directed paths
  into $h$ made of hybrid edges only,
  plus $h$'s child edge (see Fig.~\ref{fig:hybridfunnel}).
  The \emph{height} of a hybrid
  funnel is the number of edges on its longest path ending at $h$.
\end{definition}

\begin{figure}[h]
  \centering
  \includegraphics[scale=1.3]{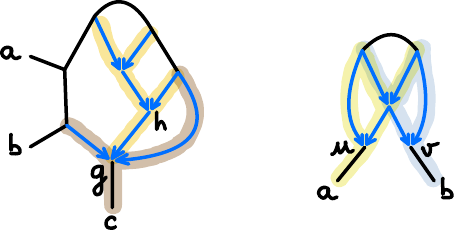}
  \caption{
  Left: the hybrid funnel $F(h)$, highlighted in orange, is not maximal.
  $F(g)$ is maximal. It contains $F(h)$ and all highlighted edges.
  Right: $F(u)$ and $F(v)$ are both maximal
  (highlighted in yellow and blue respectively). They are not disjoint, as may
  occur when a hybrid node has more than one child.
  Zipping operations are performed after refining the network
  so that all hybrid nodes have a unique child.
  }
  \label{fig:hybridfunnel}
\end{figure}

\begin{lemma}
  In a network in which hybrid nodes have
  a single child,
  distinct maximal hybrid funnels have disjoint edge sets.
\end{lemma}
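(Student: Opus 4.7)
The plan is to classify each hybrid node according to the type of its unique outgoing edge, show that maximality of $F(h)$ corresponds to a single such class, and then use uniqueness of downward hybrid paths to route each hybrid edge to one maximal funnel. Call a hybrid node $h$ \emph{terminal} if its single child is not a hybrid node (equivalently, $h$'s child edge is a tree edge). First I would show that $F(h)$ is maximal if and only if $h$ is terminal: if $h$'s child is a hybrid node $h'$, then the child edge $(h, h')$ is itself a hybrid edge, so every hybrid path into $h$ extends along $(h, h')$ to a hybrid path into $h'$. Consequently every edge of $F(h)$, including $h$'s child edge $(h, h')$, lies in $F(h')$, while $h'$'s own child edge lies in $F(h') \setminus F(h)$, giving $F(h) \subsetneq F(h')$. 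Conversely, if $h$ is terminal, its child edge is a tree edge, which cannot appear on any hybrid path, and cannot be the child edge of any $g \neq h$ since child edges are determined by their parent; hence $F(h) \not\subseteq F(g)$ for any $g \neq h$.

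Next, for each hybrid edge $e = (u, v)$, I would define $\phi(e)$ by starting at the hybrid node $v$ and iteratively following the unique outgoing edge as long as it is a hybrid edge. The procedure terminates in finitely many steps because the network is a DAG on finitely many nodes, and the final node $\phi(e)$ is a terminal hybrid node. The key observation is that a hybrid edge $e$ lies in $F(h)$ for terminal $h$ if and only if $h = \phi(e)$: by the single-child hypothesis, each hybrid node on the downward trajectory from $v$ has exactly one outgoing edge, so the path is unique, and $e$ lies on a hybrid path into $h$ exactly when this unique path reaches $h$.

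Finally, assume distinct terminal nodes $h_1 \neq h_2$ share an edge $e \in F(h_1) \cap F(h_2)$. If $e$ is a hybrid edge, then $\phi(e)$ equals both $h_1$ and $h_2$, a contradiction. If $e$ is a tree edge, then $e$ cannot lie on any hybrid path and must therefore be the child edge of both $h_1$ and $h_2$; but a tree edge has a unique parent, so again $h_1 = h_2$, contradiction. The only real technical input is the uniqueness of the downward hybrid trajectory from each hybrid node, which is where the single-child hypothesis is used in an essential way; this is exactly the obstruction illustrated in the right panel of Fig.~\ref{fig:hybridfunnel}, where a hybrid node with two children causes two maximal funnels to overlap.
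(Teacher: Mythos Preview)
Your proof is correct and follows essentially the same approach as the paper's. Both first observe that a maximal funnel is based at a hybrid node whose child edge is a tree edge (your ``terminal'' nodes), then handle tree edges and hybrid edges separately using the single-child hypothesis to force uniqueness of the downward hybrid path. Your introduction of the map $\phi$ packages the hybrid-edge case a bit more explicitly than the paper's ad hoc containment argument, but the underlying idea is identical.
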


\begin{proof}
  First, for a maximal hybrid funnel, the child of its base $h$
  must be a tree edge, because if it is a hybrid edge, we can get a
  larger hybrid funnel at its child node.
  Let $A$ and $B$ be different maximal hybrid funnels.
  Then their bases $u$ and $v$ (respectively) must be distinct: $u \neq v$.
  If $A$ and $B$ have the same tree edge,
  then $u$ and $v$ must be incident by
  their common child edge,
  and after rooting the network, $u$
  or $v$ would have three incoming edges but no outgoing edges, which is
  impossible.  Therefore, $A$ and $B$ cannot share a tree edge.

  Suppose $A, B$ share a hybrid edge $ab$.  Then there is a path from
  $b$ to $u$ consisting of hybrid edges only.  The same holds for $v$.
  Because each hybrid node has a unique child edge, one of the two
  paths must be contained in the other.  Consequently $u$ or $v$
  has a hybrid edge as a child edge, which is a contradiction.
\end{proof}

\begin{lemma}
  \label{lem:zip-invariant}
  Let $A$ be a maximal hybrid funnel in a metric semidirected
  network $N$.  Then zipping operations in $N$ do not change
  \begin{equation}\label{eq:proof-zip-unique}
    L(h) = t_h + \sum_{p \in P(h)} \gamma(p)\ell(p)
  \end{equation}
  where $h$ is the base of the funnel, $t_h$ is
  the length of its child edge and $P(h)$ is the set of maximal
  directed paths in $A$ that end at $h$.
\end{lemma}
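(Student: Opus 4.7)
The plan is to give $L(h)$ a probabilistic interpretation. Consider the random backward walk $W$ that starts at $h$ and, at each hybrid node $u$ it currently visits, picks one incoming hybrid edge $e=vu$ with probability $\gamma(e)$ and moves backward to $v$, stopping when a non-hybrid node is reached. Because $A$ is a maximal funnel, every hybrid node $u\in A$ has all of its incoming hybrid edges in $A$, and $\sum_{e\in E_H(u)}\gamma(e)=1$; by induction on the height of $A$ this yields $\sum_{p\in P(h)}\gamma(p)=1$, so $W$ defines a probability distribution on $P(h)$ with $\PP(W=p)=\gamma(p)$. Writing $q(v):=\PP[W\text{ visits }v]$, we obtain
\[
L(h)=t_h+\EE[\ell(W)]=t_h+\sum_{e=uv}\gamma(e)\,q(v)\,\ell(e),
\]
where the sum is over hybrid edges $e=uv$ in $A$.

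Next I would verify invariance under a single zipping operation, by casework on the hybrid node $h'$ at which it is performed. If $h'\notin A$, no edge length of $A$ changes and the claim is immediate. If $h'=h$, then $q(h)=1$ gives
\[
\Delta L(h)=\Delta t_h+\sum_i\gamma_i\,\Delta a_i=0
\]
by the zipping constraint at $h$. If $h'\in A\setminus\{h\}$, let $h'w$ denote its unique child edge in $A$, necessarily a hybrid edge (by the same maximality argument used to show that distinct maximal funnels have disjoint edge sets), with length $s_{h'}$ and weight $\gamma(h'w)$; and let $v_jh'$ denote its incoming hybrid edges, with weights $\delta_j$ and lengths $b_j$. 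Unwinding the recursion for $q$ gives $q(h')=q(w)\,\gamma(h'w)$, so
\[
\Delta L(h)=q(w)\,\gamma(h'w)\Bigl(\Delta s_{h'}+\sum_j\delta_j\,\Delta b_j\Bigr)=0
\]
by the zipping constraint at $h'$. Because each elementary zipping operation preserves $L(h)$, so does any finite sequence of them.

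The main subtlety is bookkeeping rather than depth: I must use the maximality of $A$ to ensure that, for each hybrid node $h'\in A$, both its unique child edge and all of its incoming hybrid edges lie in $A$, so that the recursion for $q$ closes inside $A$; and I must check that zipping at any $h'\notin A$ truly leaves every edge of $A$ untouched, which follows from the disjoint-edge-sets lemma proved just before. Once these structural points are in hand, the computation is essentially one line: the zipping invariant $\Delta(t+\sum\gamma a)=0$ at $h'$ factors through $q(w)\gamma(h'w)$ and annihilates $\Delta L(h)$.
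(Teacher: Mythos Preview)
Your proof is correct and takes a genuinely different route from the paper's. The paper argues by induction on the height of the funnel: it writes $L(h)=t_h+\sum_{i\le m}\gamma_i L(u_i)+\sum_{i>m}\gamma_i t_i$ in terms of the sub-funnels $F(u_i)$ at the hybrid parents of $h$, invokes the inductive hypothesis when zipping at a node in some $F(u_i)$, and handles zipping at $h$ itself by splitting off the term $t_h+\sum_i\gamma_i t_i$. Your approach instead linearizes $L(h)$ over edges via the occupation probabilities $q(v)$ of the backward random walk, so that a zipping operation at any $h'\in A$ contributes a single factor $q(w)\gamma(h'w)$ times the zipping invariant $\Delta s_{h'}+\sum_j\delta_j\Delta b_j$. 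This is arguably cleaner: once the edge-decomposition $L(h)=t_h+\sum_{e=uv}\gamma(e)q(v)\ell(e)$ is in hand, all three cases collapse to one line, and no induction is needed beyond the easy check that $\sum_{p\in P(h)}\gamma(p)=1$. The paper's recursive decomposition, by contrast, mirrors the structure of the funnel more explicitly and avoids introducing the auxiliary quantities $q(v)$.

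Two small remarks on your bookkeeping. For the case $h'\notin A$, your appeal to the disjoint-edge-sets lemma is valid but slightly indirect: the edges modified at $h'$ lie in $F(h')$, which is contained in the unique maximal funnel through $h'$, and that maximal funnel is distinct from $A$ since $h'\notin A$; hence the modified edges miss $A$. For the case $h'\in A\setminus\{h\}$, the fact that the child edge of $h'$ is a hybrid edge follows directly from the definition of the funnel (since $h'$ lies on a hybrid-edge path to $h$ and has a unique child edge), rather than from the maximality argument you cite; but the conclusion is correct either way.
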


\begin{proof}
  First, a zipping operation at a node outside of a \emph{maximal}
  hybrid funnel $F(h)$ does not modify any of the edges of the funnel,
  and so does not change $L(h)$.
  To finish the proof, we show the following claim:
  for any hybrid node $h$, zipping
  operations at nodes in the funnel $F(h)$ (which may not be
  maximal) do not change $L(h)$.

  To show this claim, we use induction on the height of $F(h)$.  When
  the height is $1$, the claim reduces to Proposition~\ref{prop:zipped-up}.
  Suppose the claim is true for heights up to $k$, and $F(h)$ is of
  height $k + 1$.  Let $u_1, \dots, u_n$ be the parent nodes of $h$.
  We may assume that $u_i$ is a hybrid node for $i\leq m$ and a tree node for $i>m$,
  for some $m\leq n$.
  Then $F(u_i), i = 1, \dots, m$
  are of height $\leq k$. Note that
  \[ L(h) = t_h + \sum_{i=1}^{m} \gamma_i L(u_i)
   + \sum_{i=m+1}^n\gamma_i t_i \]
  where $\gamma_i = \gamma(u_i h)$ and $t_i = \ell(u_i h)$.
  Let $u$ be a hybrid node in $F(h)$. If $u\neq h$, then $u$ must be
  in some $F(u_j)$, and a zipping operation at $u$ does not affect
  $L(h)$ because $t_h$ and all $L(u_i)$ stay unchanged (by induction).
  To consider a zipping operation at $u=h$, we rewrite
  \[ L(h) = t_h + \sum_{i=1}^n\gamma_it_i +
    \sum_{i=1}^{m} \gamma_i (L(u_i) - t_i).\]
  The
  last sum is a function of edges in the hybrid funnels at $u_1,\ldots,u_{m}$,
  and is hence unchanged by the zipping operation at $h$.  The term
  $t_h + \sum_{i=1}^n\gamma_it_i$ is constant by
  Proposition~\ref{prop:zipped-up}, which completes the proof.
\end{proof}

\begin{theorem}
  \label{thm:zip-unique}
  The zipped-up version of a network, as defined in Definition~\ref{def:zip},
  exists and is unique.
\end{theorem}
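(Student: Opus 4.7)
The plan is to prove existence constructively using a topological ordering of zippings, and then prove uniqueness via Lemma~\ref{lem:zip-invariant}.

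For existence, I would first apply step~1 of Definition~\ref{def:zip} to every hybrid node with more than one child. This preliminary operation is deterministic and order-independent: at each multi-child hybrid it inserts a single tree node $w$ below $h$ with a length-$0$ tree edge $hw$, without affecting any other hybrid node. The intermediate network has all hybrid nodes with a unique child edge, so the machinery of maximal hybrid funnels applies, and these funnels are edge-disjoint and finite in number. For each maximal funnel $F(h)$, I would process its hybrid nodes by applying step~2 exactly once to each, in a topological order with ancestors before descendants, so that the base $h$ is processed last. Zipping a non-base node $u \in F(h)$ sets its parent hybrid edges to $0$ and pushes their weighted contribution into $u$'s unique child edge (which again lies in $F(h)$). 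Subsequent zippings of $u$'s descendants modify only their own parent and child edges; in particular, they overwrite $u$'s child edge with $0$ but never re-introduce positive length to $u$'s parent edges. When $h$ is zipped last, its parent hybrid edges become $0$ and the residual mass is pushed onto its child tree edge. After processing each funnel in this way, every hybrid edge of $N$ has length~$0$, so the resulting network is zipped up, and only finitely many zipping operations were required.

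For uniqueness, suppose $N^{(1)}$ and $N^{(2)}$ are two zipped-up versions of $N$ obtained by any two permitted sequences of zippings. Step~1 contributes the same topological modification at each multi-child hybrid regardless of ordering, so both sequences pass through an intermediate network with the same underlying graph. Every step~2 zipping modifies only edges incident to the node being zipped; these edges lie entirely inside the corresponding maximal hybrid funnel, because the parent edges of a hybrid node are hybrid and the single child edge belongs to the same funnel. Therefore every edge not contained in any maximal hybrid funnel retains its original length, and $N^{(1)}$ and $N^{(2)}$ agree on these. Within each maximal funnel $F(h)$, both $N^{(1)}$ and $N^{(2)}$ have all hybrid edges of length $0$ by definition of zipped-up, so the only remaining variable is the length $t_h^{(i)}$ of $h$'s unique child tree edge. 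Lemma~\ref{lem:zip-invariant} asserts that $L(h) = t_h + \sum_{p \in P(h)} \gamma(p)\ell(p)$ is preserved under every zipping operation; when the funnel is fully zipped, the sum vanishes, so $t_h^{(i)} = L(h)$, where $L(h)$ is the value computed in the original $N$. Hence $t_h^{(1)} = t_h^{(2)}$, and comparing edge by edge gives $N^{(1)} = N^{(2)}$.

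The main technical subtlety will be justifying the single-pass termination claim inside each maximal funnel, namely that an ancestors-first zipping order leaves every hybrid edge of $F(h)$ at length $0$ without ever needing to re-zip an already-processed node. This boils down to the observation that a zipping at node $u$ touches only $u$'s parent edges and $u$'s unique child edge, so zipping a descendant $u'$ of $u$ can only overwrite the edge $uu'$ (with $0$) and cannot disturb $u$'s parent edges, which were already set to $0$. Once this is in place, existence follows by induction along the topological order of hybrid nodes within each funnel, and uniqueness is an immediate consequence of the invariance of $L(h)$.
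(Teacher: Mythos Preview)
Your overall strategy matches the paper's: topological ordering for existence and the invariant $L(h)$ from Lemma~\ref{lem:zip-invariant} for uniqueness. The existence argument and the single-child uniqueness argument are fine and essentially identical to the paper's.

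However, there is a genuine gap in how you handle step~1 of Definition~\ref{def:zip}. You assert that ``Step~1 contributes the same topological modification at each multi-child hybrid regardless of ordering, so both sequences pass through an intermediate network with the same underlying graph.'' This is not justified by Definition~\ref{def:zip}: a zipping-up at $h$ (including step~1) is only performed when $h$ is \emph{not already zipped up}. If a multi-child hybrid $h$ happens to have all parent edges of length $0$ throughout one sequence (for instance, if every edge in its funnel is already $0$), then step~1 is never applied to $h$ in that sequence and $h$ retains its multiple children; in another sequence step~1 might be applied. So two zipped-up versions $N^{(1)}$ and $N^{(2)}$ could a priori have different \emph{topologies}, and your argument does not rule this out. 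Relatedly, your existence construction applies step~1 to \emph{all} multi-child hybrids up front, which is not a permitted operation under Definition~\ref{def:zip} when such a hybrid is already zipped up; the network you produce may therefore have extra length-$0$ edges not present in any zipped-up version obtained via Definition~\ref{def:zip}.

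The paper closes this gap by introducing the normalization $\widetilde{M}$ (apply step~1 below \emph{every} multi-child hybrid, regardless of zip status), showing that $\widetilde{N_1'}$ and $\widetilde{N_2'}$ are both zipped-up versions of $\widetilde{N}$ and hence equal by the single-child case, and then observing that any edge actually introduced during a zipping sequence must carry positive length by~\eqref{eq:zip-up}. Thus the positivity pattern in $\widetilde{N_i'}$ determines which edges are present in $N_i'$, forcing $N_1' = N_2'$. You should add this argument (or an equivalent one showing that step~1 is applied to $h$ in every sequence if and only if $L(h)>0$ in $\widetilde{N}$) to complete the uniqueness proof.
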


\begin{proof}
  Let $N$ be a metric semidirected network.  First we show that it is
  possible to obtain a zipped-up version of $N$.
  Consider a topological ordering of the nodes in $N$ based
  on some (arbitrary) rooting of $N$.
  This is an ordering from the root to the leaves,
  such that $u$ is listed before $v$ whenever there exists a directed edge $uv$.
  Perform zipping-up operations as in Definition~\ref{def:zip}
  according to this ordering (restricted to hybrid nodes).
  We claim that the resulting network $N^*$ is zipped-up.
  Indeed, by virtue of the topological ordering,
  zipping operations at either parent of a hybrid node $h$
  must be performed before zipping up $h$,
    so the length
  of $h$'s parent edges remain $0$ after being set to $0$,
  and $N^*$ is zipped-up.

  To show that the zipped-up version of $N$ is unique, we consider
  a zipped-up version $N'$ and show that $N'$ and $N^*$
  are identical. We consider two cases.
  If all hybrid nodes of $N$ have unique child edges, then $N'$, $N^*$
  and $N$ have the same topology 
  and the same lengths for edges outside of any funnel.
  $N'$ and $N^*$ have hybrid edges of length $0$.
  To show uniqueness, it suffices to show that they 
  have identical lengths for tree child edges below hybrid nodes.
  Let $h$ be a hybrid node with a tree child edge.
  Then the funnel $F(h)$ is maximal.
  By Lemma~\ref{lem:zip-invariant},
  $L(h)$ is identical in $N'$ and in $N^*$ (and in $N$),
  because $N'$ and $N^*$ are both zipping-equivalent to $N$.
  For $N'$ and $N^*$,
  all the paths have length $0$ in the last sum of \eqref{eq:proof-zip-unique},
  such that the length $t_h$ of the child edge must equal $L(h)$
  and therefore be identical in $N'$ and $N^*$; and $N'=N^*$.

  Now we consider the case when some hybrid nodes in $N$ have
  two or more children.
  For a network $M$, let $\widetilde{M}$ denote the network obtained
  by 
  performing step 1 of Def.~\ref{def:zip} below every hybrid
  node with multiple children, even those not zipped-up.
  \com{
  so that every
  hybrid node has a unique child in $\widetilde M$.
  Then $N'$ has an extra edge below a hybrid node $h$
  if and only if $h$ has multiple children in $N$
  and the length of child edge of $h$ in $\widetilde{N'}$ is
  positive. Indeed, the edges in $\widetilde{N'}$ that
  are not in $N'$ have length $0$, 
  and any new edge introduced in $N'$ is given a positive length.
  This is because zipping up is only performed at nodes that
  are not already zipped-up, and the new edge is a tree edge
  so its length can only increase when zipping up subsequent nodes.
  }
  Then a zipped-up version $N'$ of $N$ may have a different
  topology than $\widetilde{N}$.
  Step 1 of Def.~\ref{def:zip} was performed in $N'$ only at nodes that
  were not zipped-up, so the edges in $\widetilde{N}$ missing
  from $N'$ are below hybrid nodes that are zipped-up in $N'$.
  Conversely, the edges introduced in $N'$ during zipping-up have
  positive lengths from \eqref{eq:zip-up} in step 2.
  Now let $N_1'$ and $N_2'$ be two zipped-up versions of $N$.  Then
  $\widetilde{N_i'}$ is a zipped-up version of $\widetilde{N}$, for $i=1,2$
  (zipping up at the same series of hybrid nodes as to get $N_i'$ from
  $N$). By the previous case, $\widetilde{N_1'}=\widetilde{N_2'}$.
  Since the positivity of edges in $\widetilde{N_i'}$
  determines which tree edges have been introduced in $N_i'$,
  $N_1'$ and $N_2'$ have the same topology, and then the same metric.
\end{proof}

\subsection{Mapping the tree of blobs on the network}
\label{sec:blob-tree-proofs}

\begin{proof}[Proof of Proposition~\ref{prop:net-blobtree-same-cutedges-tips}]
  Let $T = \bt(N)$.  For a node $u$ in $N$, write $B(u)$ for the blob
  that contains $u$.
  For the first bijection: let $g$ be the map from cut edges of $N$ to
  the edges of $T$ such that $g(uv)$ is the edge $(B(u), B(v))$ in $T$.
  It is injective because if two edges $e_1$ and $e_2$ get mapped to
  the same edge $(B(u), B(v))$ in $T$, then one could find a cycle in
  $U(N)$ that contains both $e_1$ and $e_2$, contradicting that they
  are cut edges.  It is surjective because for an edge $e$ that
  connects two $2$-edge-connected components $U$ and $V$, the removal
  of $e$ must disconnect $U$ and $V$: Otherwise $U$ and $V$ would be
  the same $2$-edge-connected component.  Therefore $e$ must be a cut
  edge, and we have $g(e) = (U, V)$.

  For the second bijection: recall that we require all tips of a rooted
  network to have in-degree $1$, so the same holds for the tips of a
  semidirected network $N$.
  The blob that contains a tip $x$ is therefore a trivial blob, $\{x\}$.
  Let $f: V_L(N) \to V(T)$ be the
  map such that $f(x)$ is the trivial blob $\{x\}$.  Clearly
  $f$ is injective.  We only need to show that these trivial blobs are
  all the leaves of $T$, i.e.\ we did not introduce new leaves in $T$
  that do not correspond to tips of $N$.
  Suppose that there is a leaf blob $B$ in $T$ that is
  not of the form $\{x\}$ for some tip $x$ in $N$.
  $B$ may not contain any leaf $x$, because $\{x\}$ is a blob itself,
  so we would have $B=\{x\}$, a contradiction.
  Let $N^+$ be a rooted LSA network that induces $N$,
  obtained from rerooting $N$ at a node.
  Let $e = (u, v)$ be the cut edge incident to $B$ in $N^+$.
  Then we must have that $u\in B$, because otherwise $v\in B$ and
  $B$ would contain any descendant leaf of $e$ (found by following
  any directed path starting at $e$
  until we reach a node of out-degree 0).  Consequently, the root
  $\rho(N^+)$ must be in $B$ (a leaf in $T$),
  for otherwise the edge $e$ would be
  directed the other way.
  Now any path from the root to a leaf must go through
  $(u,v)$. Therefore $v$ lies on every path from the root to any leaf.
  Since $v\neq \rho(N^+),$ $N^+$ is not an LSA network, a
  contradiction.
\end{proof}

\subsection{Proof that mixed representations preserve distances}
\label{sec:mixedrep-proofs}

To prove Theorem~\ref{thm:mixedrepdistances}, we extend
Definition~\ref{def:displayedtree}.

\begin{definition}[displayed split network]
  Let $N^*$ be the mixed network representation of a
  level-$1$ semidirected network $N$.
  For hybrid node $h\in N^*$, let $E_H(h)$ be its parent hybrid edges.
  Let $G$ be the graph obtained by keeping one hybrid edge
  $e\in E_H(h)$ and deleting the remaining edge(s) in $E_H(h)$,
  for each hybrid node $h\in N^*$.
  Then $G$ is a split network \cite[p.240]{steel16_phylogeny} 
  and is called a \emph{displayed split network}.
  The distribution on displayed split networks generated
  by $N^*$ is the distribution obtained by keeping $e\in E_H(h)$
  with probability $\gamma(e)$, independently across $h$.
\end{definition}
\noindent
Note that $G$ is a split network because it has no directed (hybrid) edges;
its topology is of level 1;
and its blobs are degree-4 cycles, each with 2 pairs of split edges.

\begin{proposition}
\label{prop:mixed-udp}
  Let $N^*$ be the mixed network representation of a level-$1$
  semidirected network $N$.
  For two nodes $u, v$ and split network $G$, let $Q_{uv}(G)$
  be the set of paths between $u$ and $v$ in $G$ that have shortest length.
  Then
  each equivalence class of mixed up-down paths between
  $u$ and $v$ in $N^*$ is equal to $Q_{uv}(G)$ for some
  split network $G$ displayed in $N^*$.
  Furthermore,
  for a given mixed up-down path $p$ between $u$ and $v$ in $N^*$,
  \[\PP(p \in Q_{uv}(G)) = \gamma(p)\]
  where $G$ is a random split network displayed in $N^*$.
  Consequently,
  \begin{equation}\label{eq:mixedrep_averagedistance}
  d_{N^*}(u, v) = \EE\ell(Q_{uv}(G))
  \end{equation}
  where the expectation is taken over a
  random displayed split network $G$ in $N^*$.
\end{proposition}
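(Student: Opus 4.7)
The plan is to mirror the strategy of Proposition~\ref{prop:expectation-distance}, replacing ``displayed tree'' with ``displayed split network'' and ``unique path between $u$ and $v$'' with ``set of shortest paths $Q_{uv}(G)$''. First I would record the structure of $N^*$: by Definition~\ref{def:mixedrep} and the level-$1$ hypothesis, the non-trivial blobs of $N^*$ split into \emph{split blobs} (the 4-cycles of split edges) and \emph{directed blobs} (the $k$-cycles with $k\geq 5$ inherited from $N$, each containing exactly one hybrid node with two incoming hybrid edges), and distinct blobs are separated by cut-edges. A displayed split network $G$ makes an independent choice of one incoming hybrid edge at each hybrid node of $N^*$ with probability $\gamma$; the split blobs are left alone. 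After these choices every former hybrid node has a unique incoming edge, so $G$ has no v-structure and its only cycles are the split blobs, confirming that $G$ is a split network.

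Second, I would fix a mixed up-down path $p$ from $u$ to $v$ in $N^*$, let $H(p)$ be the set of directed hybrid edges it traverses, and observe that the no-v-structure condition forces $H(p)$ to contain at most one of the two parent hybrid edges at every hybrid node. Hence $\PP(H(p) \subseteq E(G)) = \prod_{e \in H(p)} \gamma(e) = \gamma(p)$, since $\gamma \equiv 1$ on tree and split edges.

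The crux is the third step: on the event that $G$ keeps $H(p)$, I claim $Q_{uv}(G) = [p]$, the equivalence class of $p$ in $N^*$. For this I would use that $G$ with its split blobs excised is a forest --- each directed blob has been opened into a simple path at its hybrid node, so no cycle remains outside the split blobs. Therefore any walk from $u$ to $v$ in $G$ is uniquely determined outside the split blobs and must coincide with $p$ there, entering and leaving each split blob $B$ at the same pair of boundary nodes as $p$. Within $B$, the canonical edge lengths from \eqref{eqn:canonical-edge-lengths} make the two pairs of parallel split edges share lengths, so the two routes around $B$ between any fixed pair of boundary nodes have equal total length. Consequently, shortest $u$-to-$v$ paths in $G$ are exactly those obtained from $p$ by replacing split segments with equivalent ones, which is precisely $[p]$. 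Conversely, any $q \in Q_{uv}(G)$ lifts back to a mixed up-down path in $N^*$ (no v-structure is possible in $G$ since each hybrid node has only one parent edge kept, and each split segment of $q$ is a shortest path in $U(N^*)$ by definition of shortest path in $G$), giving $[p] = Q_{uv}(G)$.

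Finally, by linearity of expectation, together with the fact that $\ell$ is constant on each equivalence class and that the events $\{Q_{uv}(G) = [p_i]\}$ form a partition as $[p_i]$ ranges over equivalence classes of mixed up-down paths, I would conclude $\EE\,\ell(Q_{uv}(G)) = \sum_i \PP(Q_{uv}(G) = [p_i])\,\ell(p_i) = \sum_i \gamma(p_i)\,\ell(p_i)$, which equals $d_{N^*}(u,v)$ by definition. The main obstacle will be the structural claim in step three --- that removing split blobs from $G$ leaves a forest, and that inside a split blob all shortest paths between two fixed boundary nodes form a single equivalence class of split segments. This requires leveraging the level-$1$ hypothesis together with the exclusion of 2- and 3-cycles built into Definition~\ref{def:mixedrep}, so that each split blob is genuinely a 4-cycle with two pairs of equal-length parallel edges.
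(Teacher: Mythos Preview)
Your approach is essentially the same as the paper's, and the overall architecture is sound. There is, however, one incorrect intermediate claim in your third step that you should repair.

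You write that within a split blob $B$, ``the two pairs of parallel split edges share lengths, so the two routes around $B$ between any fixed pair of boundary nodes have equal total length.'' This is false when the entry and exit nodes are \emph{adjacent} in the 4-cycle. If the two split-edge classes have lengths $\alpha$ and $\beta$, then between adjacent nodes one route is a single edge of length $\alpha$ while the other traverses three edges for total length $\alpha+2\beta$; these differ whenever $\beta>0$. Equality of the two routes holds only for \emph{opposite} entry/exit nodes.

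Fortunately your conclusion $Q_{uv}(G)=[p]$ does not actually rely on this claim. The correct argument---which the paper makes explicit---is that condition~2 in the definition of a mixed up-down path already forces each split segment of $p$ to be a shortest path in $U(N^*)$ between its endpoints. Hence, once you know that any path in $G$ from $u$ to $v$ must agree with $p$ outside the split blobs and enter/exit each split blob at the same pair of nodes as $p$, the shortest such paths are exactly those using a shortest split segment at each blob; these segments automatically have the same endpoints and length as $p$'s, so they are equivalent to $p$'s segments by definition. No appeal to equality of the two routes is needed. With this correction, your proof matches the paper's.
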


\begin{proof}
  For the first claim, note that any shortest path between $u$ and $v$ in
  $G$ is a mixed up-down path.
  \com{
  : for the definition of mixed up-down
  paths, condition 1 is satisfied because of in $M$ for each hybrid
  node only one incoming hybrid edge is kept; condition 2 is satisfied
  because any segment of the shortest path is a shortest path as well.}
  We only need to show that all shortest paths between $u$ and $v$ in $G$
  are equivalent.
  Let $q$ and $q'$ be two of them.
  Note that any tree edge or hybrid edge from $N^*$ that was
  retained in $G$ is a cut edge in $G$.  Since $N$ is of level 1,
  each blob of $G$ corresponds to a split blob (4-cycle) from $N^*$.
  Since the unique path from $u$ to $v$ in the tree of blobs of
  $G$ contains the paths in $Q_{uv}(G)$, we get that
  $q$ and $q'$ must pass through the same
  tree edges, hybrid edges and split blobs; and in the same order.
  They may differ in the edges that they contain from
  each split blob. Any such difference corresponds to replacing
  one split segment by an equivalent split segment (of shortest length).
  Therefore $q$ and $q'$ are equivalent.

  For the second claim, let $p$ be a mixed up-down path from
  $u$ to $v$ in $N^*$.
  It suffices to show that $p \in Q_{uv}(G)$ if
  and only if all the hybrid edges present in $p$ are kept in $G$.
  The ``only if'' part is trivial.
  For the ``if'' part, assume that the hybrid edges from $p$ are
  retained in $G$, such that $p$ is in $G$. We want to
  show that it is of shortest length.
  Let $q \in Q_{uv}(G)$. Both $p$ and $q$ go from $u$ to $v$,
  so by the same argument as above, $p$ and $q$ must pass through
  the same tree edges, hybrid edges, and split blobs;
  and in the same order. Since $p$ is a mixed up-down path,
  its split segments must be of shortest length. Therefore
  $p$ must traverse each split blob through a split segment of length
  no larger than that of $q$.
  Consequently $p\in Q_{uv}(G)$.
\end{proof}

\begin{proof}[Proof of Theorem~\ref{thm:mixedrepdistances}]
  By induction, it suffices to show that in a mixed network,
  replacing a single $4$-cycle by its corresponding split cycle does not
  change the average distances.
  Let $M$ be a mixed network on taxon set $X$ with one or more
  semidirected $4$-cycles, and let $M'$ be the mixed network obtained from
  $M$ by replacing one
  $4$-sunlet subgraph $C$ in $M$ by the corresponding split subgraph $C'$.
  Since $U(M)=U(M')$, $M$ and $M'$ have the same tree of blobs $T$
  and $C$ and $C'$ correspond to the same node $b$ in $T$.
  Let $x, y \in X$ be two tips of $M$ and $p$ be the path from $x$ to $y$ in $T$.

  If $p$ does not go through $b$, then
  the sets of mixed up-down paths between $x, y$ are identical in $M$ and in $M'$
  (they do not intersect $C$ or $C'$ respectively), therefore $d_M(x,y) = d_{M'}(x,y)$.

  If instead $p$ goes through $b$,
  then all mixed up-down paths from $x$ to $y$ in $M$ (resp. $M'$) intersect
  $C$ (resp. $C'$), and they must all go through the same cut edges
  $e_1$ and $e_2$ adjacent to $C$ (resp. $C'$).
  Let $u$ and $v$ be the nodes in $C$ adjacent to $e_1$ and $e_2$.
  We can identify $u$ and $v$ with nodes in $C'$ if we omit
  step 2 in Def.~\ref{def:mixedrep}.
  We can do so without loss of generality because the suppression of degree-2 nodes
  does not affect distances.
  Then, all mixed up-down paths in $M$ (resp. $M'$) from $x$ to $y$
  go from $x$ to $u$ along edges that do not belong in $C$ (resp. $C'$),
  then from $u$ to $v$ within $C$ (resp. $C'$), and then from $v$ to $y$
  through edges not in $C$ (resp. $C'$).
  The same applies to each split network displayed in $M$ (resp. $M'$).
  Since, in addition, $d_G$ is defined as the length of the shortest path
  on a split network, the following holds when the graph $G$ is
  any split network displayed in $M$ or $M'$:
  \begin{equation}\label{eq:proofmixedrepdistances}
  d_G(x, y) = d_G(x, u) + d_G(u, v) + d_G(v, y)\,.
  \end{equation}
  By \eqref{eq:mixedrep_averagedistance} in
  Proposition~\ref{prop:mixed-udp}, \eqref{eq:proofmixedrepdistances}
  also holds when $G=M$ and $G=M'$.
  Because $M'$ differs from $M$ only in $C$, which is replaced by
  $C'$, we have $d_M(x, u) = d_{M'}(x, u)$ and $d_M(v, y) = d_{M'}(v, y)$.
  We also have $d_M(u, v) = d_{M'}(u, v)$ because $C'$ is the mixed
  representation of the $4$-sunlet $C$.
  Therefore $d_M(x, y) = d_{M'}(x, y)$.
\end{proof}

\end{appendices}

\bibliography{lib}

\end{document}